\documentclass[a4paper,USenglish,cleveref,autoref]{lipics-v2021}

\pdfoutput=1 
\hideLIPIcs  

\usepackage[sort]{cite} 
\usepackage[dvipsnames]{xcolor}
\definecolor{Purple}{rgb}{0.63, 0.13, 0.94}
\definecolor{DarkGreen}{rgb}{0.0, 0.39, 0.0}
\usepackage{marginnote}
\usepackage{comment}

\usepackage{bm}

\usepackage{thmtools}
\usepackage{thm-restate}

\newcommand{\Real}{\ensuremath{\mathbb{R}}}
\newcommand{\Plane}{\ensuremath{\mathbb{R}^2}}

\newcommand{\bd}{\mathop{\partial}} 
\newcommand{\cl}{\mathop{\mathrm{cl}}} 
\newcommand{\intr}{\mathop{\mathrm{int}}} 

\newcommand{\Bisector}[2]{J({#1},{#2})} 
\newcommand{\Bisectors}{\mathcal{J}} 
\newcommand{\Dominance}[2]{D({#1},{#2})} 

\newcommand{\sprec}{<}
\newcommand{\spreceq}{\leq}
\newcommand{\sequiv}{=}
\newcommand{\colprec}{\prec}
\newcommand{\colpreceq}{\preccurlyeq}
\newcommand{\colequiv}{\sim}
\newcommand{\mcolprec}{\mathrel{\overline{\succ}}}
\newcommand{\mcolpreceq}{\mathrel{\overline{\succcurlyeq}}}
\newcommand{\mcolequiv}{\mathrel{\overline{\sim}}}

\newcommand{\VD}{\ensuremath{\mathsf{VD}}} 
\newcommand{\VR}{\ensuremath{\mathrm{VR}}} 
\newcommand{\FVD}{\ensuremath{\mathsf{FVD}}}
\newcommand{\FVR}{\ensuremath{\mathrm{FVR}}}
\newcommand{\CVD}{\ensuremath{\mathsf{CVD}}}
\newcommand{\CVR}{\ensuremath{\mathrm{R}}} 
\newcommand{\FCVD}{\ensuremath{\mathsf{FCVD}}}
\newcommand{\FCVR}{\ensuremath{\mathrm{FR}}}
\newcommand{\mCVD}{\ensuremath{\overline{\mathsf{CVD}}}}
\newcommand{\mCVR}{\ensuremath{\overline{\mathrm{R}}}}
\newcommand{\HVD}{\ensuremath{\mathsf{HVD}}}
\newcommand{\HVR}{\ensuremath{\mathrm{HR}}}

\newcommand{\SumV}{V} 
\newcommand{\SumU}{U} 
\newcommand{\mSumV}{\overline{V}} 
\newcommand{\mSumU}{\overline{U}} 

\newcommand{\conf}{\mathcal{F}}

\newcommand{\khat}{\hat{k}}

\newcommand{\mBisector}[2]{\overline{J}({#1},{#2})} 
\newcommand{\mDominance}[2]{\overline{D}({#1},{#2})} 

\DeclareMathOperator*{\E}{\mathbf{E}}

\let\geq\geqslant
\let\leq\leqslant

\newenvironment{denseitems}{\list{$\bullet$}{\itemsep6pt\parsep-5pt}}{\endlist}

\newcommand{\deleted}[1]{}

\newcommand{\evanthia}[1]{}

\newcommand{\nicolau}[1]{}

\title{Abstract Color Voronoi Diagrams and Circular Sequences of Color Permutations}

\titlerunning{Abstract Color Voronoi Diagrams} 

\author{Sang Won Bae}
{Division of AI Computer Science and Engineering, Kyonggi University, Suwon, Republic of Korea}{swbae@kgu.ac.kr}
{https://orcid.org/0000-0002-8802-4247}
{
}

\author{Nicolau Oliver}
{Faculty of Informatics, Universit\`{a} della Svizzera italiana, Lugano, Switzerland}
{nicolau.oliver.burwitz@usi.ch}
{https://orcid.org/0009-0004-8901-451X}
{}

\author{Evanthia Papadopoulou}
{Faculty of Informatics, Universit\`{a} della Svizzera italiana, Lugano, Switzerland}
{evanthia.papadopoulou@usi.ch}
{https://orcid.org/0000-0003-0144-7384}
{}

\authorrunning{S.W. Bae, N. Oliver, and E. Papadopoulou} 

\Copyright{Sang Won Bae and Nicolau Oliver and Evanthia Papadopoulou} 

\ccsdesc[500]{Theory of computation~Computational Geometry} 

\keywords{%
    higher-order Voronoi diagrams,
    abstract Voronoi diagrams,
    color Voronoi diagrams,
    circular sequences,
    allowable sequences,
    generalized sites,
    simple polygons} 

\category{} 

\nolinenumbers 

\EventEditors{Philip Bille, Seth Pettie, and Sabine Storandt}
\EventNoEds{3}
\EventLongTitle{34th Annual European Symposium on Algorithms (ESA 2026)}
\EventShortTitle{ESA 2026}
\EventAcronym{ESA}
\EventYear{2026}
\EventDate{August 31--September 4, 2026}
\EventLocation{L'Aquila, Italy}
\EventLogo{}
\SeriesVolume{388}
\ArticleNo{136}

\begin{document}

\maketitle

\begin{abstract}
  Abstract Voronoi diagrams are defined in terms of a given system of planar
  bisecting curves satisfying some simple combinatorial properties. They offer a
  unifying framework for a wide range of concrete Voronoi instances on generalized
  sites and metrics. In this paper, we
  formulate higher-order abstract color Voronoi diagrams of a set $S$ of $n$
  colored abstract sites, simultaneously considering all
  concrete instances under their umbrella.
  We prove that the number of vertices in the order-$k$ abstract
  color Voronoi diagram is at most $4k(n-k)-2n$, and present an iterative
  construction algorithm. The bound directly applies to a family of $m$ disjoint
  simple polygons of total complexity $n$.
  For simple polygons the bound can further
  improve to $O(\min\{k(n-k),(m-k)^2n\})$. A critical ingredient of our proof is
  a combinatorial analysis on circular sequences of color permutations derived
  from the unbounded edges of these diagrams, which is interesting in its own right.
\end{abstract}

\newpage

\section{Introduction} \label{sec:intro}

Voronoi diagrams are versatile and influential space partitioning structures.
Given a set~$S$ of $n$~sites in $\Plane$  and an underlying distance function,
the ordinary Voronoi diagram~$\VD(S)$ partitions the plane into maximal
regions by the nearest site relation.
The order-$k$ Voronoi diagram~$\VD_k(S)$ partitions $\Plane$
into regions by the $k$~nearest sites for~$1\leq k\leq n-1$, where $\VD_1(S)=\VD(S)$.
The farthest-site Voronoi diagram~$\FVD(S)$ is equal to~$\VD_{n-1}(S)$. 
%
Sites may often be assumed to be points in the Euclidean plane, however,
generalized sites, such as disks, line
segments and polygons, under generalized metrics may  also constitute the
input sites.
See 
\cite{akl-vddt-13,obsc-st-00} for
extensive information. 

Lee~\cite{l-knnvdp-82} proved the tight
bound~$O(k(n-k))$ on the combinatorial complexity of~$\VD_k(S)$ of point sites in the Euclidean plane,
and presented an iterative algorithm that computes the diagrams order~$1$ up to~$k$
in $O(k^2n\log n)$ time.
The $O(k(n-k))$ bound on the complexity of~$\VD_k(S)$  has been extended to line segments 
under any~$L_p$ metric~\cite{pz-hovdls-16} and to abstract Voronoi diagrams \cite{bcklpz-choavd-15}.
For point sites in the $L_1/L_\infty$ metric a better bound~$O(\min\{k(n-k), (n-k)^2\})$ is
known~\cite{lpl-knnvdr-15}.
The problem of constructing the Euclidean order-$k$ Voronoi diagram~$\VD_k(S)$ for point sites~$S$
had been one of the most interesting open problems in computational geometry,
and the first optimal $O(n\log n + k(n-k))$-time algorithm
was presented recently by Chan et al.~\cite{ccz-oahovdp:usn-24},
after a series of algorithmic advances for over four
decades~\cite{agss-ltacvdcp-89,as-soriachovd-92,m-lavd-91,adms-clahovd-98,c-rshrrcltd-00,r-rrrsklc-99,ct-oda2d3dsc-16}.
For generalized sites, however, there is a notable scarcity of corresponding  results.

%


\deleted{
Lee and Drysdale~\cite{ld-gvdp-81} analyzed the $\VD(S)$
of disjoint simple polygons (treating them as sets of line segments),
also known as the \emph{medial axis transform}~\cite{b-tends-67,l-matps-82}.
Cheong et al.~\cite{cegghlln-fpvd-11} showed that the complexity of
the $\FVD(S)$ of $m$ disjoint connected polygonal sites of total complexity $n$ is $O(n)$,
and can be computed in $O(n \log^3 n)$ time. However, despite the importance of
polygonal sites in applications, the combinatorial complexity of the $\VD_k(S)$
of polygons has remained a significant open problem.
}

In \emph{color Voronoi diagrams}, 
colors are  assigned to the
sites in $S$, modeling a common property that sites of the same
color share; let~$K$ be the set of these $m \leq n$~colors.
The color assignment aggregates simple sites, such as points, segments, or disks,  into
compound ones of non-constant complexity, such as simple polygons, arc
polygons, and site clusters.
The color, a non-spatial property, facilitates the modeling of diverse 
applications, including facility 
location~\cite{ahiklmps-fcvdrp-06}, shape matching~\cite{hks-uevsia-93}, spatial
databases~\cite{chen2020}, wireless sensor networks~\cite{lsbc-bwccpapwsn-13},
nearest-neighbor classification~\cite{bremner2005},
fault detection and analysis in VLSI
networks~\cite{P11} and references therein. 
Nearest, higher-order, and farthest color Voronoi diagrams can be naturally
defined. 
%

Different variants of color Voronoi diagrams have been 
considered in the literature, such as the \emph{Hausdorff} (also called
\emph{cluster}) 
\emph{Voronoi diagram}, 
e.g.,~\cite{egs-ueplf:aa-89,p-cacmmdVLSIc-01,p-hvdpcp-04,P11,AP19},
and the \emph{farthest color Voronoi diagram},
e.g.,~\cite{hks-uevsia-93,ahiklmps-fcvdrp-06,b-tbiafcvdls-14, mpsw-fcvd:ca-25},
as motivated by different application demands, see e.g., \cite{VLSI-bookchapter,caa}.
\emph{Higher-order
  color Voronoi diagrams} 
were recently introduced  in \cite{bop-hocvdccsf-25}. 
In $\Real^d$, \emph{chromatic Delaunay mosaics} of colored point sets
have  also recently been introduced as motivated by topological data analysis~\cite{bmdes-scdm-26}.

Concrete color Voronoi diagrams are based on distance-to-color functions \cite{bop-hocvdccsf-25}: 
for each color~$a \in K$ and any point~$x\in \Plane$, let
$d_a(x) := \min_{s\in S_a} \delta_s(x)$ and $\bar{d}_a(x) := \max_{s_\in S_a} \delta_s(x)$
be the minimal and maximal distance-to-color~$a$ from~$x$,
where $\delta_s(x)$ denotes the prescribed distance to site~$s$ from~$x$, and
$S_a\subseteq S$ is the set of sites of color $a$.
For each~$1\leq k\leq m-1$,
the \emph{order-$k$ minimal color Voronoi diagram}~$\CVD_k(S)$ of colored sites~$S$
partitions~$\Plane$ into regions by $k$~nearest colors with respect to
the minimal distance-to-color functions~$\{d_a\}_{a\in K}$,
while the \emph{order-$k$ maximal color Voronoi diagram}~$\mCVD_k(S)$
partitions~$\Plane$ by $k$ farthest colors with respect to
the maximal distance-to-color functions~$\{\bar{d}_a\}_{a\in K}$.
In~\cite{bop-hocvdccsf-25} the authors
proved the tight upper bound~$4k(n-k)-2n$
on the total number of vertices in~$\CVD_k(S)$ and $\mCVD_k(S)$
for a range of well-behaved distance to site
functions, 
under a set of assumptions 
satisfied by  
point sites under convex distance functions, and presented an iterative construction algorithm.
The assumptions of~\cite{bop-hocvdccsf-25}, however, are not satisfied by
non-point sites, thus, the derived bounds do not apply in any 
setting involving sites more general than points.
In this paper we remove these assumptions generalizing upon~\cite{bop-hocvdccsf-25}.

The order-$k$ color Voronoi diagrams~$\CVD_k(S)$ and $\mCVD_k(S)$
build upon 
several known diagrams that have 
received extensive interest in the literature.
For $k=1$, the minimal diagram~$\CVD_1(S)$ is a subset of $\VD(S)$, while the
maximal diagram~$\mCVD_1(S)$ is contained in  $\FVD(S)$.
If $m=n$, that is, every site in~$S$ has a distinct color,
then $\CVD_k(S)=\VD_k(S)$ and $\mCVD_k(S) = \VD_{n-k}(S)$.
For $k=m-1$, $\CVD_{m{-}1}(S)$ is exactly the  \emph{farthest color
  Voronoi diagram}~$\FCVD(S)$, 
and $\mCVD_{m-1}(S)$ is the \emph{Hausdorff Voronoi diagram}~$\HVD(S)$.


\deleted{
\begin{itemize} 
\item The order-$1$ minimal diagram~$\CVD_1(S)$ is 
 equivalent to the nearest-site Voronoi diagram~$\VD(S)$,
 while the order-$1$ maximal diagram~$\mCVD_1(S)$ 
 is equivalent to the farthest-site Voronoi diagram~$\FVD(S)$.
 If $m=n$, that is, every site in~$S$ has a distinct color,
 then $\CVD_k(S)$ is equal to the ordinary order-$k$ Voronoi diagram~$\VD_k(S)$
 and $\mCVD_k(S) = \CVD_{n-k}(S) = \VD_{n-k}(S)$.
\item The \emph{farthest color Voronoi diagram}~$\FCVD(S)$, 
the partition
of~$\Plane$ by the farthest color based on the minimal distance-to-color 
functions~$d_i$'s. 
By definition, $\FCVD(S)$
 is equivalent to the order-$(m{-}1)$ minimal color diagram~
\item The \emph{Hausdorff Voronoi diagram}~$\HVD(S)$
 is the partition of~$\Plane$
 by the nearest color based on the maximal distance-to-color 
 functions. 
 By definition, $\HVD(S)$
 is equivalent to the order-$(m{-}1)$ maximal color diagram~$\mCVD_{m-1}(S)$.
\end{itemize}
}




\emph{Abstract Voronoi diagrams} (AVDs)~\cite{k-cavd-89} 
offer a
unifying framework for a wide range of concrete Voronoi instances.
Rather than 
sites and distance measures, AVDs are defined
in terms of bisecting curves that satisfy some simple combinatorial properties. 
Examples of concrete diagrams under the AVD umbrella include Voronoi diagrams of (non-intersecting) line segments, or
convex polygons of constant complexity, in the $L_p$ norms;
Euclidean Voronoi diagrams of (non-intersecting) disks  or
smooth convex objects;  point sites in any convex distance metric;
additively weighted points, and  power diagrams, 
with non-enclosing circles
in \emph{nice} metrics. 
Bohler et al.~\cite{bcklpz-choavd-15} proved 
the tight upper bound~$2k(n-k)+k+1-n$ on the number of faces in
the order-$k$ abstract Voronoi diagram~$\VD_k(S)$.
Efficient construction algorithms 
are also known~\cite{blpz-rdcahoavd-16,bkl-erahoavd-19}.

\subparagraph*{Contribution.}
In this paper, we formulate higher-order color Voronoi diagrams
under the AVD model, 
namely \emph{higher-order abstract color Voronoi diagrams}.
By using the abstract framework, our combinatorial results are simultaneously
applicable to all the concrete instances that fall under the AVD
umbrella, which are also interesting individually. 
The obtained upper bounds are tight and analogous to those in
\cite{bop-hocvdccsf-25}, however, they are applicable to a much wider class of problems.
This includes the higher-order Voronoi diagram of $m$ simple polygons whose structural
complexity,  other than its farthest-site counterpart~\cite{cegghlln-fpvd-11},
had been unknown. 
It also results in the first study on the Hausdorff and  farthest color Voronoi diagram  in the
abstract setting, which encompass all
the previously considered concrete cases.

We prove tight upper bounds~$O(k(n-k))$ on the combinatorial complexity of
order-$k$ abstract color Voronoi diagrams~$\CVD_k(S)$ and $\mCVD_k(S)$ for each~$1\leq k\leq m-1$.
More precisely,
we show that the number of vertices of the abstract~$\CVD_k(S)$ and
$\mCVD_k(S)$
is at most $4k(n-k)-2n$, which is tight. 
This immediately implies the worst-case  bound $O(m(n-m+1))$
on the complexity of the abstract farthest color Voronoi diagram~$\FCVD(S)$
and the abstract Hausdorff Voronoi diagram~$\HVD(S)$, which applies to all the
concrete cases under the AVD umbrella.
If $m$ is close to $n$, then our new bound improves the previous one $(O(mn))$ 
   on the complexity of the farthest color Voronoi diagram of
non-crossing line segments in the $L_p$ metric~\cite{b-tbiafcvdls-14,hks-uevsia-93}.
 For a family of $m$ disjoint simple polygons of total complexity $n$, we show
 that  the order-$k$ polygon Voronoi diagram has  complexity
 $O(\min\{k(n-k),(m-k)^2n\})$. This is derived from a more general result:  if the farthest color Voronoi
 diagram  $\FCVD(S')$ (resp. $\HVD(S')$) for any subset~$S'\subseteq S$
 has linear complexity,
 then $\CVD_{k}(S)$ (resp. $\mCVD_{k}(S)$) has complexity
 $O((m-k)^2 n)$ for  $\frac{2m}{3} \leq k\leq m-1$.
 Note that the farthest polygon Voronoi diagram has linear
 complexity~\cite{cegghlln-fpvd-11}.
 The complexity of the order-$k$ Voronoi diagram of polygons 
 had been an open problem for a long time.

 To obtain these results we  first adapt the \emph{colorful Clarkson--Shor
   technique} from~\cite{bop-hocvdccsf-25,cs-arscgII-89} to abstract Voronoi
 diagrams, expressing the combinatorial
complexity of the order-$k$  abstract color Voronoi diagram as a function
of the diagram's unbounded edges (see Lemma~\ref{lem:complexity}).
This reduces the problem into counting the unbounded edges
of the colored order-$k$ diagrams.
For $m=n$ (uncolored case) this already answers a question posed by the authors
of~\cite{bcklpz-choavd-15} on whether  the Clarkson--Shor
technique~\cite{cs-arscgII-89} could be applied in deriving the complexity of
the abstract order-$k$ diagram. 
Deriving bounds on the unbounded edges of colored order-$k$ diagrams is a
critical part of our proof, which is a
novel contribution of this paper. 
In \cite{bop-hocvdccsf-25} the involved quantities were assumed equal and these
bounds could thus be bypassed, which is valid for point sites in convex metrics,
but does not extend to  more general cases.

\deleted{
 obtaining Lemma~\ref{lem:complexity}, which  expresses the combinatorial
complexity of order-$k$  abstract color Voronoi diagrams (and all the concrete cases under
their umbrella) in terms of  functions involving  the diagram's unbounded edges.
For the  uncolored case ($m=n$) this already settles a question posed by the authors
of~\cite{bcklpz-choavd-15} on how to use  the original Clarkson--Shor
   technique~\cite{cs-arscgII-89} as stated above.
Furthermore, it  reduces the problem into to counting the unbounded edges
of the colored diagrams. 
Deriving these bounds is the 2nd critical ingredient of our proof, which is a
novel contribution of this paper described in the following.
}

 An independently interesting ingredient of our combinatorial results is
 based on a circular sequence of permutations of~\emph{colored} elements,
 which correspond to the unbounded edges of the diagrams.
 Circular sequences of permutations, sometimes called \emph{allowable sequences},
 have been useful as a purely combinatorial tool to analyze 
 several basic geometric structures 
 such as $k$-sets and order types~\cite{ag-nssfspp-86,gp-nkssnpp-84,gp-asotdcg-93}.
 We consider a colored variant of circular sequences,
 in which each element of permutations is assigned a color,
 and prove tight lower and upper bounds on the number of switches, which reflect
 the unbounded edges of order-$k$ abstract color Voronoi diagrams.

 Finally, we show how to construct both~$\CVD_k(S)$ and $\mCVD_k(S)$ in an iterative
 fashion, for  orders~$1$ to~$k$, in time $O(k^2n\log n)$.
 %
 Because $\mCVD_{k-1}(S)$ does not contain sufficient information to compute
 $\mCVD_k(S)$, we first provide a  direct divide-and-conquer
 algorithm to compute the unbounded edges of $\mCVD_{k}(S)$ in time
 $O(k^2(n-k) \log m + n \log n)$,  after which
 $\mCVD_k(S)$ can be computed from $\mCVD_{k-1}(S)$.
 A byproduct of this technique leads to an 
 $O((n-k)^2n\log n)$-time iterative algorithm to
 compute the (uncolored) order-$k$ abstract Voronoi diagram
 $\VD_k(S)$, in decreasing
 order of $k$, starting from $\FVD(S)$, which is efficient for large values of $k$.
 To the best of our knowledge, no such algorithm was known before.

 

\deleted{
The rest of the paper is organized as follows:
section~\ref{sec:review} reviews 
previous results on higher-order abstract Voronoi diagram,
introducing concepts 
used for later discussions.
The definition and basic observations on higher-order abstract color Voronoi diagrams
are 
in Section~\ref{sec:def_CVDk}.
Our combinatorial results are 
in Section~\ref{sec:complexity}, 
followed by Section~\ref{sec:SumU}, 
in which we 
study circular sequences of colored elements.
Section~\ref{sec:alg} is devoted to our construction algorithm.
Figures and omitted proofs can be found in the Appendix.
}

\deleted{
\evanthia{Here is an old paragraph regarding concrete diagrams covered by AVDs. So we can cover all these cases when the clusters are illustrated by
  the consideration of color. I guess we can say non-crossing line
  segments instead of disjoint. }
\evanthia{%
Examples of concrete diagrams that fall under the AVD
umbrella include: 
disjoint line segments and disjoint convex polygons of constant size
in the $L_p$ norms, or under the Hausdorff metric;
point-sites in any convex distance metric or the Karlsruhe metric;
additively weighted points that have non-enclosing circles; 
power diagrams with non-enclosing circles.}
}

\section{Review on higher-order abstract Voronoi diagrams}
\label{sec:review}

For any $A \subseteq \Plane$,
we denote by $\bd A$, $\intr A$, and $\cl A$
the boundary, interior, and closure of~$A$, respectively.

Let $S$ be a set of $n$ \emph{abstract sites} that define a family 
$\Bisectors = \{\Bisector{p}{q} \mid p,q\in S, p\neq q\}$ of bisecting curves.
The bisector $\Bisector{p}{q}$ of two sites $p,q\in S$ is an unbounded simple
curve, homeomorphic to a line, that divides the plane into two open domains:
the \emph{dominance region} of~$p$ over~$q$ and the dominance region of~$q$ over~$p$, 
denoted by~$\Dominance{p}{q}$ and $\Dominance{q}{p}$, respectively, see \cite{k-cavd-89}.
The dominance region~$\Dominance{p}{q}$ can be regarded as
the set of all points that are \emph{nearer} to~$p$ than to~$q$
according to a relevant proximity notion.

The \emph{nearest} and the \emph{farthest Voronoi region} of site~$p\in S$ are
respectively defined as 
\(
    \VR(p, S) := \bigcap_{q \in S \setminus \{p\}} \Dominance{p}{q}
    \quad\text{and}\quad
    \FVR(p, S) := \bigcap_{q\in S\setminus\{p\}} \Dominance{q}{p}.
\)
The \emph{nearest Voronoi diagram}~$\VD(S)$ and the \emph{farthest Voronoi diagram}~$\FVD(S)$ 
are  
the collection  of the region boundaries:
\(
	\VD(S) := \bigcup_{p\in S} \bd \VR(p, S)
	\quad\text{and}\quad
	\FVD(S) := \bigcup_{p\in S} \bd\FVR(p,S).
\)

The family of bisecting curves 
$\Bisectors$ is called \emph{admissible} if 
it satisfies the following axioms~\cite{k-cavd-89},  for every~$S'\subseteq S$:
\begin{itemize} 
\item [(A1)] Each nearest Voronoi region $\VR(p, S')$ is non-empty and pathwise connected.
\item [(A2)] Each point of the plane belongs to the closure of a nearest Voronoi region $\VR(p, S')$.
\item [(A3)] Each bisector $\Bisector{p}{q}$ is unbounded. After stereographic
  projection to the sphere, it can be completed to a closed Jordan curve through the north pole.
\item [(A4)] Any two bisectors intersect transversally, in a finite number of
  points. 
\end{itemize}


Axiom (A4) can be relaxed as shown in~\cite{kln-avdr-09} but for technical
simplicity we still require  it.
The verification of the axioms can be done with constant size ($\leq 4$)
examples, see \cite{bcklpz-choavd-15}.
Bisectors that have a site in common are called \emph{related}.
When two related bisectors $\Bisector{p}{q}$ and $\Bisector{p}{r}$ intersect, bisector
$\Bisector{q}{r}$ intersects with them at the same point(s).
If $\Bisectors$ is admissible, 
there are at most two such intersection points, which are the vertices of $\VD(\{p,q,r\})$.
%
%
Then $\VD(S)$ and $\FVD(S)$ are plane graphs of complexity
$O(n)$~\cite{k-cavd-89}, and $\FVD(S)$ is a tree~\cite{mmr-fsavd-01}.

The \emph{higher-order abstract Voronoi diagram} was first introduced in~\cite{bcklpz-choavd-15}.
For~$1\leq k\leq n-1$ and each subset~$H \subset S$ with cardinality~$k$,
the \emph{order-$k$ Voronoi region} of~$H$ 
and the \emph{order-$k$ abstract Voronoi diagram}~$\VD_k(S)$ are defined as
\[
    \VR_k(H,S) := \bigcap_{p \in H, q \in S \setminus H} \Dominance{p}{q}
    \quad\text{and}\quad
	\VD_k(S) := \bigcup_{H\subseteq S, |H|=k} \bd \VR_k(H, S).
\]
Note that
$\VD_1(S) = \VD(S)$ and $\VD_{n-1}(S) = \FVD(S)$.
%
The higher-order abstract Voronoi diagram~$\VD_k(S)$ is a plane graph  
that encompasses classic concrete cases, including 
points~\cite{l-knnvdp-82} and segments~\cite{pz-hovdls-16} in the Euclidean plane, 
from which the tight combinatorial complexity bound~$O(k(n-k))$ is also
derived~\cite{bcklpz-choavd-15}.

The dominance regions define an abstract notion of \emph{$k$-th nearest site}.
In particular, given $\Bisectors$ and a 
point $x\in\Plane$, a total order of $S$
can be defined~\cite{bcklpz-choavd-15,blpz-rdcahoavd-16}.  
For $p,q\in S$, we write
\[
p \sprec_x q~\text{iff}~p\neq q~\text{and}~x \in \Dominance{p}{q}, \;
p \sequiv_x q~\text{iff}~p = q~\text{or}~x \in \Bisector{p}{q}, \;
p \spreceq_x q~\text{iff}~p\sprec_x q~\text{or}~p \sequiv_x q.
\]
These relations can be interpreted as
$x$ being \emph{nearer} to~$p$ than to~$q$,
\emph{equidistant} to~$p$ and~$q$, and
\emph{not farther} from~$p$ than from~$q$, respectively.
The axioms imply
the transitivity of~$\sprec_x$~\cite{k-cavd-89} and its reflexive closure
$\spreceq_x$ \cite[Lemma~2]{blpz-rdcahoavd-16}, which 
define a total order on $S$ allowing us to say that $p_k$ is the
\emph{$k$-th nearest site at~$x$}.
In this sense, the order-$k$ region~$\VR_k(H, S)$ of~$H\subseteq S$
consists of all points~$x$ such that $H$ is the set of
$k$~nearest sites at~$x$.

Let $\Gamma$ be a closed simple curve in~$\Plane$ sufficiently large
such that no pair of bisecting curves in~$\Bisectors$ cross on or outside of~$\Gamma$;
furthermore, every bisecting curve intersects~$\Gamma$  transversally, exactly twice.
By this construction,  $\Gamma$ encloses every vertex of~$\VD_k(S)$, for $1\leq k\leq n-1$,
and the unbounded features of~$\VD_k(S)$ are traversed by~$\Gamma$ in the
same order as they appear at infinity.
We consider~$\Gamma$ as  \emph{the closed curve at infinity}.

An admissible  bisector system~$\Bisectors$ 
is said to be in \emph{general position}, if only  three related bisecting curves
can intersect at the same  point.
From now on, we assume that the given system~$\Bisectors$ of bisecting curves is admissible and in general position.

\subparagraph*{Circular sequences of permutations.}
\label{subsec:perm}
The authors of~\cite{bcklpz-choavd-15} proved that
the number of faces in the order-$k$ diagram~$\VD_k(S)$ is at most $2k(n{-}k){+}k{+}1{-}n$ and at least $n-k+1$.
Their combinatorial result is based on an inductive approach
extending the original method by Lee~\cite{l-knnvdp-82}, which had been extended to
segment sites in~\cite{pz-hovdls-16},
and a careful analysis on the circular sequences induced by the unbounded edges of~$\VD_k(S)$, 
for all orders~$1\leq k \leq n-1$, on the closed curve $\Gamma$ at infinity.

For each~$x\in \Gamma$, consider the permutation~$\pi=(p_1, \ldots, p_n)$ of~$S$
induced by~$\spreceq_x$.
If $x$ avoids all bisecting curves in~$\Bisectors$,
then $\spreceq_x$ induces a unique permutation~$(p_1, \ldots, p_n)$ 
such that $p_1 \sprec_x \cdots \sprec_x p_n$;
otherwise, there is a unique index~$j$ such that $x \in \Bisector{p_j}{p_{j+1}}$ and hence
\(
  p_1\sprec_x \cdots \sprec_x p_j \sequiv_x p_{j+1} \sprec_x \cdots \sprec_x p_n.
\)
The corresponding permutation of~$S$ changes by a \emph{switch} of two consecutive sites
when and only when we cross a bisector~$\Bisector{p}{q}$, while we walk along~$\Gamma$. 
As a result, we obtain a circular sequence~$\Pi(S) = (\pi_0, \ldots, \pi_{N-1}, \pi_N = \pi_0)$ 
of permutations~$\pi_i$ of sites~$S$  such that: 
\deleted{
\hypertarget{def:permP1}{\textbf{(P1)}} $\pi_i$ and $\pi_{i+1}$ for any~$i$ differ by a switch of two consecutive elements; and 
 \hypertarget{def:permP2}{\textbf{(P2)}} each pair of two elements switches exactly
 twice.
 }
 \begin{itemize} 
  \item \hypertarget{def:permP1}{(P1)} $\pi_i$ and $\pi_{i+1}$ for any~$i$ differ by a switch of two consecutive elements; and 
  \item \hypertarget{def:permP2}{(P2)} each pair of two elements switches
    exactly twice.  ($N=2\binom{n}{2}=n(n-1)$).
  \end{itemize}
  
%
In~\cite{bcklpz-choavd-15} the authors showed that
each switch between positions~$k$ and~$k+1$ in $\pi_i$ and $\pi_{i+1}$
corresponds to an unbounded edge of 
$\VD_k(S)$.
Thus, they obtained tight lower and upper bounds
on the total number of unbounded edges in the diagrams~$\VD_1(S), \ldots, \VD_k(S)$
by proving tight bounds on the number of switches in the first $k$ positions: 
at least $k(k+1)$ and at most $k(2n-k-1)$.

\section{Defining abstract color Voronoi diagrams}
\label{sec:def_CVDk}

Let $S$ be a set of $n$~sites and $\Bisectors = \{\Bisector{p}{q} \mid p,q\in S, p\neq q\}$
be an admissible bisector system 
in general position.
We assume that each site~$p\in S$ is assigned a color
from a set~$K = \{1, \ldots, m\}$ of $m\leq n$~colors.
Let $S_a \subseteq S$ be the set of sites of color~$a \in K$.

Abstract color Voronoi diagrams can be seen as Voronoi diagrams of 
\emph{colors} in~$K$, where each color $a\in K$ represents its entire color
class $S_a$.
We consider two kinds of color dominance regions:
for each pair of two distinct colors~$a,b\in K$,
define
\[
\Dominance{a}{b} := \intr \left( \bigcup_{p \in S_a} \cl \VR(p, S_a\cup S_b) \right)
\quad\text{and}\quad
\mDominance{a}{b} := \intr \left( \bigcup_{p \in S_a} \cl \FVR(p, S_a\cup S_b) \right)
\]
called the \emph{minimal} and \emph{maximal color dominance regions}
of color~$a$ over color~$b$, respectively.
See the red and blue regions in Figure~\ref{fig:dominance}.

\deleted{
The minimal color dominance region
$\Dominance{a}{b}$ contains all points that are closer to some site of color $a$
than to any site of color $b$. 
Symmetrically, the maximal color dominance region $\mDominance{a}{b}$ contains
all points that are 
farther from some site of color $a$ than from any site of color $b$. 
}

Let $\Bisector{a}{b} := \bd \Dominance{a}{b}$ and $\mBisector{a}{b} :=
\bd \mDominance{a}{b}$
be the \emph{minimal} and \emph{maximal color bisector}, respectively, between
two colors~$a,b\in K$. 
Since $\Bisector{a}{b}$ (resp. $\mBisector{a}{b}$) separates  Voronoi regions
of different color in $\VD(S_a\cup S_b)$ (resp. $\FVD(S_a\cup S_b)$), it may consist of several
disjoint simple curves, either unbounded or closed; see Figure~\ref{fig:dominance}.
In particular, the minimal bisector $\Bisector{a}{b}$ consists
of one or more unbounded or closed curves,
while the maximal bisector~$\mBisector{a}{b}$ consists of only unbounded curves
and can also be an empty set.
These color bisectors may contain degree-$2$ vertices.

\deleted{
  Since $\Bisector{a}{b}$ (resp. $\mBisector{a}{b})$ consists of
 Voronoi edges of $\VD(S_a\cup S_b)$ (resp. $\FVD(S_a \cup S_b)$)
 between two regions of different colors,
it consists of one or more disjoint simple curves,
 either unbounded or closed; see \figurename~\ref{fig:dominance}.
}


\begin{figure}[ht]
    \centering
    \includegraphics[height=4.6cm]{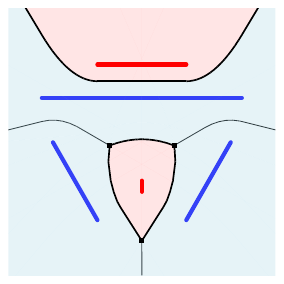}
    \hspace{0.8cm}
    \includegraphics[height=4.6cm]{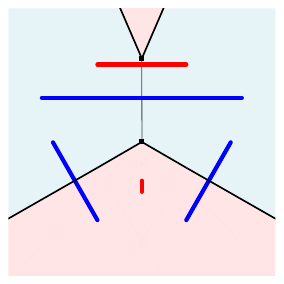}
    \caption{%
        Minimal 
        and maximal 
        dominance regions.
        The red segments are $S_a$
        and the blue ones are $S_b$.
        The thin edges belong to $\VD(S_a\cup S_b)$ and $\FVD(S_a\cup S_b)$, 
        the color bisectors are in bold.
    }
    \label{fig:dominance}
\end{figure}

The two kinds of color dominance regions 
define two hierarchies of higher-order abstract color Voronoi diagrams; see
Figures~\ref{fig:cvd}(a--c) and~\ref{fig:mcvd}(a--c). 
For each subset~$H\subseteq K$ of $k$~colors,
the \emph{order-$k$ minimal} and \emph{maximal color Voronoi regions}
are defined as 
\[
	\CVR_k(H,S) := \bigcap_{a \in H, b \in K \setminus H} \Dominance{a}{b}
	\quad\quad
	\text{and}
	\quad\quad
	\mCVR_k(H,S) := \bigcap_{a \in H, b \in K \setminus H} \mDominance{a}{b}.
\]
The \emph{order-$k$ minimal color Voronoi diagram}~$\CVD_k(S)$ and
the \emph{order-$k$ maximal color Voronoi diagram}~$\mCVD_k(S)$ are
defined as 
\[
 \CVD_k(S) := \bigcup_{H\subseteq K, |H|=k} \bd \CVR_k(H,S)
 \quad\text{and}\quad
 \mCVD_k(S) := \bigcup_{H\subseteq K, |H|=k} \bd \mCVR_k(H,S).
\]

We consider the following color dominance relations on colors~$K$ at any point~$x\in\Plane$,
analogously to~$\sprec_x$, $\sequiv_x$, and $\spreceq_x$ on sites~$S$, 
defined in~\cite{bcklpz-choavd-15}. 
For any colors $a,b\in K$, we write
\[
\begin{aligned}
&a \colprec_x b~\text{iff}~a\neq b~\text{and}~x\in \Dominance{a}{b}, \;
a \colequiv_x b~\text{iff}~a = b~\text{or}~x\in \Bisector{a}{b}, \;
a \colpreceq_x b~\text{iff}~a \colprec_x b~\text{or}~a \colequiv_x b;\\
&a \mcolprec_x b~\text{iff}~a\neq b~\text{and}~x \in \mDominance{a}{b}, \;
a \mcolequiv_x b~\text{iff}~a = b~\text{or}~x \in \mBisector{a}{b}, \;
a \mcolpreceq_x b~\text{iff}~a \mcolprec_x b~\text{or}~a \mcolequiv_x b.
\end{aligned}
\]

\begin{restatable}{lemma}{transitivity} \label{lem:transitivity}
For any~$x\in\Plane$, both $\colpreceq_x$ and $\mcolpreceq_x$ are transitive.
\end{restatable}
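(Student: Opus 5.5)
The plan is to reduce both color relations to the total preorder $\spreceq_x$ on the sites, which is transitive by~\cite{k-cavd-89,blpz-rdcahoavd-16}. For a color $a$ and a point $x$, let $\min_x(a)$ be a $\spreceq_x$-minimal site of $S_a$, and $\max_x(a)$ a $\spreceq_x$-maximal one. The class is well defined up to $\sequiv_x$, and since $\spreceq_x$ is total, comparisons between these representatives do not depend on the choice. The key claim is the following characterization for distinct colors $a,b$:
\[
a \colprec_x b \iff \min\nolimits_x(a) \sprec_x \min\nolimits_x(b),
\qquad
a \colequiv_x b \iff \min\nolimits_x(a) \sequiv_x \min\nolimits_x(b),
\]
and symmetrically $a \mcolprec_x b \iff \max_x(b) \sprec_x \max_x(a)$, with $\mcolequiv_x$ corresponding to $\sequiv_x$ of the maxima.

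Granting the claim, transitivity is immediate. Suppose $a \colpreceq_x b$ and $b \colpreceq_x c$. Then $\min_x(a) \spreceq_x \min_x(b) \spreceq_x \min_x(c)$, so $\min_x(a)\spreceq_x \min_x(c)$ by transitivity of $\spreceq_x$, and the claim turns this back into $a\colpreceq_x c$. The case $a=c$ is trivial. The maximal relation is handled the same way with maxima and the reversed order.

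To prove the claim I would first use the standard fact that for $S'\subseteq S$, $x\in \VR(p,S')$ iff $p\sprec_x q$ for all $q\in S'\setminus\{p\}$. Its closure version is: $x\in\cl\VR(p,S')$ iff $p\spreceq_x q$ for all $q\in S'$. This follows from (A2) and the fact that the dominance regions are open, so the relations $\sprec_x$ are locally stable. With $S'=S_a\cup S_b$, the union $\bigcup_{p\in S_a}\cl\VR(p,S')$ is exactly the set of $x$ with $\min_x(a)\spreceq_x\min_x(b)$. If $\min_x(a)\sprec_x\min_x(b)$, this strict inequality persists in a neighborhood of $x$: finitely many sites, open dominance regions, and the minimal $a$-site staying strictly ahead of all $b$-sites. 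Hence $x$ lies in the interior, that is, in $\Dominance{a}{b}$. Conversely, if $\min_x(b)\sprec_x\min_x(a)$, then $x$ lies in the open set $\Dominance{b}{a}$, which is disjoint from $\Dominance{a}{b}$. The same reasoning applies to $\FVR$ and $\mDominance{a}{b}$, using farthest regions and the reverse order.

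The main obstacle is the tie case $\min_x(a)\sequiv_x\min_x(b)$. Here I must show that $x\in\Bisector{a}{b}$, i.e.\ that $x$ is not an interior point of the union of closures. Take $p\in S_a$ and $q\in S_b$ attaining the minima, so $x\in\Bisector{p}{q}$. By (A4) the bisector $\Bisector{p}{q}$ is crossed transversally at $x$, and on the side $\Dominance{q}{p}$ arbitrarily close to $x$ we have $q\sprec_y p$. I would then need to argue that near $x$ on that side, some $b$-site is strictly ahead of every $a$-site. I will try to pick $p$ and $q$ so that $x$ is on the boundary of the cell $\VR(q,S')$ adjacent to $\VR(p,S')$ in $\VD(S_a\cup S_b)$, using that $\VD(S')$ is a plane graph. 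Then a point $y$ in $\VR(q,S')$ near $x$ has $q$ strictly first among $S'$, so $y\notin\cl\VR(p',S')$ for every $p'\in S_a$. General position restricts such vertices to degree three, which should keep this local case analysis finite. The maximal version is analogous, using $\FVD(S_a\cup S_b)$.
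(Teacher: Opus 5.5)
Your proposal is correct, but it is organized differently from the paper's proof. The paper proves directly $\cl\Dominance{a}{b}\cap\cl\Dominance{b}{c}\subseteq\cl\Dominance{a}{c}$. It passes from $\cl\VR(p_a,S_a\cup S_b)\cap\cl\VR(p_b,S_b\cup S_c)$ to $\cl\VR(p_a,S_a\cup S_b\cup S_c)$, and then to $\cl\VR(p_a,S_a\cup S_c)\subseteq\cl\Dominance{a}{c}$. You instead show that the color relation is the pullback of $\spreceq_x$ under $a\mapsto\min_x(a)$, separately for the strict part and the tie part. Both rest on the same two facts: the closure description $\cl\VR(p,S')=\{x : p\spreceq_x q \text{ for all } q\in S'\}$, and transitivity of $\spreceq_x$. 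The paper's route uses only non-strict information and is shorter. Yours gives a full dictionary between $\colprec_x,\colequiv_x$ and $\sprec_x,\sequiv_x$ on color minima, which later arguments use implicitly (e.g.\ when leader switches are identified with color ties).

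For the lemma itself, though, your ``main obstacle'' is avoidable. The set $W_{ab}:=\bigcup_{p\in S_a}\cl\VR(p,S_a\cup S_b)$ is closed, and each open set $\VR(p,S_a\cup S_b)$ lies in its interior $\Dominance{a}{b}$. Hence $\cl\Dominance{a}{b}=W_{ab}=\{x : \min_x(a)\spreceq_x\min_x(b)\}$. Since $a\colpreceq_x b$ means $x\in\cl\Dominance{a}{b}$, transitivity is immediate, with no tie analysis. Your identification of $\Bisector{a}{b}$ with the non-interior points of $W_{ab}$ silently uses this same equality. Moreover, when you turn a tie back into $a\colpreceq_x c$, what you actually need is the half ``$x\in\cl\Dominance{a}{c}$'', not non-interiority.

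If you want the finer dictionary anyway, the tie case closes in one line. Let $q=\min_x(b)$; then $q\spreceq_x r$ for all $r\in S_a\cup S_b$, so $x\in\cl\VR(q,S_a\cup S_b)$ by your closure description. Points of that open set near $x$ lie outside $W_{ab}$, so no degree-three case analysis is needed.

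Lastly, the closure description does not follow from (A2) and openness alone. At a tie you need that a bisector is the common boundary of its two sides (A3). At a triple tie you need that the related bisectors through $x$ cross transversally (A4), with general position bounding the tie to three sites.
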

\begin{proof}
    It is enough to show that for colors $a,b,c \in K$
    \[
    \cl \Dominance{a}{b} \cap \cl \Dominance{b}{c} \subseteq \cl \Dominance{a}{c}
    \quad\text{and}\quad
    \cl \mDominance{a}{b} \cap \cl \mDominance{b}{c} \subseteq \cl \mDominance{a}{c}.
    \]
    Let $x \in \cl \Dominance{a}{b} \cap \cl \Dominance{b}{c}$. Then $x\in
    \cl \VR(p_a, S_a\cup S_b) $ and $x\in \cl \VR(p_b, S_b\cup S_c)$, for
    some site $p_a \in S_a$ and $p_b \in S_b$.
    It follows that $x\in \cl \VR(p_a, S_a\cup S_b\cup S_c) $ by the standard
    transitivity property of ordinary Voronoi regions. Then $x\in
    \cl \VR(p_a, S_a\cup S_c)$, and consequently, $x\in \cl \Dominance{a}{c}$.
    The transitivity of~$\mcolpreceq_x$ based on the maximal dominance 
    can be proved analogously using the diagrams $\FVD(S_a\cup S_b)$ and
    $\FVD(S_b\cup S_c)$. 
\end{proof}

Lemma~\ref{lem:transitivity} implies (since both~$\colpreceq_x$
and~$\mcolpreceq_x$ are reflexive) that
$\colpreceq_x$ and $\mcolpreceq_x$ induce two orderings of the
$m$~colors in~$K$ at any point $x\in\Plane$:
$a_1 \colpreceq_x \cdots \colpreceq_x a_m$ and $\bar{a}_1 \mcolpreceq_x \cdots \mcolpreceq_x \bar{a}_m$.
We say that $a_k$ is the \emph{$k$-th nearest color} at~$x$ 
with respect to~$\colpreceq_x$, 
and that $\bar{a}_k$ is the \emph{$k$-th farthest color} at~$x$ 
with respect to~$\mcolpreceq_x$. 
Based on transitivity, we establish the following in Lemmas~\ref{lem:cvrk_covers_plane}, \ref{lem:basics_edge} and \ref{lem:neighboring-regions}: 
(i) the closures of the
minimal (resp. maximal) order-$k$ regions $\CVR_k(H,S)$ (resp. $\mCVR_k(H,S)$)
cover the plane; 
(ii) any two distinct minimal (resp. maximal) order-$k$
regions are disjoint; 
and
(iii) adjacent minimal (resp. maximal) order-$k$ regions differ by exactly one color
that define their common boundary. 
%
\begin{restatable}{lemma}{cvrkcoversplane} \label{lem:cvrk_covers_plane}
    For each $1 \leq k \leq m-1$,
    \[
    \Plane
    = \bigcup_{\substack{H \subseteq K, |H| = k}} \cl \CVR_k(H, S)
    \quad\text{and}\quad
    \Plane
    = \bigcup_{\substack{H \subseteq K, |H| = k}} \cl \mCVR_k(H, S).
    \]
\end{restatable}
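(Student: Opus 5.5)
We prove the statement for the minimal regions; the maximal case is identical, using $\FVD$ in place of $\VD$ and $\mcolpreceq_x$ in place of $\colpreceq_x$. The strategy is to cover a dense open set of the plane by regions and then pass to closures.

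\textbf{Step 1: a dense open set of generic points.} Let $B := \bigcup_{a\neq b} \Bisector{a}{b}$ be the union of all color bisectors. Each $\Bisector{a}{b}$ consists of edges of the plane graph $\VD(S_a\cup S_b)$, so it is a finite union of curves. Hence $B$ is closed and nowhere dense, and $U := \Plane\setminus B$ is open and dense.

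\textbf{Step 2: generic points lie in some region.} Fix $x\in U$ and two distinct colors $a,b$. We claim $x\in \Dominance{a}{b}\cup\Dominance{b}{a}$. By axiom (A2), $x$ lies in $\cl\VR(p,S_a\cup S_b)$ for some site $p$, say of color $a$. Then $x\in \cl\bigl(\bigcup_{p\in S_a}\cl\VR(p,S_a\cup S_b)\bigr)$, and this set is closed. Since $x\notin \Bisector{a}{b}$, which is the boundary of its interior, we get $x\in\Dominance{a}{b}$. The one delicate point is that the closed set could have an interior whose boundary is not the whole non-interior part, for instance a lower-dimensional piece. This cannot occur: the set is a union of closures of open Voronoi regions, so it equals the closure of its interior. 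Its interior is $\Dominance{a}{b}$, so the closed set is $\cl \Dominance{a}{b}$, and a point of it off $\bd\Dominance{a}{b}$ lies in $\Dominance{a}{b}$.

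Consequently $\colpreceq_x$ is a strict total order on $K$. It is total by the claim above and antisymmetric because $\Dominance{a}{b}\cap\Dominance{b}{a}=\emptyset$: the interiors of the closures of the two color classes' Voronoi regions are disjoint, since $\VD(S_a\cup S_b)$ is a partition into open regions plus a nowhere dense graph. It is transitive by Lemma~\ref{lem:transitivity}, and since $x$ lies on no bisector, the closed relations coincide with the open ones here. Write the order as $a_1\colprec_x\cdots\colprec_x a_m$ and set $H=\{a_1,\dots,a_k\}$. Then for every $a\in H$ and $b\notin H$ we have $a\colprec_x b$, that is, $x\in\Dominance{a}{b}$. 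Therefore $x\in\CVR_k(H,S)$.

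\textbf{Step 3: pass to closures.} Step 2 gives $U\subseteq\bigcup_{|H|=k}\CVR_k(H,S)$. The union is over finitely many $H$, so the union of the closures is closed. Since $U$ is dense, $\Plane=\cl U\subseteq\bigcup_{|H|=k}\cl\CVR_k(H,S)$.

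\textbf{Main obstacle.} The hard part is the topological bookkeeping in Step 2: showing that $B$ is nowhere dense and that off the bisector every point has a strict relation. Both rely on $\VD(S_a\cup S_b)$ being a finite plane graph, which holds for admissible systems, as noted in Section~\ref{sec:review}.

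For the maximal diagram, (A2) is replaced by the fact that the closures of the farthest regions $\cl\FVR(p,S_a\cup S_b)$ cover the plane, since $\FVD$ is a finite tree partitioning the plane. In this case $\mBisector{a}{b}$ may be empty, meaning one color dominates everywhere, and the argument is unaffected.
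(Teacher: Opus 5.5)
Your proof is correct. It shares the paper's core idea: order the colors by $\colpreceq_x$ using Lemma~\ref{lem:transitivity} and take the first $k$. Where it differs is in how the passage to closures is handled.

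The paper applies the ordering at every $x\in\Plane$, ties included, and concludes directly that $x\in\cl\CVR_k(H_x,S)$ for $H_x=\{a_1,\dots,a_k\}$. What the ordering literally gives, however, is only $x\in\bigcap_{a\in H_x,\,b\notin H_x}\cl\Dominance{a}{b}$. Upgrading an intersection of closures to the closure of the intersection is not automatic. At points with ties it needs a local argument, which the paper leaves implicit.

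You sidestep this by working only on the dense open set of points lying on no color bisector. There you show that $\colprec_x$ is a strict total order, using two facts: $\bigcup_{p\in S_a}\cl\VR(p,S_a\cup S_b)$ is regular closed and so equals $\cl\Dominance{a}{b}$, and $\Dominance{a}{b}\cap\Dominance{b}{a}=\emptyset$. Hence each such point lies in an open region $\CVR_k(H,S)$, and density plus finiteness of the union finish the argument.

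The paper's version is shorter and, where valid, says slightly more: the region of $x$'s own $k$ nearest colors has $x$ in its closure. Yours is fully rigorous but yields only \emph{some} region, which is all the lemma asks for.

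A small simplification for Step~1: you do not need the graph structure of $\VD(S_a\cup S_b)$. Since $\Bisector{a}{b}=\bd\Dominance{a}{b}$ is the boundary of an open set, it is automatically closed and nowhere dense.
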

\begin{proof}
    As already stated, Lemma~\ref{lem:transitivity} implies that $\colpreceq_x$ and $\mcolpreceq_x$ 
    induce two orderings of the $m$~colors in~$K$ at any point $x\in\Plane$:
    $a_1 \colpreceq_x \cdots \colpreceq_x a_m$ and $\bar{a}_1 \mcolpreceq_x \cdots \mcolpreceq_x \bar{a}_m$.
    Then $x\in\cl \CVR_k(H_x, S)$, where $H_x=\{a_1, \ldots, a_k\}$, and
    $x\in\cl \mCVR_k(\bar{H}_x, S)$, where $\bar{H}_x=\{\bar{a}_1, \ldots, \bar{a}_k\}$. 
    Since every point $x \in \Plane$ is contained in at least one such region, 
    the union of their closures covers the plane.
\end{proof}

\begin{restatable}{lemma}{basicedge} \label{lem:basics_edge}
    For each~$1\leq k\leq m-1$,
    \[
    \CVD_k(S) 
    = \bigcup_{\makebox[20pt][c]{$\substack{H\neq H' \subseteq K\\ |H| = |H'| = k}$}} \cl \CVR_k(H, S) \cap \cl \CVR_k(H',S)
    \;\;\text{and}\;\;
    \mCVD_k(S) 
    = \bigcup_{\makebox[20pt][c]{$\substack{H\neq H' \subseteq K\\ |H| = |H'| = k}$}} \cl \mCVR_k(H, S) \cap \cl \mCVR_k(H',S).
    \]
\end{restatable}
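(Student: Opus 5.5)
We prove the two equalities for $\CVD_k(S)$; the maximal case $\mCVD_k(S)$ follows verbatim, replacing $\colpreceq_x$ by $\mcolpreceq_x$ and $\Dominance{a}{b}$ by $\mDominance{a}{b}$. We show both inclusions, using two facts.

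The first fact is that distinct order-$k$ regions are disjoint open sets. Each $\CVR_k(H,S)$ is a finite intersection of interiors, hence open. Take $H\neq H'$ with $|H|=|H'|=k$. Then there exist $a\in H\setminus H'$ and $b\in H'\setminus H$. A point $x$ in both regions would satisfy $x\in\Dominance{a}{b}$ and $x\in\Dominance{b}{a}$. These sets are disjoint: they are the interiors of the unions of closures of the $a$-colored and $b$-colored Voronoi regions of $\VD(S_a\cup S_b)$. These closures meet only along $\VD(S_a\cup S_b)$, which has empty interior, so the two interiors cannot overlap.

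The second fact is the description of a region's closure through the color orderings. I claim that $\cl\CVR_k(H,S)=\{x : a\colpreceq_x b \text{ for all } a\in H,\ b\notin H\}$. The inclusion ``$\subseteq$'' is clear, because $\cl\Dominance{a}{b}=\Dominance{a}{b}\cup\Bisector{a}{b}$. The reverse inclusion is not needed below, so we avoid it.

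For ``$\subseteq$'', let $x\in\bd\CVR_k(H,S)$. Then $x\in\cl\CVR_k(H,S)$ and $x\notin\CVR_k(H,S)$. Every neighborhood of $x$ meets the complement of $\CVR_k(H,S)$. By Lemma~\ref{lem:cvrk_covers_plane}, the complement of $\CVR_k(H,S)$ lies in the union of the finitely many closures $\cl\CVR_k(H',S)$ with $H'\neq H$, together with $\cl\CVR_k(H,S)\setminus\CVR_k(H,S)=\bd\CVR_k(H,S)$. This is the main obstacle: points near $x$ outside the region might lie only on boundaries of $\CVR_k(H,S)$ itself.

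To handle it, I use that the union $\bigcup_{H'}\CVR_k(H',S)$ is dense. Its complement lies in the union of all color bisectors. These are subsets of the bisectors of $\VD(S_a\cup S_b)$ and hence nowhere dense curves by (A3)--(A4). So every neighborhood of $x$ contains a point of the complement of $\cl\CVR_k(H,S)$; if no such point existed, $x$ would be interior to $\cl\CVR_k(H,S)$, and we must rule this out. Given such points, density lets us pick them inside open regions $\CVR_k(H',S)$ with $H'\neq H$. By finiteness, one $H'$ occurs for points converging to $x$, so $x\in\cl\CVR_k(H',S)$.

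It remains to exclude the case where $x$ is interior to $\cl\CVR_k(H,S)$ but not in $\CVR_k(H,S)$. In that case some $\Dominance{a}{b}$ with $a\in H$, $b\notin H$ misses $x$, so $x\in\Bisector{a}{b}=\bd\Dominance{a}{b}$. Hence $x$ is a limit of points of $\Dominance{b}{a}$. Such points are not in $\cl\Dominance{a}{b}$, because the two open dominance regions are disjoint and $\Bisector{a}{b}$ is their common boundary. Those points are therefore not in $\cl\CVR_k(H,S)$, which gives the required points outside the closure arbitrarily close to $x$.

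For ``$\supseteq$'', let $x\in\cl\CVR_k(H,S)\cap\cl\CVR_k(H',S)$ with $H\neq H'$. By disjointness, $x$ cannot lie in the open set $\CVR_k(H,S)$. If it did, a neighborhood of $x$ would lie in $\CVR_k(H,S)$, yet this neighborhood meets $\CVR_k(H',S)$, a contradiction. Thus $x\in\cl\CVR_k(H,S)\setminus\CVR_k(H,S)=\bd\CVR_k(H,S)\subseteq\CVD_k(S)$.
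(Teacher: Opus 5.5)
Your proof is correct and uses the same skeleton as the paper's (pairwise disjointness of the open regions plus the covering property of Lemma~\ref{lem:cvrk_covers_plane}), but it is more careful: the paper's one-line deduction leaves implicit that no point of $\bd\CVR_k(H,S)$ lies in the interior of $\cl\CVR_k(H,S)$, and you justify this correctly, using that $\Bisector{a}{b}$ is the common boundary of the disjoint open sets $\Dominance{a}{b}$ and $\Dominance{b}{a}$. One simplification: the density of $\bigcup_{H'}\CVR_k(H',S)$ follows immediately from Lemma~\ref{lem:cvrk_covers_plane}, because a finite union of closures equals the closure of the union, so your detour through nowhere-dense color bisectors is not needed.
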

\begin{proof}
    It follows from Lemma~\ref{lem:transitivity} that 
    each point~$x$ in 
    $\CVR_k(H, S)$ (resp., $\mCVR_k(H,S)$)
    has a unique set of $k$~nearest (resp., farthest) colors.
    Thus, 
    two distinct regions~$\CVR_k(H, S)$ and~$\CVR_k(H', S)$
    (resp., $\mCVR_k(H,S)$ and~$\mCVR_k(H', S)$) must have empty intersection.
    Since these regions cover the plane by Lemma~\ref{lem:cvrk_covers_plane},
    their boundary, and thus $\CVD_{k}$ (resp., $\mCVD_{k}$) is exactly the union of intersections of their closures.
\end{proof}

\begin{restatable}{lemma}{neighboringregions}
    \label{lem:neighboring-regions}
    Let $\CVR_k(H, S)$ and $\CVR_k(H', S)$ be two adjacent regions in $\CVD_k(S)$
    with common boundary $E := \cl \CVR_k(H, S) \cap \cl \CVR_k(H', S) \neq \emptyset$.
    There are exactly two colors $a\in H$ and $a'\in H'$, $a\neq a'$,
    such that $H\setminus\{a\} =H'\setminus \{a'\}$ and $E\subseteq
    \Bisector{a}{a'}$. Color $a$ (resp. $a'$) is the \emph{$k$-th nearest
        color} for each point~$x\in \CVR_k(H, S)$ (resp. $\CVR_k(H', S)$)
    near $E$.
    Likewise for adjacent regions of $\mCVD_k(S)$.
\end{restatable}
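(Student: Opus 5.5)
Take a point $x\in E$ and perturb it to nearby points $y\in\CVR_k(H,S)$ and $y'\in\CVR_k(H',S)$; these exist because $E$ is a common boundary and the regions are open. The approach is to compare the color orderings induced by $\colpreceq$ at $y$, $y'$ and $x$, using Lemma~\ref{lem:transitivity} and the total preorder it gives at every point.

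At $y$, every color of $H$ strictly precedes every color of $K\setminus H$; at $y'$ the same holds for $H'$. Since $H\neq H'$ and $|H|=|H'|$, pick $a\in H\setminus H'$ and $a'\in H'\setminus H$. Then $y\in\Dominance{a}{a'}$ and $y'\in\Dominance{a'}{a}$. Letting $y,y'\to x$ gives $x\in\cl\Dominance{a}{a'}\cap\cl\Dominance{a'}{a}$. These two open sets are disjoint, since they are unions of interiors of closures of Voronoi regions of different colors in $\VD(S_a\cup S_b)$, and by (A2) they cover the plane up to their common boundary. Hence $x\in\Bisector{a}{a'}$, i.e.\ $a\colequiv_x a'$.

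Next I would show that $H\setminus H'$ and $H'\setminus H$ are singletons. Suppose instead that $a_1,a_2\in H\setminus H'$ and $a_1',a_2'\in H'\setminus H$. By the argument above, $x$ lies on all four bisectors $\Bisector{a_i}{a_j'}$. The closures of the regions give the relations $a_i\colpreceq_x c$ for every $c\notin H$, and similarly for $H'$. So at $x$ all four colors are tied at positions $k$ and $k+1$ of the order. I would then argue that this forces at least three colors to be mutually equivalent at $x$: the ordering $a_1\colpreceq_x\cdots$ ties $a_1$ with both $a_1'$ and $a_2'$, and transitivity of the closure inclusions (as in the proof of Lemma~\ref{lem:transitivity}) puts $x$ on two related bisectors. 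That would place $x$ at a vertex of $\VD(S_{a_1}\cup S_{a_1'}\cup S_{a_2'})$.

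This is the main obstacle. General position only controls points, while $E$ could a priori consist of isolated points where several regions meet. The fix is to interpret $E$ as the common boundary edge of two adjacent regions, i.e.\ to choose $x$ in the relative interior of a one-dimensional piece of $E$. Such points avoid the finitely many vertices where three related color classes tie, since color bisectors consist of Voronoi edges of ordinary diagrams and, by (A4) and general position, only finitely many such triple points occur. At such a generic $x$ exactly two colors are tied, so the orders at $y$ and $y'$ differ only by swapping positions $k$ and $k+1$. That gives $H\setminus\{a\}=H'\setminus\{a'\}$, with $a$ the $k$-th nearest color at $y$ and $a'$ the $k$-th nearest at $y'$. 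Every generic point of $E$ then lies on $\Bisector{a}{a'}$, the pair $(a,a')$ is constant along $E$ because it is determined by $H$ and $H'$, and taking closure yields $E\subseteq\Bisector{a}{a'}$.

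The maximal case repeats the argument verbatim with $\mDominance{\cdot}{\cdot}$, $\FVD$, and $\mcolpreceq_x$.
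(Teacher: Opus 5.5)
Your first paragraph follows the paper's own argument: take a point of $E$, push it into both open regions, and compare the color orders. It correctly shows that at every $x\in E$, each $a\in H\setminus H'$ is tied with each $a'\in H'\setminus H$.

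The gap is in ruling out $|H\setminus H'|\geq 2$. From the tie you only extract a three-way tie, which is indeed not contradictory. You then conclude that general position cannot help if $E$ consists of isolated points, and you resolve this by reinterpreting the hypothesis so that $E$ contains a one-dimensional piece with a generic point. The lemma only assumes $E\neq\emptyset$, and you never prove that $E$ contains such an arc. So the case where the two closures meet in finitely many points is left open. This case is not academic: in Lemma~\ref{lem:vertex_characterization} the lemma is applied to pairs among three regions that are only known to share the vertex $x$.

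The missing observation is that your own argument already gives a contradiction at a single point. Suppose $a_1,a_2\in H\setminus H'$ and $a_1',a_2'\in H'\setminus H$. By transitivity of $\colpreceq_x$, all four colors are mutually $\colequiv_x$-equivalent. Their nearest sites are then four distinct sites that are pairwise $\sequiv_x$, so more than three related bisecting curves pass through $x$, which general position forbids. Hence $|H\setminus H'|=|H'\setminus H|=1$ holds at any single point of $E$, including vertices and isolated contact points. Your first paragraph then gives $E\subseteq\Bisector{a}{a'}$ directly, with no closure step.

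Your generic-point argument is still the right tool, but only for the last sentence of the lemma, and there it is genuinely needed. Take an old vertex where $c,a,a'$ tie, with $H=H_0\cup\{a,c\}$ and $H'=H_0\cup\{a',c\}$. Points $y\in\CVR_k(H,S)$ arbitrarily close to $E$ can have $a\colprec_y c\colprec_y a'$, so $a$ is the $(k{-}1)$-th nearest color there, not the $k$-th. ``Near $E$'' must therefore be read as near points of $E$ where only $a$ and $a'$ tie. This is exactly what your argument delivers, and what the paper's continuity step tacitly assumes.
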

\begin{proof}
    Let $p\in E$ and let $a_1 \colpreceq_p \cdots \colpreceq_p
    a_k\colpreceq_p a_{k+1}$ be the ordering of
    the first $k+1$ colors with respect to the minimal dominance
    relation~$\colpreceq_p$.
    Since $p$ is on the boundary of both $\CVR_k(H, S)$ and $\CVR_k(H', S)$,
    it follows that $a_k \colequiv_p a_{k+1}$,
    and $p\in \Bisector{a_k}{a_{k+1}}$.
    
    Let $x \in \CVR_k(H, S)\cap N(p)$ and $x' \in \CVR_k(H', S)\cap N(p)$ be points in a very small
    neighborhood $N(p)$ around $p$.
    By continuity, colors $a_1 ...a_{k-1}$ are the first $k-1$ colors with
    respect to both the minimal dominance
    relations~$\colpreceq_x$ and~$\colpreceq_{x'}$, while the $k$-th color
    is $a_k$ for one and $a_{k+1}$ for the other.
    The claim follows, where $a=a_k$ and
    $a'=a_{k+1}$ if $\Dominance{a}{a'} $ lies in the same side
    of $\Bisector{a}{a'}$ as $x$.
\end{proof}

We conclude that the order-$k$ minimal diagram~$\CVD_k(S)$ is a plane graph that
partitions the plane 
into order-$k$ minimal regions~$\CVR_k(H,S)$,
for  $H \subseteq K$ with $|H| = k$,
while the order-$k$ maximal diagram~$\mCVD_k(S)$ is a plane graph that partitions $\Plane$ 
into order-$k$ maximal regions~$\mCVR_k(H,S)$.
The edges of $\CVD_k(S)$ (resp. $\mCVD_k(S)$), called \emph{Voronoi edges},
are portions of minimal (resp. maximal) color bisectors that  bound their incident 
order-$k$ color Voronoi regions. 
The Voronoi vertices  are  intersection points of related color
bisectors. See Figure~\ref{fig:cvd} for a concrete example of line-segment sites. 

As with 
ordinary order-$k$ diagrams, the vertices of $\CVD_k(S)$ and
$\mCVD_k(S)$ are characterized as
\emph{new} 
and \emph{old}. 
A vertex of $\CVD_k(S)$ (resp. $\mCVD_k(S)$) is called \emph{new}
if it does not appear in $\CVD_{k-1}(S)$ (resp. $\mCVD_{k-1}(S)$);
and \emph{old}, otherwise. 
See Figures~\ref{fig:cvd}(a--c) and~\ref{fig:mcvd}(a--c),
where new vertices are marked by small squares.

\begin{restatable}{lemma}{vertexcharacterization} \label{lem:vertex_characterization}
  A new vertex $v$ of $\CVD_{k-1}(S)$ is an old vertex of~$\CVD_k(S)$,
  and $v$
  is not a vertex of $\CVD_{k+1}(S)$. No Voronoi edge of $\CVD_{k}(S)$ remains in
  $\CVD_{k+1}(S)$.
  Analogous statements hold for $\mCVD_{k}$.
\end{restatable}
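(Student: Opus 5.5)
We will argue locally at a point, using the color orderings from Lemma~\ref{lem:transitivity} and the adjacency structure of Lemma~\ref{lem:neighboring-regions}. Under general position, a vertex $v$ of $\CVD_j(S)$ is a point where three order-$j$ regions meet. Near $v$, only the colors whose mutual color bisectors pass through $v$ change their relative order. Following the classical Lee-type characterization, we distinguish two kinds of vertex.

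\emph{Classification.} Near $v$, take the ordering $a_1\colpreceq_v\cdots\colpreceq_v a_m$. Since three related color bisectors meet at $v$, exactly three colors $a,b,c$ are pairwise equivalent at $v$. They occupy consecutive positions $i,i+1,i+2$, and positions outside these are fixed in a small neighborhood $N(v)$ by continuity. Two color bisectors belonging to different pairs cannot meet at an isolated point without being related, under general position. As in the uncolored case, the cyclic ordering of $\{a,b,c\}$ around $v$ is governed by a three-color ``diagram'': the closures $\cl\Dominance{\cdot}{\cdot}$ among $a,b,c$ near $v$ induce a partition of $N(v)$ into six sectors, one for each permutation of $\{a,b,c\}$. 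Adjacent sectors differ by a transposition, and every pair switches twice around $v$ (this uses the local structure of $\VD(S_a\cup S_b\cup S_c)$ at $v$, a vertex of an ordinary abstract diagram). Consequently $v$ is a vertex of $\CVD_j(S)$ exactly for $j\in\{i,i+1\}$. For $j=i$, the three regions have the common first $i-1$ colors plus one of $a,b,c$, and we call $v$ \emph{new} relative to $\CVD_{i-1}$. For $j=i+1$, the three regions have the first $i-1$ colors plus two of $a,b,c$, and $v$ is old. For $j\ge i+2$ or $j\le i-1$, the set of $j$ nearest colors is constant on $N(v)$, so $v$ lies in no $\CVD_j(S)$.

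\emph{First two claims.} If $v$ is a new vertex of $\CVD_{k-1}(S)$, then $i=k-1$. Here newness is exactly the case where $v$ is not in $\CVD_{k-2}(S)$, which forces $j=i$ rather than $j=i+1$. Hence $v$ is a vertex of $\CVD_k(S)$ with $k=i+1$, and it is not a vertex of $\CVD_{k+1}(S)$ since $k+1=i+2$.

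\emph{Edge claim.} Take an edge $e$ of $\CVD_k(S)$. By Lemma~\ref{lem:neighboring-regions}, on the interior of $e$ we have $a_k\colequiv a_{k+1}$, with all other colors strictly ordered. The set of the $k+1$ nearest colors is therefore constant near any interior point of $e$, so no interior point of $e$ lies on $\CVD_{k+1}(S)$; at most its endpoints can. For $\mCVD_k$ the same argument applies verbatim with $\mcolpreceq_x$, $\mDominance{}{}$, and $\FVD$ in place of $\VD$.

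The main obstacle is the six-sector claim. Color bisectors are unions of pieces of ordinary bisectors and may contain degree-2 vertices. We need to check that at a point where three color bisectors meet, the realizing sites $p_a,p_b,p_c$ yield an ordinary vertex of $\VD(\{p_a,p_b,p_c\})$ whose local sector structure transfers to the colors. We also need to rule out degree-2 bends interfering with this transfer, using general position.
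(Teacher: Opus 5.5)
Your proposal is sound, but it takes a genuinely different route from the paper's. The paper starts from the three regions $\CVR_k(H_1,S),\CVR_k(H_2,S),\CVR_k(H_3,S)$ at a vertex and uses Lemma~\ref{lem:neighboring-regions} to get $|H_i\cap H_j|=k-1$. It then splits on $|H_1\cap H_2\cap H_3|\in\{k-1,k-2\}$ and reads off the ordering $a_k\colequiv a_{k+1}\colequiv a_{k+2}$ (new) or $a_{k-1}\colequiv a_k\colequiv a_{k+1}$ (old). This case analysis is what justifies your opening assertion that a vertex carries three pairwise equivalent colors in consecutive positions. The paper then calls the remaining claims obvious. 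You instead go back from the ordering to the incident regions via the six-sector picture, and that converse is really needed. At a new vertex of $\CVD_{k-1}(S)$, the region analysis only tells you that each of the three colors is the nearest of the three somewhere nearby. Being a vertex of $\CVD_k(S)$ requires each of them to be the farthest of the three somewhere nearby. The same gap affects, symmetrically, the paper's assertion that an old vertex of $\CVD_k(S)$ is a vertex of $\CVD_{k-1}(S)$. So the paper's argument is shorter and purely set-combinatorial, while yours is longer but actually establishes the first claim.

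The step you flag as the main obstacle closes in a few lines.
By general position the nearest site of each of $a,b,c$ at $v$ is unique, since a second one would make four sites equidistant from $v$. So on a small neighborhood of $v$ these nearest sites stay $p_a,p_b,p_c$, and the order of $a,b,c$ agrees with that of $p_a,p_b,p_c$. There the three color bisectors coincide with $\Bisector{p_a}{p_b}$, $\Bisector{p_b}{p_c}$, $\Bisector{p_a}{p_c}$, so in particular no degree-$2$ bend sits at $v$.

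The six sectors then come from (A4). The three bisectors cross pairwise transversally at $v$, so each one separates the two rays of each other one. Hence no pair of colors is swapped in two consecutive crossings around $v$, and the six sectors carry all six permutations of $\{a,b,c\}$. The local structure of $\VD(S_a\cup S_b\cup S_c)$ by itself only gives that each color is nearest somewhere. That is enough for order $i$ but not for order $i+1$; transversality supplies the rest.

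Two small corrections. First, the sentence that color bisectors of different pairs cannot meet unless related is false: unrelated $\Bisector{a}{b}$ and $\Bisector{c}{d}$ generically cross. You only need $a_{i-1}\colprec_v a_i$ and $a_{i+2}\colprec_v a_{i+3}$, which hold because a fourth color equivalent to $a,b,c$ would again give four equidistant sites. Second, in the edge claim other pairs of colors may tie at isolated points of $e$. You only need $a_{k+1}\colprec a_{k+2}$ at interior points of $e$, which holds because otherwise three colors are pairwise equivalent there and the point is a vertex.
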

\begin{proof}
    Let $x$ be a Voronoi vertex of~$\CVD_k(S)$, where  
    $x \in \cl\CVR_k(H_1, S) \cap \cl\CVR_k(H_2,S) \cap \cl\CVR_k(H_3,S)$.
    By Lemma~\ref{lem:neighboring-regions} 
    we have $|H_i \cap H_j| = k-1$ and $|H_i \setminus H_j| = 1$ 
    for any pair of distinct indices~$i,j \in \{1,2,3\}$.
    This implies only two possibilities:
    either $|H_1 \cap H_2 \cap H_3| = k-1$ 
    or $|H_1 \cap H_2 \cap H_3| = k-2$. 
    In the former case, 
    we have
    $H_1 \cap H_2 \cap H_3 = \{a_1, \ldots, a_{k-1}\}$,
    $H_1 \cup H_2 \cup H_3 = \{a_1, \ldots, a_{k+2}\}$, and
    $ a_{k-1} \colprec_x a_{k} \colequiv_x a_{k+1} \colequiv_x a_{k+2}
    \colprec_x a_{k+3}$,
    thus, $x$ must be \emph{new} in~$\CVD_k(S)$ as $x\not\in \CVD_{k-1}(S)$.
    In the latter case, 
    we have
    $H_1 \cap H_2 \cap H_3 = \{a_1, \ldots, a_{k-2}\}$,
    $H_1 \cup H_2 \cup H_3 = \{a_1, \ldots, a_{k+1}\}$, and
    $a_{k-2} \colprec_x a_{k-1} \colequiv_x a_{k} \colequiv_x a_{k+1} \colprec_x
    a_{k+2}$.
    Then, $x$ is a vertex of $\CVD_{k-1}(S)$, and consequently \emph{old} in~$\CVD_k(S)$.
    It is now obvious that an 
    old vertex of~$\CVD_k(S)$ is a new vertex of~$\CVD_{k-1}(S)$, and 
    is not a vertex of $\CVD_{j}(S)$ for any $j \notin \{k,k-1\}$.
    Similarly, no Voronoi edge of $\CVD_{k}(S)$ can remain in
    $\CVD_{k+1}(S)$.
\end{proof}


\subparagraph*{Farthest color and Hausdorff Voronoi diagrams.}
The case of $k=m-1$ is of a special interest.
The order-$(m{-}1)$ minimal diagram~$\CVD_{m-1}(S)$ is determined 
by $m-1$ nearest colors, or equivalently the farthest color, with respect to the minimal dominance~$\colpreceq_x$,
resulting in the \emph{farthest color Voronoi diagram}~$\FCVD(S)$;
the  $\mCVD_{m-1}(S)$ is determined by $m-1$ farthest colors, or  equivalently the nearest color,
with respect to the maximal dominance~$\mcolpreceq_x$,
resulting in the  \emph{Hausdorff Voronoi diagram}~$\HVD(S)$.
For a color $a\in K$, its \emph{farthest color Voronoi region} is
\(
  \FCVR(a, S) := \bigcap_{b \in K \setminus \{a\}}
  \Dominance{b}{a} = \CVR_{m-1}(K \setminus \{a\}, S);
\)
while its \emph{Hausdorff Voronoi region} is
\(
 \HVR(a,S) := \bigcap_{{\substack{b \in K \setminus \{a\}}}} \mDominance{b}{a} = \mCVR_{m-1}(K \setminus \{a\},S).
 \)

The structure of these Voronoi regions becomes evident once we overlay them
with $\VD(S_a)$, and $\FVD(S_a)$, respectively;
see the red regions in Figures~\ref{fig:fcvd_vs_fcvd_star}(a--b).
Each face~$f$ of~$\FCVR(a, S)\cap \VD(S_a)$ is associated with  a site~$p \in
S_a$, where color $a$ is the farthest, with respect to the
minimal dominance relation~$\colpreceq_x$, for any point $x \in f$, and site $p$ is the nearest
at~$x$ among the sites in~$S_a$. 
The resulting refined diagram is denoted by~$\FCVD^*(S)$; see Figure~\ref{fig:cvd}(g).
Analogously, 
each face~$\bar{f}$ of $\HVR(a,S)\cap \FVD(S_a)$ is associated with a
site~$p\in S$ such that the color~$a$ of $p$ is the nearest, with respect to the maximal dominance
relation~$\mcolpreceq_x$, for any point~$x\in \bar{f}$, while 
site $p \in S_a$ is the farthest at~$x$ among the sites in~$S_a$.
The resulting refined diagram is denoted by~$\HVD^*(S)$; see
Figure~\ref{fig:mcvd}(g).

\begin{figure}[ht]
    \centering
    \includegraphics[height=11.4cm]{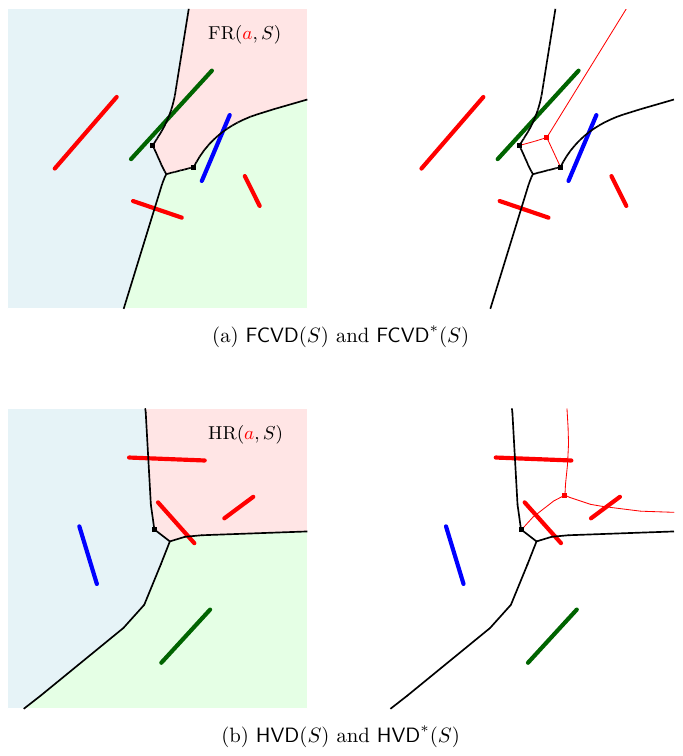}
    \caption{The farthest color and Hausdorff Voronoi diagrams next to their refined counterparts.
    In (a) the red edges 
    are $\VD(S_{a}) \cap \FCVR(a,S)$. In (b) the red edges 
    are $\FVD(S_{a}) \cap \HVR(a,S)$.
    }
    \label{fig:fcvd_vs_fcvd_star}
\end{figure}

Our abstract definition of~$\FCVD(S)$ and~$\HVD(S)$ 
encompasses all concrete cases that have been studied
before~\cite{mpsw-fcvd:ca-25,p-hvdpcp-04,b-tbiafcvdls-14,b-lsfcvd-12,bop-hocvdccsf-25},
including the farthest polygon Voronoi diagram~\cite{cegghlln-fpvd-11},
and any other case of generalized sites or metrics under the AVD model.

 


\subparagraph*{Refined order-$k$ color diagrams.}
%
%
We can analogously refine the order-$k$ diagrams~$\CVD_k(S)$ and $\mCVD_k(S)$,
denoted by~$\CVD^*_k(S)$ and $\mCVD^*_k(S)$, respectively, for each~$1\leq k\leq m$,
where $\CVD_m(S)$ and $\mCVD_m(S)$ consist of a single region
$\CVR_m(K,S)=\mCVR_m(K,S)=\Plane$.
%
%

 For each nonempty region~$R = \CVR_k(H, S)$ of~$\CVD_k(S)$,
 we overlay~$\FCVD^*(S_H)$ inside~$R$, where $S_H = \bigcup_{a\in H} S_a$.
 The resulting refined diagram is denoted~$\CVD^*_k(S)$.
 Recall that $\FCVD(S_H)=\CVD_{k-1}(S_H)$.
 Note that each face~$f$ of~$\CVD^*_k(S)$ is associated with
 a set~$H \subseteq K$ of $k$~colors and a site~$p\in S_a$ for some
 color~$a\in H$
 such that, at any point~$x\in f$,
 $H$ is the common set of $k$ nearest colors,
 color~$a$ is the $k$-th nearest color with respect to the minimal dominance~$\colpreceq_x$, and
 $p \in S_a$ is the nearest site of color $a$; 
 that is, $ f \subseteq \CVR_k(H,S) \cap \FCVD(S_H) \cap \VR(p, S_a)$.
 See Figures~\ref{fig:cvd}(d--g), where faces of $\CVD^*_{k}(S)$ associated
 to the sites $p$ and $q$
 are shaded purple and red, respectively.

 For each nonempty region~$\overline{R} = \mCVR_k(H, S)$ of~$\mCVD_k(S)$,
 we overlay~$\HVD^*(S_H)$ inside~$\overline{R}$.
 The resulting refined diagram is denoted by~$\mCVD^*_k(S)$.
 Similarly as above, each face~$\bar{f}$ of~$\mCVD^*_k(S)$ is associated with 
 a subset~$H \subseteq K$ of $k$~colors and a site~$p\in S_a$ for some~$a\in H$
 such that, at any point~$x\in \bar{f}$,
 $H$ is the common set of $k$ farthest colors,
 color~$a$ is the $k$-th farthest color with respect to the maximal dominance~$\mcolpreceq_x$, and
 $p \in S_a$ is the farthest site among those in~$S_a$;
 that is, 
 $\bar{f} \subseteq \mCVR_k(H,S) \cap \HVD(S_H) \cap \FVR(p, S_a)$.
 See Figures~\ref{fig:mcvd}\mbox{(d--g)}, where faces of $\mCVD^*_{k}(S)$ associated
 to the sites $p$ and $q$
 are also shaded purple and red.


Note $\CVD^*_m(S) = \FCVD^*(S)$ and $\mCVD^*_m(S) = \HVD^*(S)$, see
Figures~\ref{fig:cvd}(g) and~\ref{fig:mcvd}(g), 
whereas $\CVD_{m-1}(S) = \FCVD(S)$ and $\mCVD_{m-1}(S) = \HVD(S)$, see Figures~\ref{fig:cvd}(c) and~\ref{fig:mcvd}(c).

\begin{figure}[p]
    \centering
    \includegraphics[width=0.98\textwidth]{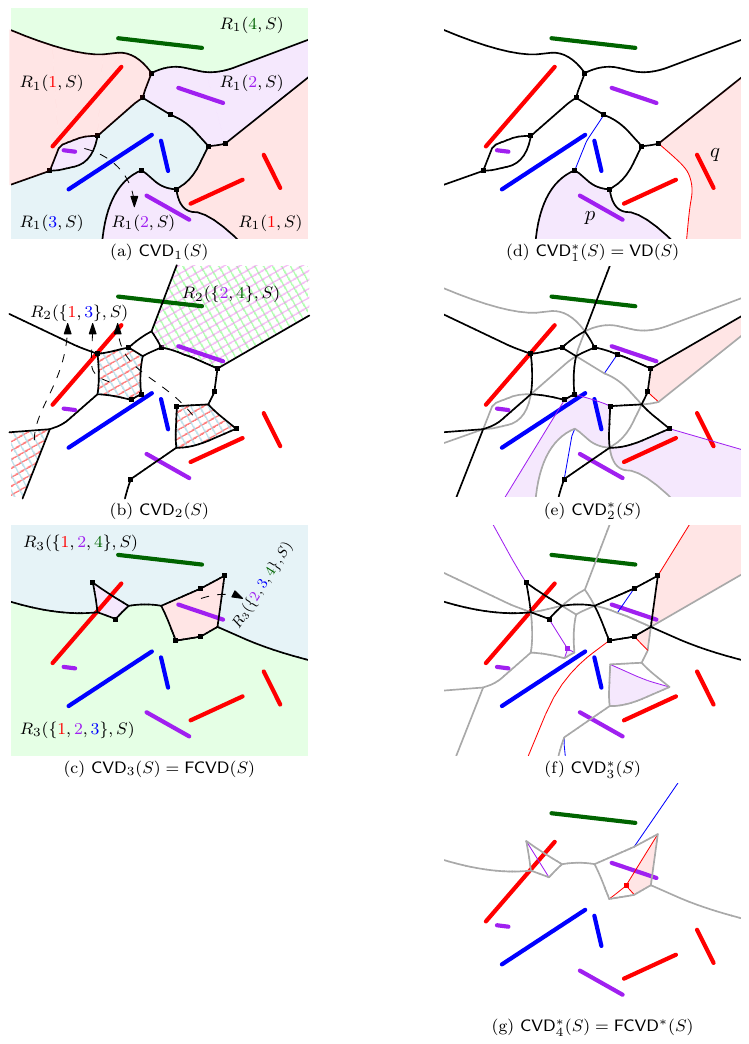}
    \caption{The minimal color Voronoi diagrams 
        of all orders. 
        The colors $K = \{{\color{red}1},{\color{Purple}2},{\color{blue}3},{\color{DarkGreen}4}\}$ are red, purple, blue and green. 
        New $2$-chromatic edges are in black, old $2$-chromatic edges in gray,
        $1$-chromatic edges in their own color, and new vertices are small squares.
    }
    \label{fig:cvd}
\end{figure}

\begin{figure}[p]
    \centering
    \includegraphics[width=0.98\textwidth]{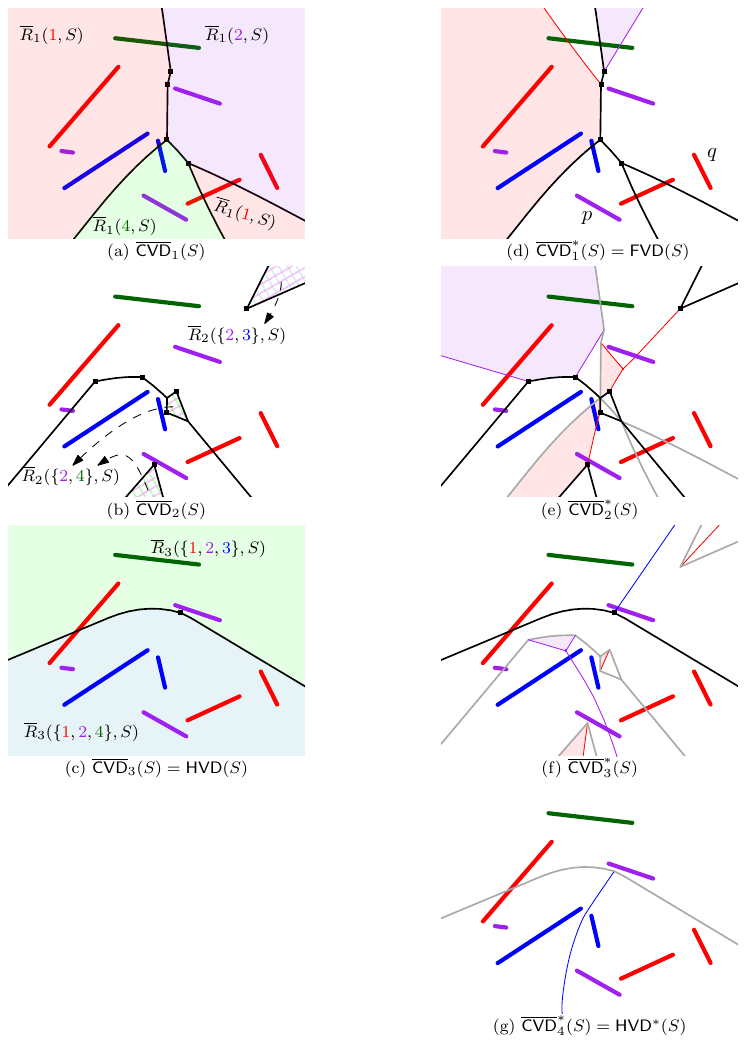}
    \caption{The maximal color Voronoi diagrams 
        of all orders, 
        with the same sites and style for $c$-chromatic features as Figure~\ref{fig:cvd}.
        Faces in $\mCVD^*_{k}(S)$
        associated to $p$ and $q$
        are shaded purple and red. 
    }
    \label{fig:mcvd}
\end{figure}

\subparagraph*{Distinguishing chromatic features.}
We can distinguish different features of the color diagrams by  the number of their
defining colors.
Any edge or vertex of a color diagram
is called \emph{\mbox{$c$-chromatic}}, where $c\in\{1,2,3\}$ is the number of
distinct colors of the sites defining it.
In $\CVD_k(S)$ and $\mCVD_k(S)$,  Voronoi edges and their  degree-$2$ vertices
are bichromatic ($2$-chromatic), while the Voronoi vertices are all trichromatic ($3$-chromatic).
No monochromatic ($1$-chromatic) features exist in $\CVD_k(S)$ or $\mCVD_k(S)$.

As with ordinary order-$k$ diagrams, the following properties are derived from definitions. 
\begin{restatable}{lemma}{overlayCVDkstar}
  \label{lem:overlay_CVDk_star}
  For each face $R\subseteq \CVR_k(H,S)$ of~$\CVD_{k}(S)$, 
  it holds: $\FCVD(S_H) \cap R = \CVD_{k-1}(S) \cap R$.
  For each face $\overline{R}\subseteq \mCVR_k(H,S)$ of~$\mCVD_k(S)$: 
  $\HVD(S_H) \cap \overline{R} = \mCVD_{k-1}(S) \cap \overline{R}$.
  Thus,
  superimposing $\CVD_k(S)$ and $\CVD_{k-1}(S)$ results in $\CVD^*_{k}(S)$
  with its monochromatic vertices and edges removed.
  Analogous statements hold for the maximal diagrams. 
\end{restatable}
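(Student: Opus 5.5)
We compare the two diagrams pointwise inside the face $R$ using the color orderings $\colpreceq_x$, since both are defined purely from these orderings. Fix a face $R \subseteq \CVR_k(H,S)$ and a point $x\in R$. For every $x\in R$, $H$ is exactly the set of $k$ nearest colors, and every $a\in H$ is strictly nearer than every $b\notin H$ (i.e.\ $a\colprec_x b$). The key observation is that the relation $\colpreceq_x$ restricted to colors of $H$ is the same whether computed with respect to $S$ or with respect to $S_H$. This holds because $\Dominance{a}{b}$ depends only on $S_a\cup S_b$, so restricting the site set to $S_H$ changes no pairwise relation among colors of $H$.

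Next we characterize each diagram inside $R$. For $x\in R$, $x\in\CVD_{k-1}(S)$ iff $x$ lies in the closures of two distinct order-$(k-1)$ regions of $S$, by Lemma~\ref{lem:basics_edge}. Via the ordering $a_1\colpreceq_x\cdots\colpreceq_x a_m$, this happens iff $a_{k-1}\colequiv_x a_k$, as in the proof of Lemma~\ref{lem:neighboring-regions}. Here the tie must be at positions $k-1,k$ within $H$, since $a_k\colprec_x a_{k+1}$ strictly in the open region $R$.

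On the other side, $\FCVD(S_H)=\CVD_{k-1}(S_H)$, and $x$ lies on it iff the two farthest colors of $H$ (the last two in the ordering restricted to $H$) are tied. By the observation above, these are the same $a_{k-1},a_k$, so the two conditions coincide. This gives $\FCVD(S_H)\cap R=\CVD_{k-1}(S)\cap R$.

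For the superposition claim, recall that $\CVD^*_k(S)$ is obtained inside each region $R$ by overlaying $\FCVD^*(S_H)$. The refined diagram $\FCVD^*(S_H)$ equals $\FCVD(S_H)$ together with the monochromatic edges of $\VD(S_a)\cap\FCVR(a,S_H)$. Removing the monochromatic features therefore leaves exactly $\CVD_k(S)$ plus $\FCVD(S_H)\cap R$ over all faces $R$, which by the above is $\CVD_k(S)\cup\CVD_{k-1}(S)$. On the boundary of $R$ itself both sides contain $\CVD_k(S)$, so boundary points cause no discrepancy.

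The maximal case is identical, using $\mcolpreceq_x$, $\mDominance{a}{b}$ (which likewise depends only on $S_a\cup S_b$), and $\HVD(S_H)=\mCVD_{k-1}(S_H)$.

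The main subtlety is in the pointwise characterization for points of $R$ that are also degenerate (vertices) of $\CVD_{k-1}(S)$, where three colors tie. Here I would argue via closures: both sets are closed within $R$ and agree on a dense subset. Alternatively, the characterization "the last two colors of $H$ are tied" holds at vertices too, and this can be read off from the vertex classification in Lemma~\ref{lem:vertex_characterization}.
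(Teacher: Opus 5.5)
Your proof is correct and follows the same route as the paper. The paper simply asserts that $\CVD_{k-1}(S)\cap R=\CVD_{k-1}(S_H)\cap R$ because $R\subseteq\CVR_k(H,S)$, then uses $\CVD_{k-1}(S_H)=\FCVD(S_H)$; your pointwise argument (color dominance is pairwise, so the tie $a_{k-1}\colequiv_x a_k$ is the same condition for $S$ and $S_H$) spells out why that equality holds, and your iff already covers points where three colors tie, so no density argument is needed there.
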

\begin{proof}
    Since $R\subseteq \CVR_k(H, S)$ and $|H|=k$, it follows that $\CVD_{k-1}(S) \cap R =
    \CVD_{k-1}(S_H) \cap R$. But $\CVD_{k-1}(S_H)= \FCVD(S_H)$, thus, the first
    claim follows.
    Analogously, for 
    $\overline{R} \subseteq \mCVR_k(H, S)$, as $\HVD(S_H) = \mCVD_{k-1}(S_H)$,
    while $\mCVD_{k-1}(S)\cap \overline{R}=\mCVD_{k-1}(S_H)\cap \overline{R}$.
    %
    
    The superimposition property  follows since $\CVD^*_k(S) \cap
    R=\FCVD^*(S_H)\cap R$, and $\FCVD(S_H)\cap R= \CVD_{k-1}(S_H)$ as already shown.
\end{proof}


\evanthia{If we do not use new and old in the following then the def  can move to
    the appendix or where we need it first time.} 
A vertex or an edge of $\CVD^*_k(S)$ or of $\mCVD^*_k(S)$ is called \emph{new} 
if it does not appear in $\CVD^*_{k-1}(S)$ or in $\mCVD^*_{k-1}(S)$, respectively;
and \emph{old}, otherwise. 
Note that vertices and edges of $\CVD^*_1(S)$ and $\mCVD^*_1(S)$ are all new,
while those of $\CVD^*_m(S)$ and $\mCVD^*_m(S)$ are all old.


\begin{restatable}{lemma}{consecutiveorders} \label{lem:consecutive_orders}
 A $c$-chromatic vertex or  edge  
 appears 
in  the diagrams $\CVD^*_k(S)$ or~$\mCVD^*_k(S)$ for $c$~consecutive
 orders, and in ~$\CVD_k(S)$ or~$\mCVD_k(S)$ for $c{-}1$~consecutive orders.
\end{restatable}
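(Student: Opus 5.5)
The plan is to characterize, for a feature (vertex or edge) defined by a set of sites, the exact set of orders at which it appears, using the colour ordering $\colpreceq_x$ at a representative point $x$. The argument is run for the minimal diagrams; the maximal ones are identical with $\mcolpreceq_x$, $\FVR$ and $\HVD^*$ in place of $\colpreceq_x$, $\VR$ and $\FCVD^*$.

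Let the feature be defined by sites $p_1,\dots,p_t$ ($t=2$ for an edge, $t=3$ for a vertex), carrying $c$ distinct colours $C\subseteq K$. Take $x$ in the relative interior of the feature. Every feature of $\CVD^*_k(S)$ lies in $\VD(S)$-type bisectors: its sites are equidistant at $x$ and each is the nearest site of its own colour. Hence the colours of $C$ are pairwise $\colequiv_x$ and occupy consecutive positions $j+1,\dots,j+c$ in the ordering $a_1\colpreceq_x\cdots\colpreceq_x a_m$. All other colours are strictly before or strictly after them, and this stays true in a small neighbourhood of $x$ by general position.

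Next I would determine for which $k$ the feature lies in $\CVD^*_k(S)$. By Lemma~\ref{lem:overlay_CVDk_star}, $\CVD^*_k(S)$ is the superimposition of $\CVD_k(S)$ and $\CVD_{k-1}(S)$, together with the monochromatic features of $\FCVD^*(S_H)$ inside regions $\CVR_k(H,S)$. Treat three cases.
\begin{itemize}
\item If $c=1$, the feature is a piece of $\VD(S_a)$. It appears in $\CVD^*_k(S)$ exactly when $a$ is the $k$-th nearest colour near $x$, that is, for the single order $k=j+1$.
\item If $c\geq 2$, the feature lies on the boundary between order-$k$ regions exactly when the cut between positions $k$ and $k+1$ separates $C$. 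By Lemma~\ref{lem:neighboring-regions} and the proof of Lemma~\ref{lem:vertex_characterization}, this gives $k\in\{j+1,\dots,j+c-1\}$, which is $c-1$ consecutive orders of $\CVD_k(S)$.
\item In the refined diagram, the feature additionally appears at the next order $k+1$ through the $\CVD_{k-1}(S)$ layer of the superimposition (equivalently, as part of $\FCVD^*(S_H)$ inside the region). So it appears for $k\in\{j+1,\dots,j+c\}$, which is $c$ consecutive orders, consistent with the monochromatic case.
\end{itemize}

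The main obstacle is the refined-diagram bookkeeping at the extreme orders. At $k=j+c$ all of $C$ lies inside $H$, and one must verify that the feature is a genuine feature of $\FCVD^*(S_H)$. Equivalently, the sites of $C$ must be the relevant nearest sites of colours tied for $k$-th nearest, so that the overlay with $\VD(S_a)$ does not erase it. At $k=j$ one must verify that the feature is absent. For vertices, general position ensures the three colour classes are tied only at $x$, so no spurious additional orders occur. I would handle these two points by restricting to a small neighbourhood of $x$ where the ordering of colours outside $C$ is fixed, and invoking Lemma~\ref{lem:overlay_CVDk_star} locally.
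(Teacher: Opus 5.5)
Your proof is correct and follows essentially the paper's route. The orders of $\CVD_k(S)$ come from Lemma~\ref{lem:neighboring-regions} and the new/old dichotomy in the proof of Lemma~\ref{lem:vertex_characterization}, the one extra refined order comes from Lemma~\ref{lem:overlay_CVDk_star}, and the monochromatic case is read off the definition of the refined diagram; your tied-block formulation (the colours of $C$ at positions $j+1,\dots,j+c$) just packages these uniformly. The endpoint checks you flag as the main obstacle are immediate from Lemma~\ref{lem:overlay_CVDk_star}, as in the paper: the non-monochromatic features of $\CVD^*_k(S)$ are exactly those of $\CVD_k(S)\cup\CVD_{k-1}(S)$, and an overlay cannot erase a feature, so absence at $k=j$ and presence at $k=j+c$ follow at once.
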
 
\begin{proof}
    Consider a $3$-chromatic new vertex $v$ of $\CVD_k(S)$.
    By Lemma~\ref{lem:vertex_characterization}, 
    $v$ is an old vertex in $\CVD_{k+1}(S)$, and it does not appear in
    $\CVD_{k+2}(S)$. In fact, it cannot appear in $\CVD_{j}(S)$ for any $j\notin
    \{k,k+1\}$ (see proof of Lemma~\ref{lem:vertex_characterization}).
    However, by
    Lemma~\ref{lem:overlay_CVDk_star}, $v$ appears 
    in $\CVD_{k+2}^*(S)$. Thus, the claim follows for $c=3$.
    \deleted{
        any other order. 
        Consider the  refined diagrams $\CVD^*_k(S)$, $\CVD^*_{k+1}(S)$, and $\CVD^*_{k+2}(S)$.
        By Lemma~\ref{lem:overlay_CVDk_star}, these correspond to the superimpositions
        of the following diagrams: $\{\CVD_k(S), \CVD_{k-1}(S)\}$, $\{\CVD_{k+1}(S), \CVD_k(S)\}$,
        and $\{\CVD_{k+2}(S), \CVD_{k+1}(S)\}$, respectively.
        Since $v$ is a 
        vertex in $\CVD_k(S)$, it appears in $\CVD^*_k(S)$ and $\CVD^*_{k+1}(S)$.
        Since $v$ is a 
        vertex in $\CVD_{k+1}(S)$, it also appears in $\CVD^*_{k+2}(S)$.
    }
    
    \evanthia{In the main text we could also say: Analogously for $c=2,1$ and
        skip.}
    
    Next, consider a $2$-chromatic edge $e$ in $\CVD_k(S)$ (and any $2$-chromatic vertex $v$ that $e$
    may contain).
    By Lemma~\ref{lem:vertex_characterization},
    the edge $e$ 
    does not appear in $\CVD_{k+1}(S)$, nor can it
    appear in any $\CVD_{j}(S)$ for $j\neq k$; 
    however, it appears in
    $\CVD^*_{k+1}(S)$, by Lemma~\ref{lem:overlay_CVDk_star}.
    Thus, the claim follows for $c=2$.
    
    Finally, consider a $1$-chromatic edge $e$ (resp. vertex)  defined by sites of color $a$.
    By the definition of the refined diagrams, $e$ appears in $\CVD^*_k(S)$
    if and only if $a$ is the unique $k$-th nearest color of the points in $e$.
    Thus, it appears in exactly $1$ refined diagram.
\end{proof}


A full characterization of the $c$-chromatic features
of~$\CVD^*_k(S)$ and $\mCVD^*_k(S)$ can be established
based on the ordering of sites in~$S$ with respect to~$\spreceq_x$
and the ordering of colors in~$K$ with respect to~$\colpreceq_x$.

\begin{restatable}{lemma}{edgetype} \label{lem:edge_types}
    For any~$x\in\Plane$, let $p_1 \spreceq_x \cdots \spreceq_x p_n$ and
    $a_1 \colpreceq_x \cdots \colpreceq_x a_m$ be 
    the ordering of sites in~$S$ with respect to the relations~$\spreceq_x$
    and the ordering of colors in~$K$ with respect to~$\colpreceq_x$, respectively.
    \begin{itemize} 
        \item 
        Point~$x$ lies 
        on a bichromatic edge of~$\CVD^*_k(S)$
        if and only if
        $a_{l-1} \colprec_x a_l \colequiv_x a_{l+1} \colprec_x a_{l+2}$
        for $l\in \{k-1, k\}$.
        \item
        Point~$x$ lies 
        on a monochromatic edge of~$\CVD^*_k(S)$
        if and only if
        $a_{k-1} \colprec_x a_k \colprec_x a_{k+1}$ and
        $p_j \sequiv_x p_{j+1}$ for some~$j$
        such that $p_j, p_{j+1} \in S_{a_k}$ and $p_i \notin S_{a_k}$ for all~$i<j$.
    \end{itemize}
\end{restatable}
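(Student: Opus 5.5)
We prove the two items separately, in both cases using Lemma~\ref{lem:overlay_CVDk_star}. That lemma says $\CVD^*_k(S)$ is the superimposition of $\CVD_k(S)$ and $\CVD_{k-1}(S)$, together with the monochromatic edges inherited from $\VD(S_a)\cap\FCVR(a,S_H)$ inside each region $\CVR_k(H,S)$.

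\textbf{Bichromatic edges.} Bichromatic edges of $\CVD^*_k(S)$ are exactly points of $\CVD_k(S)\cup\CVD_{k-1}(S)$. The boundaries of the regions $\CVR_{k-1}(H',S)\cap\CVR_k(H,S)$ contribute nothing further that is $2$-chromatic.
\begin{itemize}
\item By Lemma~\ref{lem:neighboring-regions} and the proof of Lemma~\ref{lem:vertex_characterization}, a point $x$ lies in the relative interior of an edge of $\CVD_k(S)$ iff $a_k\colequiv_x a_{k+1}$. For $x$ to be on an edge rather than at a vertex, no third color may be tied, that is, $a_{k-1}\colprec_x a_k$ and $a_{k+1}\colprec_x a_{k+2}$. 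This is the case $l=k$.
\item Applying the same argument to $\CVD_{k-1}(S)$ gives the case $l=k-1$.
\end{itemize}
For the converse, assume the chain of relations holds with $l\in\{k-1,k\}$. Transitivity (Lemma~\ref{lem:transitivity}) shows that nearby points realize both orderings $a_l\colprec a_{l+1}$ and $a_{l+1}\colprec a_l$, with all other relative orders fixed. Hence $x$ lies on the common boundary of two order-$l$ regions, and therefore on an edge of $\CVD^*_k(S)$. One must also check that the degree-2 bichromatic vertices lying on a color bisector are harmless.

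\textbf{Monochromatic edges.} By the definition of $\CVD^*_k(S)$, a monochromatic edge lies inside a region $R\subseteq\CVR_k(H,S)$. Within $R$ it lies in $\FCVR(a,S_H)\cap\VD(S_a)$, where $a$ is the $k$-th nearest color. Since $x$ is interior to this face of $\CVD_k\cup\CVD_{k-1}$, the bichromatic characterization gives $a_{k-1}\colprec_x a_k\colprec_x a_{k+1}$ with $a=a_k$. The condition that $x$ lies on $\VD(S_a)$ means the two nearest sites of color $a$ are tied at $x$. Restrict the total order $\spreceq_x$ on $S$ to $S_a$: these two sites are $p_j\sequiv_x p_{j+1}$, and they are the first sites of color $a$ in the order. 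So no $p_i$ with $i<j$ has color $a$.

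The step needing care is consecutiveness: why the two tied sites are adjacent, at positions $j,j+1$, in the global order. The plan is to argue that a site $r$ of another color with $p_j\sprec_x r\sprec_x p_{j+1}$ is impossible, since $p_j\sequiv_x p_{j+1}$ forces every site to compare equally to both. A tie with a third site is excluded by general position on an edge. The converse direction reverses these implications, again via Lemma~\ref{lem:overlay_CVDk_star}.

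The main obstacle I expect is making the ``nearby points'' continuity arguments rigorous in the abstract setting. Here the color relations are defined through closures and interiors of unions of Voronoi regions, so strictness $\colprec_x$ must be shown to persist in a neighborhood. I would handle this using openness of $\Dominance{a}{b}$.
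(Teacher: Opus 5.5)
Your proposal is correct and follows the paper's proof essentially step for step. Bichromatic edges are identified with edges of $\CVD_k(S)\cup\CVD_{k-1}(S)$ via Lemma~\ref{lem:overlay_CVDk_star} and characterized through Lemma~\ref{lem:neighboring-regions}. Monochromatic edges are placed inside $\CVR_k(H,S)\cap\FCVR(a_k,S_H)$, on an edge of $\VD(S_{a_k})$.

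One caveat on your consecutiveness step. General position does not exclude a site of a strictly nearer color tying with $p_j$ and $p_{j+1}$ at isolated points of such an edge, since three related bisectors are allowed to meet there. So you must handle this by choosing the tie-breaking in the ordering, not by appealing to general position; the paper's proof glosses over the same point.
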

\begin{proof}
    First, consider the bichromatic edges of $\CVD^*_k(S)$.
    By Lemma~\ref{lem:overlay_CVDk_star}, these are exactly the edges
    of $\CVD_k(S)$ and of $\CVD_{k-1}(S)$.
    It follows from Lemma~\ref{lem:basics_edge} and the general position assumption, that a point $x$ on an edge $e$ of $\CVD_{k}(S)$ (resp. $\CVD_{k-1}(S)$) lies on the common boundary of two regions $\CVR_k(H, S)$ and $\CVR_k(H', S)$ (resp. $\CVR_{k-1}(H, S)$ and $\CVR_{k-1}(H', S)$).
    By Lemma~\ref{lem:neighboring-regions},
    such an edge $e$ is a portion of the bisector $\Bisector{a}{a'}$
    of the 
    colors $a \in H \setminus H'$ and $a' \in H' \setminus H$.
    For a point $x$ near $e$ contained in $\CVR_k(H, S)$ (resp. $\CVR_{k-1}(H, S)$),
    $a$ and $a'$ are the $k$-th and $(k{+}1)$-th nearest colors (resp. $(k{-}1)$-th and $k$-th)
    with respect to $\colpreceq_x$.
    Thus, a point $x$ lying on an edge of $\CVD_{k}(S)$ (resp. $\CVD_{k-1}(S)$)
    implies $a \colequiv_x a'$,
    where $a = a_k$ and $a' = a_{k+1}$ (resp. $a = a_{k-1}$ and $a' = a_{k}$).
    Conversely, if these equivalence conditions hold at $x$, any sufficiently small neighborhood of $x$ intersects both adjacent regions, thus $x$ must lie on their common boundary.
    
    Next, consider a monochromatic edge $e$ of $\CVD^*_k(S)$,
    which is a monochromatic edge of $\FCVD^*(S_H) \cap \CVR_k(H, S)$, 
    $H \subseteq K$.
    Furthermore, $e$ lies strictly inside the region~$\FCVR(a, S_H)$ for some $a \in H$.
    A point $x$ is contained in $\CVR_k(H, S) \cap \FCVR(a, S_H)$
    if $H$ is the set of $k$ nearest colors from $x$
    and $a$ is the unique farthest color in $H$
    with respect to $\colpreceq_x$,
    implying that $a = a_k$ and $a_{k-1} \colprec_x a_k \colprec_x a_{k+1}$.
    A 1-chromatic edge $e$ of $\FCVD^*(S_H)$ inside $\FCVR(a, S_H)$
    is a portion of an edge of $\VD(S_a)$.
    Under the general position assumption, $x$ lying on an edge of $\VD(S_a)$ means that
    $x$ lies on the bisector $\Bisector{p_j}{p_{j+1}}$
    of the nearest sites $p_j, p_{j+1} \in S_a$ with respect to $\spreceq_x$.
    Therefore, $p_j \sequiv_x p_{j+1}$ such that no $p_i$ for $i<j$ belongs to $S_{a_k}$. 
    The converse holds analogously: the strict color order ensures that the neighborhood around $x$ is contained in $\CVR_k(H, S)$, while the site equivalence implies it must lie on the bisector $\Bisector{p_j}{p_{j+1}}$.
\end{proof}

\begin{restatable}{lemma}{vertextype} \label{lem:vertex_types}
    For any~$x\in\Plane$, let $p_1 \spreceq_x \cdots \spreceq_x p_n$ and
    $a_1 \colpreceq_x \cdots \colpreceq_x a_m$ be
    the ordering of sites in~$S$ with respect to the relations~$\spreceq_x$
    and the ordering of colors in~$K$ with respect to~$\colpreceq_x$, respectively.
    \begin{itemize} 
        \item Point $x$ is a $3$-chromatic vertex of $\CVD^*_k(S)$ if and only if
        $a_l \colequiv_x a_{l+1} \colequiv_x a_{l+2}$
        for some $l \in\{{k-2},k-1, k\}$.
        \item Point $x$ is a $2$-chromatic vertex of $\CVD^*_k(S)$ if and only if
        $a_l \colequiv_x a_{l+1}$
        for some~$l \in \{k-1, k\}$, and $p_j \sequiv_x p_{j+1} \sequiv_x p_{j+2}$
        such that $p_j, p_{j+1}, p_{j+2} \in S_{a_l} \cup S_{a_{l+1}}$ and
        $p_i \notin S_{a_l} \cup S_{a_{l+1}}$ for all~$i<j$.
        \item Point $x$ is a $1$-chromatic vertex of $\CVD^*_k(S)$ if and only if
        $a_{k-1} \colprec_x a_k \colprec_x a_{k+1}$ and \linebreak[4]
        $p_j \sequiv_x p_{j+1} \sequiv_x p_{j+2}$ such that
        $p_j, p_{j+1}, p_{j+2}\in S_{a_k}$ and $p_i \notin S_{a_k}$ for all~$i<j$.
    \end{itemize}
\end{restatable}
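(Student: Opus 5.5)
We prove the three items by reducing each to the structure of the superimposition given in Lemma~\ref{lem:overlay_CVDk_star}. By that lemma, the vertices of $\CVD^*_k(S)$ are of three kinds. The first kind are vertices of $\CVD_k(S)$ or of $\CVD_{k-1}(S)$. The second kind are points where a monochromatic edge of $\FCVD^*(S_H)$ meets an edge of $\CVD_k(S)\cup\CVD_{k-1}(S)$; these are degree-$2$ vertices on bichromatic edges. The third kind are monochromatic vertices of $\FCVD^*(S_H)$ lying inside some region $\CVR_k(H,S)\cap\FCVR(a,S_H)$. We then match each kind to a chromatic type, and characterize each type by the orderings $\spreceq_x$ and $\colpreceq_x$, following the template of Lemma~\ref{lem:edge_types}.

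\emph{Trichromatic vertices.} The proof of Lemma~\ref{lem:vertex_characterization} shows that $x$ is a vertex of $\CVD_j(S)$ exactly when $a_{j} \colequiv_x a_{j+1}\colequiv_x a_{j+2}$ (new in order $j$) or $a_{j-1}\colequiv_x a_j\colequiv_x a_{j+1}$ (old in order $j$). Taking $j\in\{k-1,k\}$, the vertices of $\CVD_k(S)\cup\CVD_{k-1}(S)$ are exactly the points with $a_l\colequiv_x a_{l+1}\colequiv_x a_{l+2}$ for $l\in\{k-2,k-1,k\}$. This is also consistent with Lemma~\ref{lem:consecutive_orders}, since such a vertex lies in three consecutive refined diagrams. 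By general position such points are defined by three sites of distinct colors, so they are the $3$-chromatic vertices. For the converse, we show that a trichromatic feature of $\FCVD^*(S_H)$ can only arise from three mutually equidistant colors.

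\emph{Bichromatic vertices.} These are points lying on a bichromatic edge, so Lemma~\ref{lem:edge_types} gives $a_l\colequiv_x a_{l+1}$ for some $l\in\{k-1,k\}$ with strict inequalities around them. Additionally, $x$ is a vertex of the color bisector $\Bisector{a_l}{a_{l+1}}$, i.e.\ a vertex of $\VD(S_{a_l}\cup S_{a_{l+1}})$, or $x$ is where a monochromatic edge of $\VD(S_{a_l})$ or $\VD(S_{a_{l+1}})$ hits that bisector. In both cases three sites of $S_{a_l}\cup S_{a_{l+1}}$ are mutually equidistant at $x$ and are the nearest among that union. In the site order this reads $p_j\sequiv_x p_{j+1}\sequiv_x p_{j+2}$, with all earlier sites outside $S_{a_l}\cup S_{a_{l+1}}$. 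Conversely, such an equidistance on the color bisector forces a degree-$3$ point of $\VD(S_{a_l}\cup S_{a_{l+1}})$. Its incident edges include a portion of the color bisector, since $a_l\colequiv_x a_{l+1}$ means sites of both colors attain the minimum. The remaining edge is either a monochromatic edge, which appears in $\CVD^*_k$ on the side where that color is $k$-th, or another piece of the bisector. In either case $x$ is a vertex of the refined diagram.

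\emph{Monochromatic vertices.} We argue as in the second item of Lemma~\ref{lem:edge_types}. The strict relations $a_{k-1}\colprec_x a_k\colprec_x a_{k+1}$ place a neighborhood of $x$ inside $\CVR_k(H,S)\cap\FCVR(a_k,S_H)$, where the refined diagram coincides with $\VD(S_{a_k})$. A vertex of $\VD(S_{a_k})$ is exactly a point where the three nearest sites of $S_{a_k}$ are equidistant, and general position excludes any other site tying with them.

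The main obstacle is the bichromatic case. There we must make rigorous that every degree-$2$ vertex of the color bisector, and every crossing of a monochromatic $\VD(S_a)$ edge with the bisector, is a degree-$3$ point of $\VD(S_{a_l}\cup S_{a_{l+1}})$ with the three involved sites nearest in the union. We must also check that the condition ``earlier sites are not in $S_{a_l}\cup S_{a_{l+1}}$'' is exactly what encodes this. We will argue locally: near $x$ the restriction of $\VD(S_{a_l}\cup S_{a_{l+1}})$ is determined by the nearest sites of the union, and we then apply the admissibility axioms (A1)--(A4) and general position to conclude that exactly three sites meet at $x$.
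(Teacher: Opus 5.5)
Your proposal is correct and follows essentially the same route as the paper. It splits the vertices of $\CVD^*_k(S)$ via Lemma~\ref{lem:overlay_CVDk_star} and reads off the trichromatic case from the proof of Lemma~\ref{lem:vertex_characterization}. It then reduces the bichromatic case, through Lemma~\ref{lem:edge_types}, to vertices of $\VD(S_{a_l}\cup S_{a_{l+1}})$ whose three nearest sites in that union tie. Finally, it reduces the monochromatic case to vertices of $\VD(S_{a_k})$ inside the region where $a_k$ is the unique $k$-th color, all under general position. (A minor point: a monochromatic edge hitting the color bisector is not a separate case from a vertex of $\Bisector{a_l}{a_{l+1}}$. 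Both are the same degree-$2$ bisector vertex, where two bisector pieces meet and the $\VD(S_c)$ edge of the doubled color $c$ passes through, so it is ``both'' rather than ``either/or''. This does not affect your argument.)
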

\begin{proof}
    First, consider the $3$-chromatic vertices of $\CVD^*_k(S)$.
    By Lemma~\ref{lem:overlay_CVDk_star}, these are exactly the vertices
    of $\CVD_k(S)$ and $\CVD_{k-1}(S)$.
    Recall from the proof of Lemma~\ref{lem:vertex_characterization} that
    $x$ is a new vertex of $\CVD_k(S)$ (resp. $\CVD_{k-1}(S)$)
    if $a_k \colequiv_x a_{k+1} \colequiv_x a_{k+2}$
    (resp. $a_{k-1} \colequiv_x a_k \colequiv_x a_{k+1}$),
    and $x$ is an old vertex if $a_{k-1} \colequiv_x a_k \colequiv_x a_{k+1}$
    (resp. $a_{k-2} \colequiv_x a_{k-1} \colequiv_x a_k$).
    Thus, if $x$ is a $3$-chromatic vertex then
    $a_l \colequiv_x a_{l+1} \colequiv_x a_{l+2}$
    for some $l \in \{k-2, k-1, k\}$. The reverse implication can be proved analogously, as these equivalences imply the neighborhood of $x$ intersects the three regions, defining a vertex.
    
    Next, consider the $2$-chromatic vertices of $\CVD^*_k(S)$,
    which must lie on the $2$-chromatic edges of $\CVD^*_k(S)$.
    By Lemma~\ref{lem:edge_types}, if a point $x$ is on a $2$-chromatic edge of $\CVD^*_k(S)$,
    then $a_l \colequiv_x a_{l+1}$
    for some $l \in \{k-1, k\}$, and thus $x \in \Bisector{a_l}{a_{l+1}}$.
    Since $x$ is a vertex of $\Bisector{a_l}{a_{l+1}}$,
    it is a vertex of $\VD(S_{a_l} \cup S_{a_{l+1}})$
    whose three nearest sites in $S_{a_l} \cup S_{a_{l+1}}$, with respect to $\spreceq_x$,
    include at least one site of color $a_l$ and one of color $a_{l+1}$.
    Under the general position assumption,
    a point $x$ being a vertex of $\VD(S_{a_l} \cup S_{a_{l+1}})$
    implies it is equidistant to three sites $p_j \sequiv_x p_{j+1} \sequiv_x p_{j+2}$
    that are the nearest in $S_{a_l} \cup S_{a_{l+1}}$ with respect to $\spreceq_x$.
    Hence, $p_j \sequiv_x p_{j+1} \sequiv_x p_{j+2}$
    such that no $p_i$ for $i<j$ belongs to $S_{a_l} \cup S_{a_{l+1}}$. Conversely, satisfying these equivalences implies that $x$ lies on the common intersection of the three related bisecting curves.
    
    Finally, consider the $1$-chromatic vertices of $\CVD^*_k(S)$.
    As with the $1$-chromatic edges from Lemma~\ref{lem:edge_types},
    a $1$-chromatic vertex $x$ has a unique $k$-th nearest color $a_k$,
    so $a_{k-1} \colprec_x a_k \colprec_x a_{k+1}$,
    and $x$ is a vertex of $\VD(S_{a_k})$.
    Under the general position assumption, being a vertex of $\VD(S_{a_k})$ means
    $x$ is equidistant from three sites $p_j \sequiv_x p_{j+1} \sequiv_x p_{j+2}$
    that are the nearest in $S_{a_k}$ with respect to $\spreceq_x$.
    Thus, no $p_i$ for $i<j$ belongs to $S_{a_k}$. For the converse, the strict color ordering guarantees the neighborhood of $x$ lies strictly inside the region where $a_k$ is the $k$-th nearest color, while the site equivalence implies it is a vertex of the nearest-site Voronoi diagram.
\end{proof}

Using the maximal dominance relations~$\mcolpreceq_x$, $\mcolprec_x$, and
$\mcolequiv_x$, the vertices and edges of the maximal diagram~$\mCVD^*_k(S)$ can be 
characterized analogously.



\subparagraph*{A concrete example: the order-$k$ polygon Voronoi diagram.}
\label{subsec:polygons}
Let $\mathcal{P}$ be a set of $m$ disjoint simple polygons with a total of $n$
vertices. Each polygon induces a distinct color in set $K$.
We can cast the order-$k$ polygon Voronoi diagram as a concrete case of an  order-$k$
abstract color (minimal) Voronoi diagram, $\CVD_k(S)$,  by selecting the set  $S$ of sites so that
the underlying  bisector system is admissible.
There are two ways  to do so, each resulting in a slightly
different (but equally valid) minimal refined diagram:
(1) let $S$ be  a set of points and open line segments derived from the
vertices and edges of the polygons in $\mathcal{P}$; and
(2) let  $S$ be a set of line segments derived from the polygonal edges  in
$\mathcal{P}$.
In the latter case, if two segments $p$ and $q$ have a
common endpoint, then replace  the two-dimensional portion of their bisector
by the piece of their angular bisector contained in their 
equidistant area~\cite{adk-flsvd-06}, see
Figure~\ref{fig:segment_bisector}. 
In both cases, the resulting families of bisecting curves   are 
admissible as they satisfy the required axioms.
Both definitions yield an identical $\CVD_k(S)$, however,
the corresponding refined diagrams $\CVD^*_k(S)$ can have different monochromatic features.

The maximal diagram $\mCVD_k(S)$ depends on which convention underlies the choice of $S$.
Under convention~(1) 
$\mCVD_k(S)$ 
reduces to a point maximal diagram $\mCVD_k(S')$, where $S'\subset S$ consists of
the vertices in $\mathcal{P}$; this is not the case under convention~(2).

\begin{figure}[hb]
    \centering
    \includegraphics[height=2.75cm]{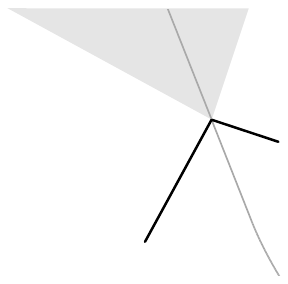}
    \caption{%
        The angular bisector of two segments over their equidistant area shown in
        gray. 
    }
    \label{fig:segment_bisector}
\end{figure}

\section{Combinatorial complexity}
\label{sec:complexity}

In this section, we
establish the upper bound $4k(n-k)-2n$, which is tight, for both
minimal and maximal order-$k$ abstract color Voronoi diagrams. 
Our proof is based
on two techniques: 
(1)~a random sampling technique, 
known as the \emph{Clarkson--Shor technique}~\cite{cs-arscgII-89} with a 
\emph{colorful} extension in~\cite{bop-hocvdccsf-25}; and (2)~a 
combinatorial analysis on the circular sequences of permutations induced by the unbounded
edges of the order-$k$ color Voronoi diagrams. 
%
%
Part~(1) adapts the technique from \cite{bop-hocvdccsf-25} to abstract Voronoi
diagrams and results in Lemma~\ref{lem:complexity}.
Ppart~(2) is detailed in Section~\ref{sec:SumU} and its results are 
summarized in Lemma~\ref{lem:sumU}.

We first introduce some notation.
Consider the \emph{new}  vertices and the \emph{new} unbounded edges of 
$\CVD^*_k(S)$ and~$\mCVD^*_k(S)$, $1\leq k\leq m$.
Recall that a  feature of an  order-$k$ diagram is called \emph{new} if it  does not
appear in the diagram of 
order-$(k{-}1)$; and \emph{old}, otherwise.
Their numbers are denoted as follows.

\begin{itemize}
\item Let $\bm{v_{c,k}} = v_{c,k}(S)$ and $\bar{v}_{c,k} = \bar{v}_{c,k}(S)$ denote  the number of
  new $c$-chromatic vertices of~$\CVD^*_k(S)$ and of~$\mCVD^*_k(S)$, respectively.
\item Let $\bm{u_{c,k}} = u_{c,k}(S)$ and $\bar{u}_{c,k} = \bar{u}_{c,k}(S)$ denote the number of
  new $c$-chromatic unbounded edges of~$\CVD^*_k(S)$ and of~$\mCVD^*_k(S)$, respectively.
  Note that an edge that is unbounded in both directions 
  is counted as two distinct unbounded edges.
  Also  $u_{c,m} = \bar{u}_{c,m} = 0$.
\end{itemize}


We now consider the first part of our proof  whose results are summarized
in Lemma~\ref{lem:complexity}.
The
Clarkson--Shor technique deals in general with a set system based on
three ingredients: a set of $n$ objects $O$, a set of configurations
$\conf(O)$,  each  defined by exactly $d$~objects, and
a conflict relation  $\chi \subseteq O \times \conf(O)$,
such that none of the $d$~objects defining~$f$ are in conflict with~$f$.
The set system defined by the triple~$(O, \conf(O), \chi)$ is called a
\emph{CS-structure}.

Given a CS-structure,
suppose that the objects in~$O$ are colored
by a color assignment~$\kappa \colon O \to K = \{1, \ldots, m\}$,
where $K$ is the set of $m\leq n$~colors.
In~\cite{bop-hocvdccsf-25}
a general framework was introduced, called the \emph{colorful Clarkson--Shor framework}, which
creates color-augmented
CS-structures $(K,\conf(O,\kappa), \chi_\kappa)$, such that 
the objects are  colors,
$\conf(O,\kappa)
\subseteq \conf(O)$ are \emph{colored configurations}, and  $\chi_\kappa$ is a
color-to-configuration conflict relation.
The color-augmented CS-structure remains a CS-structure, therefore, the general random sampling
technique of  Clarkson and Shor~\cite{cs-arscgII-89} can still  be applied.

\deleted{
    Bae, Oliver, and Papadopoulou~\cite{bop-hocvdccsf-25} introduced
    a general framework, called the \emph{colorful Clarkson--Shor framework}, 
    that, based on  $(O,\conf(O), \chi)$ and $\kappa$, derives
    \emph{colored configurations}~$\conf(O,\kappa) \subseteq \conf(O)$,
    and a color-to-configuration conflict relation~$\chi_\kappa$, 
    deriving  a color-augmented CS-structure
    in which colors in~$K$ are considered as \emph{objects}.
    The general random sampling framework by Clarkson and Shor~\cite{cs-arscgII-89}
    can then be applied to those colored configurations.
}

\deleted{
    Recall the following definition from Section~\ref{sec:complexity}.
    \begin{itemize}
        \item Let $v_{c,k} = v_{c,k}(S)$ and $\bar{v}_{c,k} = \bar{v}_{c,k}(S)$ denote  the number of
        new $c$-chromatic vertices of~$\CVD^*_k(S)$ and of~$\mCVD^*_k(S)$, respectively.
        \item Let $u_{c,k} = u_{c,k}(S)$ and $\bar{u}_{c,k} = \bar{u}_{c,k}(S)$ denote the number of
        new $c$-chromatic unbounded edges of~$\CVD^*_k(S)$ and of~$\mCVD^*_k(S)$, respectively.
        Note that an edge that is unbounded in both directions 
        is counted as two distinct unbounded edges.
        Note also that $u_{c,m} = \bar{u}_{c,m} = 0$.
    \end{itemize}
}

We first observe the following from well-known properties
of nearest and farthest abstract Voronoi diagrams~\cite{k-cavd-89,mmr-fsavd-01},
under the general position assumption.
\begin{restatable}{lemma}{vOuO} \label{lem:v0_u0}
    For any subset~$S'\subseteq S$, the following hold:
    \[ \sum_{1\leq c\leq 3} v_{c,1}(S') = 2|S'| - 2 - \sum_{1\leq c\leq 2} u_{c,1}(S')
    \quad\text{and}\quad
    \sum_{1\leq c\leq 3} \bar{v}_{c,1}(S') = \sum_{1\leq c\leq 2} \bar{u}_{c,1}(S') - 2.
    \]
\end{restatable}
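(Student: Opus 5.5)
The plan is to identify $\CVD^*_1(S')$ and $\mCVD^*_1(S')$ with the ordinary abstract diagrams and apply Euler's formula. For order $k=1$, Lemma~\ref{lem:overlay_CVDk_star} shows that $\CVD^*_1(S')$ is obtained by overlaying, inside each region $\CVR_1(\{a\},S')$, the diagram $\FCVD^*(S'_a)$ of the single colour $a$. This is just $\VD(S'_a)$ restricted to that region. Since $\CVR_1(\{a\},S')$ is the union of the closed regions $\cl\VR(p,S')$ for $p\in S'_a$, the graph $\CVD^*_1(S')$ coincides with $\VD(S')$. The edges of $\VD(S')$ between two sites of the same colour give the monochromatic edges, and the others give the bichromatic edges. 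Symmetrically, $\mCVD^*_1(S')$ overlays $\HVD^*(S'_a)=\FVD(S'_a)$ inside each $\mCVR_1(\{a\},S')$, so it coincides with $\FVD(S')$. Every feature of an order-$1$ diagram is new, so $\sum_c v_{c,1}(S')$ is the number of Voronoi vertices of $\VD(S')$, and $\sum_c u_{c,1}(S')$ is the number of its unbounded edges; the same holds for $\FVD(S')$. Only $c\le 2$ occurs for edges, since an edge is defined by two sites.

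For $\VD(S')$ I would use the standard abstract Voronoi diagram facts~\cite{k-cavd-89}. Under general position every vertex has degree $3$, and each of the $|S'|$ regions is nonempty and connected by (A1). Compactify using the curve $\Gamma$ at infinity: add one vertex $v_\infty$ at which all $u$ unbounded edges end. If $u\ge 1$, the resulting graph is connected and planar, with $v+1$ vertices, $E$ edges and $|S'|$ faces, so Euler's formula gives $v+1-E+|S'|=2$. Degree counting gives $2E=3v+u$. Eliminating $E$ yields $v=2|S'|-2-u$, which is the first identity. If some region were bounded, the same count still works because the faces are still the regions. The degenerate case where all bisectors are parallel unbounded curves and there are no vertices is also consistent: then $u=2(|S'|-1)$. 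Cases with $|S'|\le 1$ I would handle separately or exclude by convention.

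For $\FVD(S')$, this diagram is a tree~\cite{mmr-fsavd-01}, and every nonempty region is unbounded. Again adding $v_\infty$, I would treat the diagram as a tree whose $v$ internal vertices all have degree $3$ and whose $u$ leaves lie on $\Gamma$. A tree with $v$ internal vertices of degree $3$ has exactly $v+2$ leaves, so $v=u-2$, which is the second identity. The main point to check carefully is the degree-$3$ claim, including at infinity: I need that no vertex of $\FVD$ or $\VD$ has higher degree, which follows from the general position assumption that only three related bisectors meet at a point. Bisector crossings along $\Gamma$ do not create vertices, by the choice of $\Gamma$. Degree-$2$ points on colour bisectors are not vertices of the uncoloured diagrams, so they do not affect the count. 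I expect this identification of new order-$1$ coloured features with features of $\VD(S')$ and $\FVD(S')$ to be the only step needing care; the rest is bookkeeping.
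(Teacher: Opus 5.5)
Your proposal is correct and follows essentially the paper's route: you identify $\CVD^*_1(S')$ with $\VD(S')$ and $\mCVD^*_1(S')$ with $\FVD(S')$, then apply Euler's formula with degree-$3$ vertices, and use that $\FVD(S')$ is a tree for the second identity. A few points of precision. First, connectivity of $\VD(S')\cup\{v_\infty\}$ comes from the regions being simply connected (the fact the paper cites from Klein), not from $u\geq 1$. Second, in the $\FVD$ count the unbounded edges should end at distinct leaves on $\Gamma$ rather than at one common $v_\infty$, since otherwise the graph is no longer a tree. Third, your caveat on $|S'|\leq 1$ is apt: the second identity fails for $|S'|=1$, and the paper only uses the lemma for subsets with at least two colors.
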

\begin{proof}
    The quantities $\sum_{c} v_{c,1}(S')$ and $\sum_c u_{c,1}(S')$ count the vertices 
    and the unbounded edges, respectively, 
    of the nearest Voronoi diagram~$\CVD^*_1(S')=\VD(S')$. 
    The quantities $\sum_{c} \bar{v}_{c,1}(S')$ and $\sum_c \bar{u}_{c,1}(S')$ count the vertices 
    and the unbounded edges, respectively, 
    of the farthest  Voronoi diagram~$\mCVD^*_1(S') = \FVD(S')$. 
    Every region of~$\VD(S')$ is nonempty and simply connected~\cite{k-cavd-89}
    and every vertex is of degree three by the general position assumption.
    Euler's formula thus implies the first equation.
    Every region of~$\FVD(S')$ is unbounded and $\FVD(S')$ forms a tree~\cite{mmr-fsavd-01,bcklpz-choavd-15}.
    So, Euler's formula together with the general position assumption
    implies the second equation.
\end{proof}


\begin{restatable}{lemma}{complexity} \label{lem:complexity}
 For each~$1 \leq k \leq m-1$, let
 $\SumU_k := \sum_{1\leq j \leq k} (u_{2,j} + (k-j+1) u_{1,j})$
 and $\mSumU_k := \sum_{1\leq j \leq k} (\bar{u}_{2,j} + (k-j+1) \bar{u}_{1,j})$.
	The total number of vertices in $\CVD_k(S)$ is exactly
	\[ 2k(2n-k)-2n - \SumU_k - \SumU_{k-1}
	- 2\sum_{j=1}^{k-1} v_{2,j} - \sum_{j=1}^{k} (2k-2j+1)v_{1,j} \]
	and the number of vertices in $\mCVD_k(S)$ is exactly
	\[ \mSumU_k + \mSumU_{k-1} - 2k^2
	- 2\sum_{j=1}^{k-1}\bar{v}_{2,j}-\sum_{j=1}^{k}(2k-2j+1)\bar{v}_{1,j}.\]
\end{restatable}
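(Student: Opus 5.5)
The plan is to apply the colorful Clarkson--Shor framework to the refined diagrams. Take the objects to be the colors $K$, and the configurations to be the Voronoi vertices and the unbounded edges (more precisely, their points on $\Gamma$) of the refined diagrams $\CVD^*_j(S')$ over all $S'=\bigcup_{a\in K'}S_a$ with $K'\subseteq K$. A $c$-chromatic vertex or unbounded edge is defined by exactly $c$ colors. A color $b$ conflicts with a feature $x$ if $b$ is strictly nearer to $x$ (w.r.t.\ $\colpreceq_x$) than the defining colors. By Lemmas~\ref{lem:edge_types} and~\ref{lem:vertex_types}, a new $c$-chromatic feature of $\CVD^*_j(S)$ is exactly a feature with $j-1$ conflicting colors, and it appears as a new feature of $\CVD^*_1(S')$ exactly when $S'$ contains its $c$ defining colors and none of its conflicting colors.

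\textbf{Step 1 (counting identity).} Sum Lemma~\ref{lem:v0_u0} over all color subsets $K'$ of size $r$, for every $r$. The left side counts each new $c$-chromatic vertex of order $j$ once per subset containing its $c$ defining colors and avoiding its $j-1$ conflicting colors. Doing this for all $r$ yields a triangular linear system. Its solution, via the standard inversion by counting over orders, expresses $\sum_{j\le k}$ of the new vertex counts in terms of the unbounded-edge counts and $n$. I would instead use the more direct Lee-style bookkeeping route, which the same identity encodes. Apply Lemma~\ref{lem:v0_u0} to $S_H$ inside each region of $\CVD_k(S)$, which by Lemma~\ref{lem:overlay_CVDk_star} is a piece of $\FCVD(S_H)$, and combine with Euler's formula on $\CVD^*_k(S)$.

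\textbf{Step 2 (persistence).} By Lemma~\ref{lem:consecutive_orders}, a $c$-chromatic vertex that is new at order $j$ appears in $\CVD_k(S)$ iff $k\in\{j,\dots,j+c-2\}$. Hence
\[ \#V(\CVD_k(S)) = v_{3,k}+v_{3,k-1}+v_{2,k}\cdot 0+\dots, \]
that is, only the 3-chromatic vertices count, with $v_{3,k}+v_{3,k-1}$. The task therefore reduces to a closed formula for $v_{3,j}$. I will obtain $v_{3,j}$ by the Clarkson--Shor-type identity $\sum_c v_{c,\le}$ derived from Step 1. The key is to apply Euler's formula to $\CVD^*_j(S)$ and count vertices by chromaticity. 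Vertices of $\CVD^*_j(S)$ are the new ones at order $j$ together with old ones from orders $j-1$ and $j-2$, weighted as in Lemma~\ref{lem:consecutive_orders}. Its faces are counted via the faces of $\CVD^*_{j-1}$ and the unbounded edges. This gives a recursion of the form $\sum_c v_{c,j} = 2n-2 - \sum_c u_{c,j} + (\text{terms from orders } j-1,j-2)$, generalizing the first identity of Lemma~\ref{lem:v0_u0}. The weights $(k-j+1)$ on $u_{1,j}$ and $(2k-2j+1)$ on $v_{1,j}$ arise when this recursion is telescoped twice, since monochromatic features persist in exactly one refined diagram but are inherited in the face count of all later orders.

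\textbf{Step 3 (maximal case).} Repeat the same argument with the second identity of Lemma~\ref{lem:v0_u0} (the tree case), which replaces $2|S'|-2$ by $-2$ and flips the sign of the $u$ terms. The constant $-2k^2$ comes from summing $-2$ over the double telescoping.

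The main obstacle is Step 2: getting the face count of $\CVD^*_j(S)$ exactly right. This means tracking how monochromatic and bichromatic unbounded edges of earlier orders contribute, and checking that the double summation produces precisely $\SumU_k+\SumU_{k-1}$ and the coefficients $2$ and $(2k-2j+1)$. I would verify the final formula against the uncolored case $m=n$, where it must reduce to the known vertex count of $\VD_k(S)$, and against $k=1$.
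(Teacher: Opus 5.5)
\paragraph{The vertex count in Step 2 is wrong.} Your Step~2 miscounts the vertices of $\CVD_k(S)$. By the characterization you state yourself, a $2$-chromatic vertex that is new at order $j$ lies in $\CVD_k(S)$ exactly for $k=j$. These are the degree-$2$ vertices of the color bisectors, and they are counted as vertices. So the count is $v_{3,k}+v_{3,k-1}+v_{2,k}$, not $v_{3,k}+v_{3,k-1}$.

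This is not cosmetic. The term $-2\sum_{j\le k-1}v_{2,j}$ in the target formula arises precisely because this $v_{2,k}$ cancels the $v_{2,k}$ contained in $\SumV_k$. With your count the final formula would be off by $v_{2,k}$. For example, take one red site enclosed by blue sites. Then $\CVD_1(S)$ is a single closed curve with $v_{2,1}>0$ vertices and no $3$-chromatic ones.

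\paragraph{The substantive step is missing.} You set up the colorful Clarkson--Shor system correctly in Step~1, then abandon it for a Lee-style Euler count on $\CVD^*_j(S)$. There you only postulate a recursion and concede that deriving it is the main obstacle. The tools you invoke do not supply it:
\begin{itemize}
\item Lemma~\ref{lem:v0_u0} concerns $\VD(S')$ and $\FVD(S')$, not $\FCVD(S_H)$. Applying it ``inside each region'' of $\CVD_k(S)$ therefore gives nothing.
\item Euler's formula on $\CVD^*_j(S)$ requires its numbers of faces and connected components, and neither is controlled. In the example above, the blue region of $\CVD_1(S)$ has a hole. Moreover, $\CVD^*_2(S)=\FCVD^*(S)$ is a bounded component detached from $\Gamma$.
\end{itemize}

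The paper never applies Euler's formula above order~$1$. It takes the exact sampling identity
\[
\binom{m}{r}\E\Bigl[\sum_c (v_{c,1}+u_{c,1})(S_R)\Bigr]=\sum_c\sum_j (v_{c,j+1}+u_{c,j+1})\binom{m-c-j}{r-c}.
\]
Using Lemma~\ref{lem:v0_u0} on each subset, the left side equals $\sum_{|R'|=r}(2|S_{R'}|-2)=2\binom{m-1}{r-1}n-2\binom{m}{r}$. In the maximal case, with $\bar v-\bar u$, it equals $-2\binom{m}{r}$. Solving the resulting triangular system for $2\le r\le m$ gives
\[
\SumV_k+\SumU_k=k(2n-k-1) \quad\text{and}\quad \mSumV_k-\mSumU_k=-k(k+1),
\]
where $\SumV_k=v_{3,k}+\sum_{j\le k}(v_{2,j}+(k-j+1)v_{1,j})$ and $\mSumV_k$ is defined analogously. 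Adding the identities for $k$ and $k-1$, and using the correct vertex count $v_{3,k}+v_{3,k-1}+v_{2,k}$, yields both formulas.

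The weights $2k-2j+1=(k-j+1)+(k-j)$ come from this addition. So do the constants $2k(2n-k)-2n=k(2n-k-1)+(k-1)(2n-k)$ and $-2k^2=-k(k+1)-(k-1)k$. None of them come from a ``double telescoping'' of face counts. You should complete Step~1 this way rather than switch routes.
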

\begin{proof}
    Following~\cite{bop-hocvdccsf-25}, we first build two ordinary CS-structures
    for Voronoi vertices.
    
    Let $\mathcal{V}= \mathcal{V}(S)$ be the set
    of all 
    Voronoi vertices created by any triplet of sites, i.e.,
    vertices of~$\VD(\{p,q,r\})$ for any three distinct sites~$p,q,r\in S$,
    regardless of colors. Recall that there are at most two possible Voronoi vertices for each triplet of
    sites~\cite{k-cavd-89}. 
    The set $\mathcal{V}$ is the set of configurations.
    Consider two conflict relations~$\chi, \bar{\chi} \subseteq S \times \mathcal{V}$
    defined as follows: for each~$p\in S$ and $v\in \mathcal{V}$ whose defining sites are~$q_1,q_2, q_3\in S$,
    \[
    (p,v) \in \chi~\text{iff}~p \sprec_v q_1 \sequiv_v q_2 \sequiv_v q_3
    \quad\text{and}\quad
    (p,v) \in \bar{\chi}~\text{iff}~q_1 \sequiv_v q_2 \sequiv_v q_3 \sprec_v p.
    \]
    The set systems~$(S, \mathcal{V}, \chi)$ and $(S, \mathcal{V}, \bar{\chi})$
    are CS-structures.
    
    We then consider the color assignment~$\kappa \colon S \to K$ and derive two color-augmented CS-structures.
    Let $\chi_\kappa, \bar{\chi}_\kappa \subseteq K \times \mathcal{V}$ 
    be the corresponding  color 
    conflict relations with the following property:
    for each color~$a\in K$ and~$v\in\mathcal{V}$ defined by three sites~$q_1, q_2, q_3 \in S$,
    \[
    (a, v) \in \chi_\kappa~\text{iff}~a \colprec_v 
    \kappa(q_1) \colequiv_v \kappa(q_2) \colequiv_v \kappa(q_3)
    \;\text{and}\;
    (a, v) \in \bar{\chi}_\kappa~\text{iff}~a \mcolprec_v
    \kappa(q_1) \mcolequiv_v \kappa(q_2) \mcolequiv_v \kappa(q_3).
    \]
    
    By Lemma~\ref{lem:vertex_types}, 
    $v$ is a new $c$-chromatic vertex of~$\CVD^*_{k}(S)$
    if and only if
    $v$ is a colored configuration induced by~$(S, \mathcal{V}, \chi)$
    such that $v$ is in conflict with exactly $k-1$~colors and
    $c$ is equal to the number of distinct colors of the three
    sites defining~$v$.
    Analogously,
    $v$ is a new $c$-chromatic vertex of~$\mCVD^*_{k}(S)$
    if and only if
    $v$ is a colored configuration induced by~$(S, \mathcal{V}, \bar{\chi})$
    such that $v$ is in conflict with exactly $k-1$~colors and
    $c$ is equal to the number of distinct colors of three sites defining~$v$.
    Hence, the two color-augmented structures describe
    the new vertices of 
    $\CVD^*_k(S)$ and $\mCVD^*_k(S)$.
    
    Similarly, we can build two CS-structures encoding  the new unbounded edges
    of~$\CVD^*_k(S)$ and of~$\mCVD^*_k(S)$. 
    Recall the  closed simple curve~$\Gamma$ enclosing $\CVD^*_k(S)$
    and~$\mCVD^*_k(S)$, defined in Section~\ref{subsec:perm}.
    Let $\mathcal{U} = \mathcal{U}(S)$ be the set of all intersection points
    between the bisectors in 
    $\Bisectors$ and $\Gamma$.
    In this case, the set $\mathcal{U}$ is the set of configurations.
    Each configuration $u\in\mathcal{U}$ is defined by a unique pair of sites.

    Consider two conflict relations $\chi', \bar{\chi}' \subseteq S \times \mathcal{U}$:
   for each~$p\in S$ and $u\in \mathcal{U}$ defined by $q_1, q_2\in S$, 
    \[
    (p,u) \in \chi'~\text{iff}~p \sprec_u q_1 \sequiv_u q_2
    \quad\text{and}\quad
    (p,u) \in \bar{\chi}'~\text{iff}~q_1 \sequiv_u q_2 \sprec_u p.
    \]
    The set systems defined by the triplets~$(S, \mathcal{U}, \chi')$ and $(S, \mathcal{U}, \bar{\chi}')$
    form two CS-structures.
    The corresponding colored configurations induced by the color assignment~$\kappa$ for~$S$
    describe the new $c$-chromatic unbounded edges in~$\CVD^*_k(S)$ and~$\mCVD^*_k(S)$, respectively.
    Specifically,
    let $\chi'_\kappa, \bar{\chi}'_\kappa$ be the color-conflict relations 
    corresponding to $\chi'$ and  $\bar{\chi}'$, respectively.
    For each~$a\in K$ and~$u\in\mathcal{U}$ defined by two sites~$q_1, q_2 \in S$,
    \[
    (a, u) \in \chi'_\kappa~\text{iff}~a \colprec_u \kappa(q_1) \colequiv_u \kappa(q_2) 
    \quad\text{and}\quad
    (a, u) \in \bar{\chi}'_\kappa~\text{iff}~a \mcolprec_u \kappa(q_1) \mcolequiv_u \kappa(q_2).
    \]
    By 
    Lemma~\ref{lem:edge_types},
    we observe that a  $c$-chromatic unbounded edge $e$ in~$\CVD^*_{k}(S)$ is
    new 
    if and only if the configuration 
    $u = e \cap \Gamma$ 
    is in conflict with exactly $k-1$~colors and
    $c$ is equal to the number of distinct colors of the two sites defining~$u$.
    Analogously,
    $u = e \cap \Gamma$  is a new $c$-chromatic unbounded edge of~$\mCVD^*_{k}(S)$
    if and only if
    $u$ is in conflict with exactly $k-1$~colors and
    $c$ is equal to the number of distinct colors of the two sites defining~$u$.
    Hence, the two 
    color-augmented structures describe
    the new unbounded edges of the refined diagrams~$\CVD^*_k(S)$ and $\mCVD^*_k(S)$.
    
    
    From the lower bound lemma~\cite[Lemma~2]{bop-hocvdccsf-25} (see also~\cite{cs-arscgII-89}),
    we have the following:
    For any integer~$1\leq r \leq m$ and a random subset~$R\subseteq K$ of $r$~colors,
    \[ \binom{m}{r} \E[v_{c,1}(S_R)] = \sum_{j=0}^{m-c} v_{c,j+1}\binom{m-c-j}{r-c}
    \,\text{and}\,
    \binom{m}{r} \E[\bar{v}_{c,1}(S_R)] = \sum_{j=0}^{m-c} \bar{v}_{c,j+1}\binom{m-c-j}{r-c}
    \]
    for each $c \in \{1,2,3\}$, where $S_R = \bigcup_{a\in R} S_a$,
    and
    \[ \! \binom{m}{r} \E[u_{c,1}(S_R)] = \sum_{j=0}^{m-c} u_{c,j+1}\binom{m-c-j}{r-c}
    \,\text{and}\,
    \binom{m}{r} \E[\bar{u}_{c,1}(S_R)] = \sum_{j=0}^{m-c} \bar{u}_{c,j+1}\binom{m-c-j}{r-c}
    \]
    for each $c\in\{1,2\}$.
    Hence, on one hand, we have
    \[ \binom{m}{r}\left(\sum_{c=1}^3 \E[v_{c,1}(S_R)] + \sum_{c=1}^2 \E[u_{c,1}(S_R)]\right)
    = \sum_{c=1}^3 \sum_{j=0}^{m-c} (v_{c,j+1}+u_{c,j+1})\binom{m-c-j}{r-c}\]
    and
    \[ \binom{m}{r}\left(\sum_{c=1}^3 \E[\bar{v}_{c,1}(S_R)] - \sum_{c=1}^2 \E[\bar{u}_{c,1}(S_R)]\right)
    = \sum_{c=1}^3 \sum_{j=0}^{m-c} (\bar{v}_{c,j+1}-\bar{u}_{c,j+1})\binom{m-c-j}{r-c},\]
    where we define $u_{3,j} = \bar{u}_{3,j} = 0$ for all $j$.
    
    On the other hand, 
    the upper bound lemma~\cite[Lemma~3]{bop-hocvdccsf-25}, together with 
    Lemma~\ref{lem:v0_u0}, implies that
    \begin{align*}
        \binom{m}{r}\left(\sum_{c=1}^3 \E[v_{c,1}(S_R)] + \sum_{c=1}^2 \E[u_{c,1}(S_R)]\right)
        &= \sum_{\substack{R'\subseteq K\\ |R'| = r}}\sum_{c=1}^3 (v_{c,1}(S_{R'}) + u_{c,1}(S_{R'})) \\
        &= \sum_{R'} (2|S_{R'}| - 2) \\
        & = 2\binom{m-1}{r-1}n - 2\binom{m}{r},
    \end{align*}
    for any $2\leq r\leq m$. 
    Similarly, we also have
    \[
    \binom{m}{r}\left(\sum_{c=1}^3 \E[\bar{v}_{c,1}(S_R)] - \sum_{c=1}^2 \E[\bar{u}_{c,1}(S_R)]\right)
    = - 2\binom{m}{r},
    \]
    for any $2\leq r\leq m$.
    
    These two sets of equations for~$2\leq r\leq m$
    form two systems of linear equations, which are analogous 
    to those in~\cite[Lemma~13]{bop-hocvdccsf-25}.
    So, letting
    \begin{align*}
        \SumV_k := v_{3,k} + \sum_{j=1}^k (v_{2,j} + (k-j+1) v_{1,j}), \qquad
        & \mSumV_k := \bar{v}_{3,k} + \sum_{j=1}^k (\bar{v}_{2,j} + (k-j+1) \bar{v}_{1,j}), \\
        \SumU_k := \sum_{j=1}^k (u_{2,j} + (k-j+1) u_{1,j}), \quad\text{and}\quad
        & \mSumU_k := \sum_{j=1}^k (\bar{u}_{2,j} + (k-j+1) \bar{u}_{1,j}),
    \end{align*}
    we obtain
    \[ 
    \SumV_k + \SumU_k = k(2n-k-1) \quad\text{and}\quad \mSumV_k - \mSumU_k = -k(k+1),
    \]
    for each~$1\leq j \leq m-1$.
    From Lemma~\ref{lem:consecutive_orders}, 
    for each~$1\leq k\leq m-1$,
    we know that the numbers of vertices of~$\CVD_k(S)$ and of~$\mCVD_k(S)$ are equal to
    \[ v_{3,k} + v_{3,k-1} + v_{2,k} \quad\text{and}\quad 
    \bar{v}_{3,k} + \bar{v}_{3,k-1} + \bar{v}_{2,k},\]
    respectively, where  $v_{3, 0} = \bar{v}_{3, 0} = 0$.
    Therefore, the claimed equations are derived analogously to~\cite[Theorem~14]{bop-hocvdccsf-25}.
\end{proof}

\deleted{
Note that the equations derived in Lemma~\ref{lem:complexity} are identical to
those in \cite{bop-hocvdccsf-25}.
The proof closely follows the colorful Clarkson--Shor framework, defining
appropriate CS-structures for the order-$k$ abstract color Voronoi diagrams.
How to apply the Clarkson--Shor technique to ordinary order-$k$ abstract Voronoi diagrams was an open question in \cite{bcklpz-choavd-15}. Our result, for the uncolored case ($m=n$), settles this question.
}

Lemma~\ref{lem:complexity}
reduces the problem of bounding
the  combinatorial complexity of~$\CVD_k(S)$ and~$\mCVD_k(S)$
to the problem of bounding two quantities~$\SumU_k$ and $\mSumU_k$
regarding the number of new unbounded edges in the refined diagrams of orders~$1$
to~$k$.
These equations were also derived in \cite{bop-hocvdccsf-25}.
However, in \cite{bop-hocvdccsf-25} the two quantities
$\SumU_k$ and $\mSumU_k$ were assumed equal, which holds in the case of  point sites
under convex distances. 
However, equality 
does not hold in general for other types of sites or metrics.
In this paper  we remove this assumption
by establishing the following lemma 
whose proof is deferred to  Section~\ref{sec:SumU}.



\deleted{
that the equality $u_{c,j}(S') =
\bar{u}_{c,j}(S') = e_{c,j}(S')$ holds for
any $S'\subseteq S$; then  the quantities
$\SumU_j$ and $\mSumU_j$ are equal, and Lemma~\ref{lem:complexity}  already
implies the  desired bound.
This condition holds in simple cases where related bisecting curves can
intersect only once, such as when the sites are points in the Euclidean and other metrics.
However, this equality assumption does not hold for line segments or disks, in
general non-point sites,  nor for  abstract Voronoi diagrams.
It is thus essential to bound the  quantities $\SumU_j$ and $\mSumU_j$
to derive a  bound in generality.
}



\begin{restatable}{lemma}{sumU}\label{lem:sumU}
 For each~$1\leq k\leq m-1$, 
 $\SumU_k \geq k(k+1)$ and $\mSumU_k \leq k(2n-k-1)$.
 Both bounds are tight.
\end{restatable}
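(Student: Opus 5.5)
We translate $\SumU_k$ and $\mSumU_k$ into counts of switches in the circular sequence $\Pi(S)$ of site permutations along $\Gamma$ (properties (P1) and (P2)), and then carry out a colored version of the counting of~\cite{bcklpz-choavd-15}. Walking along $\Gamma$, each point $u\in\mathcal{U}$ is a switch of two consecutive sites $q_1,q_2$ in some $\pi_i$. By the proof of Lemma~\ref{lem:complexity} and Lemma~\ref{lem:edge_types}, $u$ is a new $c$-chromatic unbounded edge of $\CVD^*_j(S)$ exactly when $c$ is the number of distinct colors of $q_1,q_2$ and exactly $j-1$ colors appear strictly before the switching pair in $\pi_i$. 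Here a color appears before the pair if some site of that color precedes it. In addition, for $c=1$ the pair must be the first occurrence of its color, and for $c=2$ the pair must consist of the first occurrences of both colors.

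\textbf{Reformulation.} For a permutation $\pi$, define its \emph{color prefix profile} as the sequence of first occurrences of colors, i.e.\ the order $a_1,\dots,a_m$ of $\colpreceq$ restricted to $\Gamma$. The weighted sum $\SumU_k=\sum_{j\le k}(u_{2,j}+(k-j+1)u_{1,j})$ then becomes a sum over switches. A bichromatic switch exchanging the first occurrences of the $j$-th and $(j{+}1)$-th colors contributes $1$ if $j\le k$. A monochromatic switch replacing the first occurrence of the $j$-th color contributes $k-j+1$, which equals the number of orders $l\in\{j,\dots,k\}$ for which that color lies among the first $l$ colors. Equivalently, $\SumU_k=\sum_{l=1}^{k}W_l$, where $W_l$ counts the switches that change the set or the order data of the first $l$ colors in the color profile. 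We then prove $W_l\ge 2l$ for each $l$, which sums to $k(k+1)$. For the maximal side we use the reversed order ($\bar\chi'$ conflicts with sites ranked after), so $\mSumU_k$ counts the analogous switches measured from the \emph{last} occurrences of colors. The claim $\mSumU_k\le k(2n-k-1)$ then reduces to bounding each such switch by an ordinary switch at a position within the last $l$ slots, and comparing with the uncolored upper bound $k(2n-k-1)$ for switches in the first $k$ positions of~\cite{bcklpz-choavd-15}, applied to the reversed sequence.

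\textbf{Lower bound.} Fix $l\le m-1$. Over one full circuit of $\Pi(S)$, every pair of sites switches twice, so every site ends where it started. We argue that the prefix of the first $l$ colors must change enough times. For each of the first $l$ colors $a$ and each color $b$ among the remaining ones, consider the sites of $S_a\cup S_b$. Along $\Gamma$, the nearest site among them determines which of $a,b$ is ahead, and this changes along $\Gamma$ an even number of times. We use admissibility, namely that $\VD(S_a\cup S_b)$ has unbounded regions of both colors, so $\Bisector{a}{b}$ meets $\Gamma$. More precisely, there is a bichromatic unbounded edge of $\CVD^*_1(S_a\cup S_b)$ unless one color dominates everywhere at infinity. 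In that case the dominating color must still be overtaken by monochromatic switches, which carry the weight $(k-j+1)$. This mirrors the argument of~\cite{bcklpz-choavd-15} that the first $k$ positions admit at least $k(k+1)$ switches: every element must visit each position, so each prefix boundary is crossed at least twice per element entering the prefix. The same reasoning in colored form gives $W_l\ge 2l$. Alternatively, $\SumU_k\ge k(k+1)$ can be derived from Lemma~\ref{lem:complexity}, since the vertex count of $\CVD_k$ is nonnegative and so $\SumU_k$ is controlled from above. This does not help the lower bound, so we prefer the direct counting.

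\textbf{Upper bound and tightness.} For $\mSumU_k$, we charge each counted switch to a distinct switch of $\Pi(S)$ in one of the last $l$ positions, injectively over pairs $(\text{switch},l)$. Summing gives at most the number of switches in the last $k$ positions, counted with multiplicity. Since every pair switches exactly twice and switches at position $n-i$ involve pairs not both among the other elements, this is at most $\sum_{i=1}^{k}2(n-i)=k(2n-k-1)$. For tightness, we use point sites in the plane, where $\SumU_k=\mSumU_k$, together with configurations from~\cite{bop-hocvdccsf-25}: all points in convex position with $m=n$ attain $k(k+1)$ and $k(2n-k-1)$ for suitable placements. We expect the main obstacle to be the lower bound, and in particular making rigorous that monochromatic switches compensate with exactly the weight $(k-j+1)$ when a bichromatic boundary is absent at infinity.
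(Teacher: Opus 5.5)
Your translation of unbounded edges into switches of the color-leader sequence (first occurrences of colors, or last occurrences for the maximal diagrams) is correct; it is the paper's Lemma~\ref{lem:U-G}. There are, however, genuine gaps: both counting arguments rest on per-position inequalities that are false, and the tightness argument fails. As worded, your $W_l$ is not well defined. A bichromatic switch at position $j$ changes the set or the order of the first $l$ colors for every $l\ge j$, so it would get weight $k-j+1$ rather than $1$, while a monochromatic switch changes neither. The only decomposition that matches $\SumU_k$ is $W_l=u_{2,l}+\sum_{j\le l}u_{1,j}$, and for it $W_l\ge 2l$ fails. Take three points forming a triangle in the Euclidean plane, all of distinct colors ($n=m=3$). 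There are exactly three switches at position $2$ (the unbounded edges of $\FVD(S)$), so $W_2=3<4$, even though $W_1+W_2=6=k(k+1)$. Likewise, if each element in turn travels to the front and back, position $l$ carries only $2(n-l)$ switches. So $k(k+1)$ holds only in aggregate. Your supporting claims are also false: in that sequence element $1$ only ever occupies positions $1$ and $2$, and a point of color $b$ inside a triangle of color-$a$ points has no unbounded region in $\VD(S_a\cup S_b)$.

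The missing idea for the lower bound is a per-color charge that trades bichromatic against monochromatic switches. Charge a monochromatic switch to its color and a bichromatic one to the color of smaller initial rank, and let $\rho^*(a)$ be the maximum rank of $a$. If $a$ has at least two elements, its at least two leader replacements happen at ranks $\le\rho^*(a)$ and have total weight at least $2(k+1-\rho^*(a))$. The colors of larger initial rank that are ahead of $a$ when it reaches rank $\min\{k,\rho^*(a)-1\}+1$ must overtake $a$ and later be overtaken back. This gives at least $2(\min\{k,\rho^*(a)-1\}+1-a)$ bichromatic switches at positions $\le k$. Together these give $2\max\{k+1-a,0\}$ per color (singleton colors need a separate easy argument), which sums to $k(k+1)$.

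For the upper bound, the per-position bound $2(n-i)$ is false for the same reason. In the triangle, position $1$ carries three switches, while your bound with $i=2$ allows only two; only the total over the first $k$ positions is at most $k(2n-k-1)$. The injection of pairs (switch,\,$l$) into distinct switches of $\Pi(S)$ is also never constructed. A monochromatic switch at leader rank $j$ has weight $k-j+1$ but is a single switch of $\Pi(S)$, so you must say where the remaining weight is paid from. The paper charges each switch to its larger element $t$, with $\pi_0=(1,\ldots,n)$. The largest element of a color is its leader during at most one contiguous stretch, so it makes at most two monochromatic switches, at leader ranks $j_1,j_2$. Before and after that stretch it passes $a-j_1$ and $a-j_2$ leaders without producing bichromatic switches. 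This deficit exactly cancels $(k-j_1+1)+(k-j_2+1)$, giving at most $2k$ per element and at most $2\min\{a-1,k\}$ for the initial leader of color $a$, hence $k(2n-k-1)$ in total.

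Finally, tightness cannot come from point sites. For $n\ge 3$ points in general position, $\SumU_1$ and $\mSumU_1$ are the numbers of unbounded edges of $\VD(S)$ and $\FVD(S)$. Both equal the hull size $h$, with $3\le h\le n$, so neither $2$ nor $2n-2$ is attained; convex position with $m=n$ even gives $\SumU_k=kn$. You need explicit circular sequences satisfying (P1) and (P2) that attain the bounds, such as each element in turn moving to the front and back, and respectively to the end and back. These are then realized by admissible bisector systems via \cite[Lemma~10]{bcklpz-choavd-15}.
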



By combining Lemma~\ref{lem:complexity} and Lemma~\ref{lem:sumU} we obtain the following
combinatorial result.

\begin{restatable}{theorem}{thmcomplexity} \label{thm:complexity}
 Let $S$ be a set of $n$~colored sites with $m\leq n$~colors.
 Given an admissible system~$\Bisectors$ of bisecting curves for~$S$ 
 and an integer~$k$ with~$1\leq k\leq m-1$,
 the total number of vertices in the order-$k$ abstract color Voronoi diagram~$\CVD_k(S)$ or~$\mCVD_k(S)$ 
 is at most $4k(n-k)-2n$.
 Moreover, this bound is tight.
\end{restatable}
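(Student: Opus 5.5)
The plan is to plug the bounds of Lemma~\ref{lem:sumU} into the exact vertex counts of Lemma~\ref{lem:complexity}. For the minimal diagram, I drop the nonpositive terms involving $v_{2,j}$ and $v_{1,j}$, since these counts are nonnegative. For the maximal diagram I do the same with $\bar v_{2,j}$ and $\bar v_{1,j}$. Everything then reduces to bounding $\SumU_k+\SumU_{k-1}$ from below and $\mSumU_k+\mSumU_{k-1}$ from above.

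For $\CVD_k(S)$, Lemma~\ref{lem:sumU} gives $\SumU_k\ge k(k+1)$ and $\SumU_{k-1}\ge (k-1)k$, using $\SumU_0=0$ when $k=1$. The number of vertices is therefore at most
\[ 2k(2n-k)-2n-k(k+1)-k(k-1) = 4kn-2k^2-2n-2k^2 = 4k(n-k)-2n. \]
For $\mCVD_k(S)$, Lemma~\ref{lem:sumU} gives $\mSumU_k\le k(2n-k-1)$ and $\mSumU_{k-1}\le (k-1)(2n-k)$. Their sum is $4kn-2k^2-2n$, and subtracting $2k^2$ again yields $4k(n-k)-2n$. The only care needed is the edge case $k=1$: I must confirm that $\SumU_0=\mSumU_0=0$ is consistent with how Lemma~\ref{lem:complexity} was derived, which it is since the sums are empty.

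For tightness, I will use the uncolored case $m=n$, where $\CVD_k(S)=\VD_k(S)$. Every site has its own color, so there are no monochromatic features and $v_{1,j}=v_{2,j}=0$. It then suffices to exhibit a configuration in which $\SumU_j=j(j+1)$ for $j=k-1,k$, which is what the tightness part of Lemma~\ref{lem:sumU} provides.

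The natural witness is $n$ points in convex position in the Euclidean plane, for which $\VD_k$ has exactly $k(k+1)$-type unbounded edge counts. Lee's count of vertices of $\VD_k$ for points in convex position is $2k(n-k)-n$. This is consistent with $4k(n-k)-2n$ only if $4k(n-k)-2n$ counts something else, so at this step I need to recheck the normalization and look for a colored construction instead. I would follow the construction of \cite{bop-hocvdccsf-25}: points under a convex distance function, where that paper already shows the bound $4k(n-k)-2n$ is attained. Those point configurations fall under the AVD umbrella, and there $\SumU_k=\mSumU_k$ with equality in Lemma~\ref{lem:sumU}. Citing that construction, together with the fact that its bisector system is admissible, gives tightness for both $\CVD_k$ and $\mCVD_k$. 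The main obstacle is exactly this tightness witness: checking that the extremal example meets both bounds of Lemma~\ref{lem:sumU} with equality while also having no $1$- or $2$-chromatic new vertices in lower orders.
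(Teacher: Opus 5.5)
Your upper-bound computation is correct and is the paper's argument. You drop the nonnegative $v_{1,j},v_{2,j}$ (resp.\ $\bar v_{1,j},\bar v_{2,j}$) terms of Lemma~\ref{lem:complexity} and apply Lemma~\ref{lem:sumU} at both $k$ and $k-1$, with $\SumU_0=\mSumU_0=0$.

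The gap is in the tightness part. You start on the right track ($m=n$), but then replace it with witnesses that cannot work.
\begin{itemize}
\item Points in convex position are the opposite extreme. They have exactly $n$ switches at every position, so $\SumU_k=kn$, and the formula returns Lee's $2k(n-k)-n$. There is no normalization issue: $4k(n-k)-2n$ really does count the vertices of $\VD_k(S)$ when $m=n$.
\item Point sites under a convex distance function are no better. Their sequence at infinity is a reparametrized allowable sequence of the point set, so $\SumU_k=\mSumU_k$. ``Equality in Lemma~\ref{lem:sumU}'' for both quantities would then force $k(k+1)=k(2n-k-1)$, which is impossible for $k<n-1$.
\item Even one-sided equality fails for point sites. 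For $k=1$, $\SumU_1$ is the number of unbounded edges of $\VD(S)$, which is at least $3$ for non-collinear points. This gives $|\CVD_1(S)|\le 2n-5$. Likewise $\mSumU_1\le n$ gives $|\mCVD_1(S)|\le n-2$. Both are below $2n-4$ for $n\ge 3$.
\end{itemize}
So citing a point-site construction cannot give tightness for every $k$.

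The missing ingredient is that the extremal diagrams are genuinely abstract: they come from the realizability of circular sequences. Complete your first idea as follows.
\begin{itemize}
\item Take $m=n$. Then all of $v_{1,j},v_{2,j},\bar v_{1,j},\bar v_{2,j}$ vanish, and Lemma~\ref{lem:complexity} becomes an exact formula in $\SumU_k+\SumU_{k-1}$ (resp.\ $\mSumU_k+\mSumU_{k-1}$).
\item The sequences of Lemma~\ref{lem:Gk_tight} with $m=n$ satisfy $G_j(\Pi'_{n,n})=j(j+1)$ and $G_j(\Pi_{n,n})=j(2n-j-1)$ for all $j$ simultaneously, in particular for $j=k-1$ and $j=k$ at once.
\item By \cite[Lemma~10]{bcklpz-choavd-15}, every circular sequence satisfying (P1) and (P2) is realized at infinity by an admissible bisector system. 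Realize $\Pi'_{n,n}$ for the minimal diagram. For the maximal diagram, realize the reversed sequence $\overline{\Pi_{n,n}}$, because $\mSumU_k$ is read off $\Lambda(\overline{\Pi}(S))$.
\item Lemma~\ref{lem:U-G} then converts these into $\SumU_j=j(j+1)$ and $\mSumU_j=j(2n-j-1)$ for $j=k-1,k$. The exact formula evaluates to $4k(n-k)-2n$ in both cases.
\end{itemize}
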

\begin{proof}
    By Lemma~\ref{lem:complexity},
    the number of vertices in~$\CVD_k(S)$ is 
    \[
    2k(2n-k) - 2n - \SumU_k - \SumU_{k-1} - 2\sum_{j=1}^{k-1} v_{2,j} - \sum_{j=1}^k(2k-2j+1)v_{1,j}
    \leq 4k(n-k) - 2n
    \]
    since $\SumU_k \geq k(k+1)$ as shown in Lemma~\ref{lem:sumU}.
    The number of vertices in~$\mCVD_k(S)$ is
    \[
    \mSumU_k + \mSumU_{k-1} - 2k^2 - 2\sum_{j=1}^{k-1} v_{2,j} - \sum_{j=1}^k(2k-2j+1)v_{1,j}
    \leq 4k(n-k) - 2n
    \]
    since $\mSumU_k \leq k(2n-k-1)$ as shown in Lemma~\ref{lem:sumU}.
    
    The tightness of the upper bound~$4k(n-k)-2n$ follows 
    from the tightness of the bounds in Lemma~\ref{lem:sumU}
    and the fact that the number of $c$-chromatic vertices, $c<3$,  is zero
    if $n=m$. 
\end{proof}

The bound extends also (asymptotically) to the refined diagrams  $\CVD^*_k(S)$
and $\mCVD^*_k(S)$ as  $O(k(n-k))$,  following Lemma~\ref{lem:overlay_CVDk_star} and the fact that the
monochromatic vertices are in total $O(n)$.

\deleted{
Further, for the case of~$k=m-1$, Theorem~\ref{thm:complexity} implies the following.
\begin{corollary} \label{coro:fcvd_hvd}
 The abstract farthest color Voronoi diagram and the abstract Hausdorff Voronoi diagram of~$S$
 have combinatorial complexity~$O(m(n-m+1))$.
\end{corollary}
}

Although the bound of Theorem~\ref{thm:complexity} is tight,
$\FCVD(S)$ and $\HVD(S)$ may as well have linear complexity in special cases.
In such cases  the $O(k(n-k))$ bound is not tight for large $k$.
We obtain a sharper asymptotic bound  
for $m-k \in O(\sqrt{m})$ in the
following theorem. 

\evanthia{I would like to have a look in the following proof when getting a
  chance.}

\begin{restatable}{theorem}{lastorderscomplexity} \label{thm:last_orders_complexity}
    Given a set~$S$ of $n$~colored sites with $m\leq n$~colors
    and an admissible system~$\Bisectors$ of bisecting curves for~$S$,
    if $\FCVD(S')$ (resp. $\HVD(S')$) has linear complexity for any $S' \subseteq S$, then for each $\lceil \frac{2m}{3} \rceil +1 \leq k\leq m-1$ the total combinatorial complexity of $\CVD_{k}^*(S), \ldots, \CVD_{m}^*(S)$ (resp. $\mCVD_{k}^*(S), \ldots, \mCVD_{m}^*(S)$) is $O((m-k)^2 n)$.
\end{restatable}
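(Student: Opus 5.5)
The plan is to run the Clarkson--Shor argument ``from the top'': sample a random subset of colors and charge every feature of $\CVD^*_j(S)$, $j\ge k$, to the \emph{last-order} diagram $\FCVD^*(S_R)$ of the sample, whose complexity is linear by hypothesis. I use the color-augmented CS-structures from the proof of Lemma~\ref{lem:complexity}, with configurations $\mathcal{V}$ (vertices) and $\mathcal{U}$ (points on~$\Gamma$), together with the edge configurations defined by pairs of sites. A configuration $v$ defined by sites of $c$ distinct colors, with $c\le 3$, and in conflict with exactly $j-1$ colors has
\[
\ell := m-c-(j-1)
\]
\emph{non-conflicting} colors. By Lemmas~\ref{lem:edge_types}, \ref{lem:vertex_types} and~\ref{lem:consecutive_orders}, every feature of $\CVD^*_{j'}(S)$ for $k\le j'\le m$ is such a configuration with $j\ge k-2$. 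Hence $\ell\le m-k+2$, so the features to count are exactly the configurations with $\ell \le L := m-k+2$. The maximal case, using $\bar{\chi}_\kappa$ and $\HVD$, is identical and I would only remark on it.

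\textbf{Step 1 (reverse sampling lemma).} Let $R\subseteq K$ be a random sample in which each color is kept independently with probability~$p$. A configuration with parameters $(c,\ell)$ is a feature of the top-order refined diagram $\CVD^*_{|R|}(S_R)=\FCVD^*(S_R)$ if and only if its $c$ defining colors lie in~$R$ and none of its $\ell$ non-conflicting colors do. Indeed, the order of colors at $v$ restricted to $R$ then places the defining colors last, and the characterizations of Lemmas~\ref{lem:edge_types} and~\ref{lem:vertex_types} apply verbatim to $S_R$. This happens with probability $p^c(1-p)^\ell$. By hypothesis, and since the monochromatic part of $\FCVD^*$ is linear, the complexity of $\FCVD^*(S_R)$ is at most $C\,|S_R|$. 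Also $\E|S_R| = pn$. Therefore
\[
\sum_{c,\ell} N_{c,\ell}\, p^c (1-p)^\ell \;\le\; C\,p\,n ,
\]
where $N_{c,\ell}$ is the number of configurations of type $(c,\ell)$.

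\textbf{Step 2 (choice of $p$).} Set $p = 1/L$. For all $\ell\le L$ we have $(1-p)^\ell \ge (1-1/L)^L \ge 1/4$ once $L\ge 2$, and $p^c\ge p^3$. So
\[
\sum_{c}\sum_{\ell\le L} N_{c,\ell} \;\le\; 4C\,n\,p^{1-3} \;=\; 4C\,L^2 n \;=\; O((m-k)^2 n).
\]
Each configuration yields a feature in at most three consecutive refined orders (Lemma~\ref{lem:consecutive_orders}), so this bounds the total complexity of $\CVD^*_k(S),\dots,\CVD^*_m(S)$. Unbounded edges and monochromatic features are covered by the same count with $c\in\{1,2\}$, which only improves the exponent.

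\textbf{Main obstacle.} The delicate point is Step~1's claim that surviving configurations are exactly features of $\FCVD^*(S_R)$. Two things must hold there. First, a configuration whose defining colors are tied (e.g.\ $a_l\colequiv_v a_{l+1}$) must still sit at the \emph{top} positions after deleting colors; this requires checking that $\colpreceq_v$ restricted to $R$ is the order induced on $S_R$. That follows from the definition of $\Dominance{a}{b}$ via $\VD(S_a\cup S_b)$ and from Lemma~\ref{lem:transitivity}. Second, the linear bound must be applied to the \emph{refined} top diagram, i.e.\ $\FCVD(S_R)$ overlaid with $\VD(S_a)$ restricted to each region. I expect the latter overlay also to be linear, because each $\VR(p,S_a)\cap\FCVR(a,S_R)$ piece is charged to an edge or vertex of $\FCVD$ or of $\VD(S_a)$, and these total $O(|S_R|)$.

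If independent sampling causes trouble with small samples, I would switch to a fixed sample size $r=m-t$. In that setting one uses the identity $\binom{m}{r}\E[\cdot]=\sum_\ell N_{c,\ell}\binom{m-c-\ell}{t-\ell}$, analogous to the lower bound lemma, and the estimate of the ratio $\binom{m}{t}/\binom{m-c-\ell}{t-\ell}=O(L^3/\ldots)$. I expect this is where the restriction $k\ge\lceil 2m/3\rceil+1$ enters: it is what guarantees $t\le m/3$ keeps these binomial ratios polynomial.
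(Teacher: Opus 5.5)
Your proposal is correct and takes essentially the paper's route. Your ``non-conflicting'' colors (those farther than the defining colors) are exactly the conflicts of the paper's reversed relation $\chi''_\kappa$, and the count is then a Clarkson--Shor bound whose linear base term comes from the hypothesis on $\FCVD(S_R)$. The one difference is that you derive this bound yourself by Bernoulli sampling with $p=1/L$, whereas the paper cites the fixed-size-sample Theorem~4 of \cite{bop-hocvdccsf-25}. That theorem's range condition $r\le\lfloor m/c\rfloor$ with $c=3$ is exactly where $k\geq\lceil 2m/3\rceil+1$ comes from, as you guessed; your version does not need that restriction, although outside that range the bound is no better than $O(k(n-k))$ anyway.
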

\begin{proof}
    Recall the 
    CS-structure $(S, \mathcal{V}, \chi)$ from the proof of
    Lemma~\ref{lem:complexity}, where the set of configurations 
    $\mathcal{V}$ is the set of
    all possible Voronoi vertices,
    and the conflict relation~$\chi \subseteq S \times
    \mathcal{V}$ is defined such that a vertex $v\in \mathcal{V}$ 
    is in conflict with a site $p \in S$ if and only if
    $v$ is nearer to $p$ than to the three sites defining $v$.
    Recall also the color-augmented structure $(S, \mathcal{V},
    \chi_\kappa)$, where $\chi_\kappa \subseteq K \times \mathcal{V}$ 
    is defined such that a $c$-chromatic vertex $v$
    is in conflict with a color $a \in K$ if and only if
    $v$ is nearer to $a$, with respect to the minimal dominance~$\colpreceq_v$, than to the $c$ colors
    of the sites defining $v$.
    
    We define a new color conflict relation $\chi''_\kappa \subseteq K \times \mathcal{V}$
    that is appropriate for large values of $k$.
    For each color $a \in K$ and vertex $v \in \mathcal{V}$ defined by three sites $q_1, q_2, q_3 \in S$:
    \[ (a, v) \in \chi''_\kappa~\text{iff}~\kappa(q_1) \colequiv_v \kappa(q_2) \colequiv_v \kappa(q_3) \colprec_v a.
    \]
    We call $\{\kappa(q_1),\kappa(q_2),\kappa(q_3)\}$ the \emph{defining colors} of the color configuration $v$.
    
    Let $r = m-k+1$. 
    The number of conflicts of a color configuration $v$ with respect to
    $\chi''_\kappa$ is the number of colors farther from $v$ than its defining
    colors. A new $c$-chromatic vertex $v$ of $\CVD^*_k(S)$ has $c$ defining
    colors, and $k-1$ other colors closer than the defining colors; therefore,
    $v$ has $m-c-(k-1) = r-c$ colors farther from its defining colors. 
    
    Let $v''_{c,r-c+1}$ denote the number of $c$-chromatic vertices that are in
    conflict with exactly $r-c$ colors with respect to $\chi''_\kappa$. Note
    that $v_{c,k} = v''_{c,r-c+1}$. From Lemma~\ref{lem:consecutive_orders}, it
    follows that the total number of vertices in $\CVD^*_k(S)$ is equal to 
    \begin{alignat*}{6}
        & v_{1,k} && + v_{2,k} && + v_{2,k-1} && + v_{3,k} && + v_{3,k-1} && + v_{3,k-2} \\
        = \quad & v''_{1,r} && + v''_{2,r-1} && + v''_{2,r} && + v''_{3,r-2} && + v''_{3,r-1} && + v''_{3,r}
    \end{alignat*}
    where $v''_{3,-1} = v''_{3,0} = v''_{2,0} = 0$.
    
    Assuming that $\FCVD(S)$ has linear complexity, 
    it follows that $\sum_c v''_{c,1} = O(n)$. 
    Note that for $c=1$, the total number of monochromatic vertices is $\sum_{r} v''_{1,r} = O(n)$.
    Theorem~4 in \cite{bop-hocvdccsf-25} applies to any color-augmented CS-structure, so it also applies to $(S,\mathcal{V},\chi''_\kappa)$.
    Let $|\conf_{c,j}(S, \chi''_\kappa)|$ denote the number of $c$-chromatic configurations that are in conflict with exactly $j$ colors; observe that $v''_{c,j+1} = |\conf_{c,j}(S, \chi''_\kappa)|$.
    Applying the theorem with the 
    function $T_0(x)=x$ (since the complexity is linear for $j=0$ and any $S' \subseteq S$), for each $c \in \{2,3\}$ and $1 \leq r \leq \lfloor m/c \rfloor$, the number of $c$-chromatic vertices with at most $r-1$ conflicts is
    \[ \sum_{j=0}^{r-1} v''_{c,j+1} = \sum_{j=0}^{r-1} |\conf_{c,j}(S, \chi''_\kappa)|
    = O\left(\frac{r^c}{m} \cdot T_0\left(\frac{mn}{r}\right)\right)
    = O\left(\frac{r^c}{m} \cdot \frac{mn}{r}\right)
    = O(r^{c-1} n).
    \]
    Thus, the 
    sum $\sum_{j=0}^{r-1} v''_{c,j+1}$ for all $c\in\{1,2,3\}$ is bounded by $O(n + r n + r^2 n)$.
    Substituting $r = m - k + 1$, the combinatorial complexity of
    $\CVD_{k}^*(S), \ldots, \CVD_{m}^*(S)$ for $\lceil \frac{2m}{3} \rceil +1
    \leq k\leq m-1$ is $O((m-k)^2 n)$. 
    
    Similarly, we define the color conflict relation $\bar{\chi}''_\kappa \subseteq K \times \mathcal{V}$
    for each color $a \in K$ and vertex $v \in \mathcal{V}$ defined by three sites $q_1, q_2, q_3 \in S$:
    \[ (a, v) \in \bar{\chi}''_\kappa~\text{iff}~\kappa(q_1) \mcolequiv_v \kappa(q_2) \mcolequiv_v \kappa(q_3) \mcolprec_v a.
    \]
    The bound on $\mCVD_{k}^*(S), \ldots, \mCVD_{m}^*(S)$ follows analogously.
\end{proof}



A concrete example is the order-$k$ polygon Voronoi diagram of  
$m$ disjoint simple polygons of total complexity $n$, where
the farthest polygon Voronoi diagram has complexity $O(n)$~\cite{cegghlln-fpvd-11}.


\begin{corollary}
  Given a collection of $m$ disjoint simple polygons of $n$ total vertices and an integer~$k$ with~$1\leq
  k\leq m-1$, the order-$k$ polygon Voronoi diagram has combinatorial
  complexity $O(\min\{k(n-k),(m-k)^2n\})$.
\end{corollary}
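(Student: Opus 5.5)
The corollary is obtained by casting the polygon family as an abstract color diagram and combining Theorem~\ref{thm:complexity} with Theorem~\ref{thm:last_orders_complexity}. First I model the $m$ polygons of $\mathcal{P}$ as colored abstract sites, as in the paragraph on the order-$k$ polygon Voronoi diagram. Under convention~(2), the sites are the $n$ polygon edges, each colored by its polygon, with the angular-bisector fix for edges sharing an endpoint. Convention~(1) works equally well, since both yield the same $\CVD_k(S)$. The resulting bisector system is admissible, so the order-$k$ polygon Voronoi diagram is exactly $\CVD_k(S)$ with $|S| = \Theta(n)$ sites and $m$ colors. To meet the general position assumption, I either argue that degenerate configurations only merge vertices, which does not increase the complexity, or apply a symbolic perturbation; for a complexity bound this is routine.

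The $O(k(n-k))$ term comes directly from Theorem~\ref{thm:complexity}. It gives at most $4k(|S|-k)-2|S|$ vertices. Because $\CVD_k(S)$ is a plane graph whose vertices have degree at least three, except for degree-2 bichromatic vertices lying on color bisectors, the numbers of edges and faces are linear in the number of vertices. The degree-2 vertices are vertices of the refined diagram, which is also $O(k(n-k))$ by the remark after Theorem~\ref{thm:complexity}. This yields $O(k(n-k))$ overall.

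The $O((m-k)^2 n)$ term comes from Theorem~\ref{thm:last_orders_complexity}. Its hypothesis requires $\FCVD(S')$ to have linear complexity for every $S'\subseteq S$. I take $\FCVD(S')$ to mean the farthest color Voronoi diagram of the colored sites $S'$, which need not form whole polygons. This is the main obstacle: \cite{cegghlln-fpvd-11} proves linearity for farthest Voronoi diagrams of disjoint connected polygonal sites, whereas a subset $S'$ may keep only some edges of a polygon, so its color class can be a disconnected collection of segments. My first attempt is to observe that the random sampling in the proof of Theorem~\ref{thm:last_orders_complexity} samples whole colors. It therefore only uses sets $S'=S_R$ consisting of entire color classes, and hence entire polygons, so the hypothesis is needed only for those sets. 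If I cannot establish this reduction, the fallback is the more general result of \cite{cegghlln-fpvd-11}: it treats the farthest diagram of connected sets of total complexity $n$ as linear, and I would extend it by noting that the minimal distance to a union of segments is the distance to their union. Either way, Theorem~\ref{thm:last_orders_complexity} gives $O((m-k)^2 n)$ for $k \geq \lceil 2m/3\rceil+1$.

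For the remaining range $k\le \lceil 2m/3\rceil$, we have $m-k = \Theta(m) \geq \Theta(k)$. Also $n-k \le n$ and $k\le m\le n$, so $k(n-k) = O((m-k)^2 n)$ in this range. Hence the minimum equals $\Theta(k(n-k))$ up to constants there, and the first bound already suffices. Taking the better of the two bounds in each range gives $O(\min\{k(n-k),(m-k)^2n\})$.
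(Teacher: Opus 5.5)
Your argument is correct and is essentially the paper's: the $O(k(n-k))$ term comes from Theorem~\ref{thm:complexity}, and the $O((m-k)^2n)$ term comes from Theorem~\ref{thm:last_orders_complexity} together with the linear complexity of the farthest polygon Voronoi diagram~\cite{cegghlln-fpvd-11}. Two of your observations fill in steps the paper leaves implicit: the colorful sampling only ever uses whole color classes $S_R$, and for $k\le\lceil 2m/3\rceil$ the minimum is $\Theta(k(n-k))$.

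Drop the fallback, though, because it would not work. A color class that is an arbitrary, disconnected subset of segments falls outside~\cite{cegghlln-fpvd-11}, and farthest color diagrams of such classes are superlinear in general, already for colored points.
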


\section{Tight bounds on \texorpdfstring{$\SumU_k$}{Uk} and
  \texorpdfstring{$\mSumU_k$}{mUk} -- Proving Lemma~\ref{lem:sumU}}
\label{sec:SumU}

Recall the circular sequence~$\Pi(S) = (\pi_0, \ldots, \pi_{N-1}, \pi_N=\pi_0)$
of permutations~$\pi_i$ of sites~$S$, discussed in Section~\ref{subsec:perm}.
In our setting, each site~$p\in S$ is assigned a color~$\kappa(p) \in K$.
For each color~$a\in K$ and $0\leq i\leq N-1$,
let $\lambda_i(a) \in S_a$ be the first site of color~$a$ in permutation~$\pi_i$,
called the \emph{leader} of color~$a$ in~$\pi_i$.
By a slight abuse of notation,
we also  use~$\lambda_i$ to denote
the subsequence of~$\pi_i$ that consists of the leaders of all colors.
It is not hard  to see  that 
the induced circular sequence
$\Lambda(\Pi(S)) = (\lambda_0, \ldots, \lambda_N = \lambda_0)$ 
of color-leader sequences satisfies exactly one of the following, if
$\lambda_{i+1} \neq \lambda_i$: 
\begin{itemize} 
 \item \hypertarget{def:lpermL1}{(L1)} there is a replacement of the leader of a color~$a\in K$,
  that is, $\lambda_{i+1}(a) \neq \lambda_i(a)$, or
 \item \hypertarget{def:lpermL2}{(L2)} there is a switch of two consecutive leaders from~$\lambda_i$ to~$\lambda_{i+1}$,
\end{itemize}
A change of the first kind~\hyperlink{def:lpermL1}{(L1)} is called a \emph{monochromatic switch}
as it corresponds to a switch in~$\Pi(S)$ between two sites of the same color;
a change of the second kind~\hyperlink{def:lpermL2}{(L2)} is called a \emph{bichromatic switch}.
We say that a monochromatic switch happens \emph{at position~$j$},
if the leader~$\lambda_i(a)$ of a color~$a\in K$ is at position $j$ 
in~$\lambda_i$,
and that a bichromatic switch happens \emph{at positions~$j$},
if two leaders $\lambda_i(a)$ and $\lambda_i(b)$ (to be switched) 
lie at positions~$j$ and $j+1$ in~$\lambda_i$. 
We then observe the following
correspondence between unbounded edges and switches.

\begin{restatable}{lemma}{UG} \label{lem:U-G}
 Let $\overline{\Pi}(S) := (\bar{\pi}_0, \ldots, \bar{\pi}_N = \bar{\pi}_0)$,
 where $\bar{\pi}_i$ denotes the reverse of permutation~$\pi_i$.
 For each~$1\leq k\leq m$, the following hold for the $c$-chromatic unbounded
 edges, $c\in\{1,2\}$:
 \begin{itemize} 
  \item $u_{c,k}$ is equal to the number of $c$-chromatic switches at position~$k$ in~$\Lambda(\Pi(S))$.
 \item $\bar{u}_{c,k}$ is equal to the number of $c$-chromatic switches at position~$k$ 
 in~$\Lambda(\overline{\Pi}(S))$.
\end{itemize}
\end{restatable}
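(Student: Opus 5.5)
We walk along the curve $\Gamma$ at infinity and compare the permutation sequence with the unbounded edges of the refined diagrams. By construction of $\Gamma$, no two bisecting curves cross on or outside $\Gamma$, and each curve crosses $\Gamma$ transversally exactly twice. Hence the unbounded edges of $\CVD^*_k(S)$ are in bijection with the points $u\in\mathcal{U}$ where they meet $\Gamma$. An edge unbounded in both directions contributes two such points, consistent with the counting convention. Each $u\in\mathcal{U}$ lies on a unique bisector $\Bisector{q_1}{q_2}$ and corresponds to exactly one switch $\pi_i\to\pi_{i+1}$ of $\Pi(S)$, namely the switch of $q_1,q_2$. So it suffices to decide, for every switch, whether the crossing point $u$ lies on an unbounded edge of $\CVD^*_k(S)$, whether that edge is new, and what its chromaticity is.

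First we relate the color order at $u$ to the leader sequence. The key observation is that for $x\in\Gamma$ not on any bisector, $a\colprec_x b$ holds iff the leader $\lambda(a)$ precedes $\lambda(b)$ in $\pi$. Indeed, $x\in\Dominance{a}{b}$ iff the nearest site of $S_a\cup S_b$ at $x$ has color $a$, which follows from the definition of $\Dominance{a}{b}$ via $\VD(S_a\cup S_b)$ together with (A2). Consequently the color ordering $a_1\colpreceq_x\cdots\colpreceq_x a_m$ is exactly the leader sequence $\lambda_i$.

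Now take a switch of $q_1,q_2$ at $u$ and use Lemma~\ref{lem:edge_types}.
\begin{itemize}
\item If $\kappa(q_1)\neq\kappa(q_2)$ and both sites are leaders, the switch is a bichromatic (L2) switch at positions $j,j+1$ of $\lambda_i$. Then $a_{j}\colequiv_u a_{j+1}$ with strict inequalities around them, so $u$ lies on a bichromatic edge of $\CVD_j(S)$. By Lemmas~\ref{lem:overlay_CVDk_star} and~\ref{lem:consecutive_orders}, this edge appears in $\CVD^*_j$ and $\CVD^*_{j+1}$ and is new in $\CVD^*_j$. The ordering conventions also need checking here: ``position $k$'' should mean that the edge is new in order $k$, i.e.\ the conflict count is $k-1$, which is exactly the characterization at the end of the proof of Lemma~\ref{lem:complexity}.
\item If $\kappa(q_1)=\kappa(q_2)=a$ and one of the two sites is the leader of $a$, the switch is an (L1) replacement. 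By the second item of Lemma~\ref{lem:edge_types}, $u$ lies on a monochromatic edge of $\CVD^*_k$ exactly for $k$ equal to the position of $a$ in $\lambda_i$. Such an edge occurs in a single order by Lemma~\ref{lem:consecutive_orders}, hence it is new there.
\item If neither site is a leader of its color, $\lambda$ does not change, and $u$ lies on no edge of any $\CVD^*_k$, since neither condition of Lemma~\ref{lem:edge_types} holds.
\end{itemize}
Summing over all switches gives the first item.

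For the maximal diagrams we argue the same way with $\FVR$. Here $\mDominance{a}{b}$ at $x$ means that the farthest site of $S_a\cup S_b$ has color $a$. In the reversed permutation $\bar\pi_i$ the leader of $a$ is its farthest site, so the color ordering $\mcolpreceq_x$ coincides with the leader sequence of $\bar\pi_i$. The maximal analogue of Lemma~\ref{lem:edge_types} (asserted after Lemma~\ref{lem:vertex_types}) then gives the second item verbatim.

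The main obstacle is the key observation $a\colprec_x b\iff\lambda(a)$ precedes $\lambda(b)$, along with the off-by-one care in ``position $k$'' versus the new-at-order-$k$ convention. The observation is a routine but essential unpacking of the definitions: I would prove it from the definition of the dominance regions as interiors of unions of closed Voronoi regions, using general position so that $x\notin\Bisector{a}{b}$ at non-switch points of $\Gamma$.
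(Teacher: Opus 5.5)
Your proposal is correct and follows essentially the same route as the paper: you walk along $\Gamma$, match each switch against the characterization in Lemma~\ref{lem:edge_types}, and handle the maximal diagrams with the reversed permutations $\bar{\pi}_i$; your explicit identification of the color order at generic points of $\Gamma$ with the leader sequence is exactly what the paper uses implicitly when it says that $a_k$ is the $k$-th nearest color near $x$. One small fix is needed: your case split omits a switch between the leader of one color and a non-leader of another color, but such a switch also leaves $\lambda_i$ unchanged and is handled word for word by your third case.
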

\begin{proof}
    The proof is done by establishing one-to-one correspondences between:
    \begin{itemize} 
        \item The set of new monochromatic unbounded edges of~$\CVD^*_k(S)$ and
        the set of monochromatic switches at position~$k$ in~$\Lambda(\Pi(S))$.
        \item The set of new bichromatic unbounded edges of~$\CVD^*_k(S)$ and
        the set of bichromatic switches at position~$k$ in~$\Lambda(\Pi(S))$.
        \item The set of new monochromatic unbounded edges of~$\mCVD^*_k(S)$ and
        the set of monochromatic switches at position~$k$ in~$\Lambda(\overline{\Pi}(S))$.
        \item The set of new bichromatic unbounded edges of~$\mCVD^*_k(S)$ and
        the set of bichromatic switches at position~$k$ in~$\Lambda(\overline{\Pi}(S))$.
    \end{itemize}
    
    Let $x \in \Gamma$, and let
    $a_1 \colpreceq_x \cdots \colpreceq_x a_m$ and $p_1 \spreceq_x \cdots \spreceq_x p_n$
    be the ordering of colors in~$K$ with respect to~$\colpreceq_x$
    and the ordering of sites in~$S$ with respect to~$\spreceq_x$, respectively.
    
    Suppose that it holds that
    $a_{k-1} \colprec_x a_k \colprec_x a_{k+1}$ and
    $p_{j-1} \sprec_x p_j \sequiv_x p_{j+1} \sprec_x p_{j+2}$
    for some~$j$ such that $p_j, p_{j+1} \in S_{a_k}$, and $p_i \notin S_{a_k}$ for all $i<j$.
    Then  $p_j, p_{j+1}$
    are the nearest sites among those in~$S_{a_k}$ at~$x$.
    Thus,  $x = e\cap \Gamma$
    for a new monochromatic unbounded edge~$e$ of~$\CVD^*_k(S)$ (see also Lemma~\ref{lem:edge_types}).
    At the same time,
    $x \in \Gamma$ corresponds to the switch~$\sigma$ of~$p_j$ and~$p_{j+1}$
    between two consecutive permutations~$\pi_i,\pi_{i+1} \in \Pi(S)$.
    As both $p_j$ and $p_{j+1}$ belong to the same color~$S_{a_k}$
    and $a_k$ is the $k$-th nearest color locally near~$x$,
    the switch~$\sigma$ is monochromatic happening at position~$k$
    by which the leader of color~$a_k$ is replaced between~$\lambda_i$ and~$\lambda_{i+1}$.
    Conversely, any monochromatic switch of~$p_j$ and $p_{j+1}$ of color~$a_k$
    happening at position~$k$ corresponds to a point~$x\in \Gamma$
    that satisfies the above condition.
    This establishes the first one-to-one correspondence between
    the set of new monochromatic unbounded edges of~$\CVD^*_k(S)$ and
    the set of monochromatic switches at position~$k$ in~$\Lambda(\Pi(S))$,
    implying that the number of monochromatic switches at position~$k$ in~$\Lambda(\Pi(S))$
    is exactly~$u_{1,k}$.
    
    Next, if it holds that
    \( a_{k-1} \colprec_x a_k \colequiv_x a_{k+1} \colprec_x a_{k+2}, \)
    then $x \in \cl \CVR_k(H, S) \cap \cl \CVR_k(H',S)$ for $H = \{a_1,
    \ldots, a_k\}$ and $H' = H \cup \{a_{k+1}\} \setminus \{a_{k}\}$.
    Thus $x = e \cap \Gamma$ for a new bichromatic unbounded edge~$e$
    of~$\CVD^*_k(S)$ (see Lemma~\ref{lem:edge_types}).
At the same time,
$x$ corresponds to the switch of the leaders of two colors~$a_k$ and~$a_{k+1}$
between two consecutive sequences~$\lambda_i$ and~$\lambda_{i+1}$ of color leaders,
so a bichromatic switch at position~$k$.
Conversely, for any bichromatic switch at position~$k$ corresponding to a point~$x\in\Gamma$,
we have the same condition as above at~$x$,
so $x$ lies on a new bichromatic unbounded edge of~$\CVD^*_k(S)$.
This implies that
the number of bichromatic switches at position~$k$ in~$\Lambda(\Pi(S))$
is equal to~$u_{2,k}$.

The other two cases for the maximal diagrams~$\mCVD^*_k(S)$
can be shown analogously using the sequence~$\overline{\Pi}(S)$ of reversed permutations.
Let $\bar{\lambda}_i$ be the sequence of color leaders 
induced from the reversed permutation~$\bar{\pi}_i$ such that $\Lambda(\overline{\Pi}(S)) = 
(\bar{\lambda}_0, \ldots, \bar{\lambda}_{N-1}, \bar{\lambda}_N=\bar{\lambda}_0)$.
Notice that the $k$-th farthest color with respect to the maximal dominance~$\mcolpreceq_x$
is at position~$k$ in the corresponding color-leader sequence~$\bar{\lambda}_i$.
Hence, the characterization of new edges of~$\mCVD^*_k(S)$, analogous to Lemma~\ref{lem:edge_types},
implies the last two correspondences between 
new unbounded edges of~$\mCVD^*_k(S)$ and switches at position~$k$ in~$\Lambda(\overline{\Pi}(S))$.
Consequently, 
the number of monochromatic switches at position~$k$ in~$\Lambda(\overline{\Pi}(S))$
is equal to~$\bar{u}_{1,k}$, 
while
the number of bichromatic switches at position~$k$ in~$\Lambda(\overline{\Pi}(S))$
is equal to~$\bar{u}_{2,k}$.
\end{proof}

Thus, to obtain the bounds of Lemma~\ref{lem:sumU}, it is enough 
to count 
monochromatic and bichromatic switches
in the circular sequences~$\Lambda(\Pi(S))$ and~$\Lambda(\overline{\Pi}(S))$
of color-leaders.
%
We do
the counting in a purely combinatorial setting
in the next  subsection 
deriving the bounds of Theorem~\ref{thm:Gk}.
As a corollary to  Theorem~\ref{thm:Gk}
we get:
$\SumU_k \geq k(k+1)$ and $\mSumU_k \leq k(2n-k-1)$.
These bounds are tight for the unbounded edges of the diagrams  by the fact 
that any circular sequence~$\Pi$ of permutations 
satisfying conditions~\hyperlink{def:permP1}{(P1)} and~\hyperlink{def:permP2}{(P2)} can be realized by
an admissible system of bisecting curves 
\cite[Lemma~10]{bcklpz-choavd-15}.
This completes the proof of Lemma~\ref{lem:sumU}.

\subsection{Switches in a sequence of permutations of colored elements}
In the following, we consider a purely combinatorial setting
for circular sequences as follows.
Let $\Pi = (\pi_0, \ldots, \pi_N = \pi_0)$ be a given circular sequence
of permutations~$\pi_i$ of $n$~elements $\{1, \ldots, n\}$ that satisfies conditions~\hyperlink{def:permP1}{(P1)} and~\hyperlink{def:permP2}{(P2)},
and assume that each element in~$\{1, \ldots, n\}$ is assigned a color from~$K =\{1, \ldots, m\}$.
As above, we have the induced sequence~$\Lambda = \Lambda(\Pi) = (\lambda_0, \ldots, \lambda_{N-1},\lambda_N=\lambda_0)$
of sequences~$\lambda_i$ of $m$~leaders of colors, which satisfy conditions~\hyperlink{def:lpermL1}{(L1)} and~\hyperlink{def:lpermL2}{(L2)}.
For each~$1\leq k\leq m$, define
$\bm{g_{1,k}} = g_{1,k}(\Pi)$ to be the number of monochromatic switches at position~$k$ in~$\Lambda$
and 
$\bm{g_{2,k}} = g_{2,k}(\Pi)$ to be the number of bichromatic switches at position~$k$ in~$\Lambda$.
We are interested in the following quantity:
\[
 G_k = G_k(\Pi) := \sum_{j=1}^k ( g_{2,j} + (k-j+1) g_{1,j}).
\]
%

\begin{restatable}{theorem}{Gk} \label{thm:Gk}
 For 
 $1\leq k\leq m-1$,
 $k(k+1)\leq G_k\leq k(2n-k-1)$.
 Both bounds are tight.
\end{restatable}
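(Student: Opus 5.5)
The plan is to rewrite $G_k$ as a sum of $k$ simpler quantities, one per prefix length, and to bound each of them by $2l$ from below and $2(n-l)$ from above. For $1\leq l\leq m$ define
\[
h_l := g_{2,l} + \sum_{j=1}^{l} g_{1,j}.
\]
Exchanging the order of summation gives $\sum_{l=1}^{k} h_l = \sum_{j=1}^{k} \bigl(g_{2,j} + (k-j+1)g_{1,j}\bigr) = G_k$.

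The quantity $h_l$ has a clean meaning. Let $L_l(i)$ be the set of sites that are the first $l$ leaders in $\lambda_i$. Then $h_l$ is the number of indices $i$ with $L_l(i)\neq L_l(i+1)$. Indeed, a bichromatic switch at position $l$ swaps the $l$-th and $(l{+}1)$-th leaders, so it exchanges one element of $L_l$ with one outside it. A monochromatic switch at a position $j\leq l$ replaces a leader inside the prefix. Every other switch leaves $L_l$ unchanged. Since $\sum_{l=1}^k 2l = k(k+1)$ and $\sum_{l=1}^k 2(n-l) = k(2n-k-1)$, it suffices to prove
\[
2l \;\leq\; h_l \;\leq\; 2(n-l) \qquad \text{for each } 1\leq l\leq m-1 .
\]

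For the lower bound I would use the antipodal property implied by (P1) and (P2): after $N/2$ steps every pair has switched exactly once, so $\pi_{i+N/2}$ is the reverse of $\pi_i$. Fix $i$ and let $L=L_l(i)$. In $\bar\pi_i$ each color's leader is its last site in $\pi_i$. I would show that each $p\in L$ must drop out of the prefix-leader set somewhere in the half-cycle from $i$ to $i+N/2$, and that distinct $p$ give distinct change events. Each change event exchanges only one element, so this yields at least $l$ changes per half-cycle, hence $h_l\geq 2l$. The delicate case is a site $p\in L$ that is the unique site of its color, so it never loses leadership. Here I would argue via its rank among leaders: at $\pi_i$ fewer than $l$ leaders precede $p$. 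At $\bar\pi_i$ every site that followed $p$ in $\pi_i$ now precedes it. Because $l\leq m-1$, at least one color has all its sites after $p$ in $\pi_i$, so its leader passes $p$ and pushes it back. Making this give $l$ distinct exits is the step I am least sure of. If it fails, I would instead exhibit, for each prefix, $l$ distinct color pairs whose leader order must reverse in each half-cycle.

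For the upper bound, I would charge each change of $L_l$ to the site that enters $L_l$ at that step. Such a site $q$ is outside $L_l$ just before, and at that moment at least $l$ sites precede it among the leaders, in the sense of $\spreceq$ along $\Gamma$. I would show that each site $q$ is charged at most twice for each of a family of "witness" sites that it must overtake. Summing then gives at most $2(n-l)$ entries. Concretely, one entry of $q$ corresponds to $q$ switching with a site $p$ in the current prefix. I would assign this entry to the pair $\{p,q\}$, charged to an element outside the $l$ currently frontmost leaders. By (P2) each pair is used twice, and only the $n-l$ non-prefix elements can be charged. Making this injective is the main obstacle. I would mimic the proof of the uncolored bound $k(2n-k-1)$ in~\cite{bcklpz-choavd-15}, treating a monochromatic replacement as a switch between the outgoing and incoming leaders.

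For tightness, take $m=n$ with all colors distinct. Then $g_{1,j}=0$ and $h_l$ is the number of switches at position $l$ in $\Pi$, so $G_k$ is exactly the uncolored count. Both extremes $k(k+1)$ and $k(2n-k-1)$ are attained there by the tight constructions of~\cite{bcklpz-choavd-15}.
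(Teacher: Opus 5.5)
\textbf{Genuine gap: the per-level inequalities are false.} Your identity $G_k=\sum_{l=1}^{k}h_l$ is correct, where $h_l$ counts the steps at which the set of the first $l$ leaders changes. The problem is the target $2l\le h_l\le 2(n-l)$: it is false in both directions, so the reduction cannot be completed. Taken together, the two inequalities would force $l\le n/2$, yet $l$ ranges up to $m-1$, which can be $n-1$.

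Concretely, take $n=m=3$ with distinct colors, $l=2$, and start from $(1,2,3)$.
\begin{itemize}
\item The switches $12,21,23,13,31,32$ (the paper's $\Pi_{3,3}$) occur at positions $1,1,2,1,1,2$. So $h_2=2<2l=4$.
\item The switches $23,32,12,13,31,21$ (the paper's $\Pi'_{3,3}$) occur at positions $2,2,1,2,2,1$. So $h_2=4>2(n-l)=2$.
\end{itemize}
Both sequences satisfy (P1) and (P2), which is all the theorem assumes. More generally, with all colors distinct, $\Pi_{n,n}$ has $h_l=2(n-l)$ and $\Pi'_{n,n}$ has $h_l=2l$ for every $l$. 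Only the prefix sums obey $k(k+1)\le G_k\le k(2n-k-1)$.

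The supporting steps fail as well.
\begin{itemize}
\item Antipodality: (P1) and (P2) do not imply $\pi_{i+N/2}=\bar{\pi}_i$. In $\Pi_{3,3}$ we get $\pi_3=(1,3,2)\neq(3,2,1)$, because a pair may switch and immediately switch back. Such sequences are realizable by admissible bisectors \cite[Lemma~10]{bcklpz-choavd-15}.
\item Even where antipodality holds, the lower bound still fails. For Euclidean point sites with $m=n$, $h_{n-1}$ is the number of unbounded edges of the farthest-site diagram, i.e., the hull size, which is below $2(n-1)$ for $n\ge 3$.
\item Your upper-bound charge to ``the $n-l$ non-prefix elements'' cannot be made injective. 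That set changes over time, so the obstacle you flagged is not merely technical.
\end{itemize}

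\textbf{Missing idea: a charging that is cumulative in $k$ rather than level by level.}
\begin{itemize}
\item Lower bound: the paper charges each switch to a color and proves $\sum_{j\le k}\beta_a(j)+\sum_{j\le k}(k-j+1)\mu_a(j)\ge 2\max\{k+1-a,0\}$. It combines two ingredients, both depending on the maximum leader rank $\rho^*(a)$. First, colors $a'>a$ that overtake $a$ before $a$ reaches rank $\min\{k,\rho^*(a)-1\}+1$ are overtaken twice. Second, a multi-element color changes its leader twice at ranks at most $\rho^*(a)$.
\item Upper bound: it charges each switch to the larger element and proves a per-element cap of $2\min\{a-1,k\}$ for initial leaders and $2k$ otherwise. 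This uses Lemma~\ref{lem:ms_last}. The $(k-j+1)$-weighted monochromatic switches of $t$ exactly compensate for the bichromatic switches $t$ cannot make while it is not a leader.
\end{itemize}
Neither bound splits into per-level bounds.

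\textbf{Tightness.} Your argument covers only $m=n$. Lemma~\ref{lem:Gk_tight} gives tight colored sequences for every $m\le n$, by giving the first $m$ elements distinct colors in the turn-based constructions. That general case is what Theorem~\ref{thm:complexity} relies on.
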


The bounds 
are shown to be
tight by constructions, see Lemma~\ref{lem:Gk_tight}. 


\subsubsection{Deriving the lower bound}
\label{subsec:lower_bound}
We charge each monochromatic or bichromatic switch to a unique color~$a\in K$ as follows:
\begin{itemize} 
    \item Each monochromatic switch, replacing the leader of color~$a\in K$, 
    is charged to the corresponding color~$a$.
    Let $\bm{\mu_a(j)}$ be the number of monochromatic switches at position~$j$ in~$\Lambda$
    charged to~$a$.
    \item Each bichromatic switch involving two colors~$a,a' \in K$ with~$a< a'$
    is charged to color~$a$ with the smaller index.
    Let $\bm{\beta_a(j)}$ be the number of bichromatic switches at position~$j$ in~$\Lambda$
    charged to~$a$.
\end{itemize}

It is obvious that $g_{1,j} = \sum_{a\in K} \mu_a(j)$ and $g_{2,j} = \sum_{a\in K} \beta_a(j)$.

Let $\rho_i(a)$ for $a\in K$ be the position, or the rank, of~$\lambda_i(a)$ 
in permutation~$\lambda_i$,
and let $\rho^*(a) := \max_{i} \rho_i(a)$ be the maximum rank of~$a$ over the whole sequence~$\Lambda$.
Note that for any color~$a\in K$ having a single element,
we have $\mu_a(j) = 0$ for every~$j$, since there is no change of its leader~$\lambda_i(a)$.
We observe the following for colors having at least two elements.
\begin{restatable}{lemma}{mcineqtwosites} \label{lem:mc_ineq_2_sites}
    Let $a \in K$ be any color having at least two elements.
    For $1\leq k\leq m-1$,
    \[
    \sum_{j=1}^k (k-j+1) \mu_a(j) \geq \max \{2(k + 1 - \rho^*(a)), 0\}.
    \]
\end{restatable}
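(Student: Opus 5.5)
The plan is to show that the leader of color~$a$ changes at least twice along the circular sequence~$\Lambda$, and then to bound the weight of each such change from below using~$\rho^*(a)$. The maximum in the claimed bound is needed only to cover the trivial regime. If $\rho^*(a) \geq k+1$, the right-hand side is $0$, and the inequality holds because every $\mu_a(j)$ is nonnegative and the coefficients $k-j+1$ are positive for $j\leq k$. So I assume from now on that $\rho^*(a)\leq k$.

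\textbf{Step 1: the leader of $a$ is not constant.} Let $r=\lambda_0(a)$ and choose any other element $s\in S_a$; such an $s$ exists because $|S_a|\geq 2$. By~\hyperlink{def:permP2}{(P2)}, $r$ and $s$ switch exactly twice around the circular sequence. Consequently there are permutations $\pi_i$ in which $s$ precedes $r$. In any such $\pi_i$, the element $r$ is not the first element of color~$a$, so $\lambda_i(a)\neq r$.

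\textbf{Step 2: there are at least two leader replacements.} By Step~1 the leader changes away from $r$ at least once. Since $\lambda_N=\lambda_0$, the leader must later return to $r$, which requires another change. Each change of $\lambda_i(a)$ is a monochromatic switch of type~\hyperlink{def:lpermL1}{(L1)} charged to~$a$. This needs a short check. A change of the leader of $a$ can only arise from a switch in $\Pi$ between the current leader and the next element of $S_a$. These two elements must be adjacent in $\pi_i$ by~\hyperlink{def:permP1}{(P1)}, so the switch is between two sites of color~$a$ and does not reorder the leader sequence. Hence the total number of monochromatic switches charged to~$a$ is at least~$2$.

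\textbf{Step 3: weighting.} A monochromatic switch replacing the leader of $a$ between $\lambda_i$ and $\lambda_{i+1}$ happens at position $j=\rho_i(a)$, by definition of the position of a monochromatic switch. By definition of $\rho^*(a)$ we have $\rho_i(a)\leq\rho^*(a)\leq k$. Therefore every such switch is counted in the sum $\sum_{j\le k}(k-j+1)\mu_a(j)$, with weight $k-j+1\geq k+1-\rho^*(a)$. Combining this with Step~2 gives
\[
\sum_{j=1}^k (k-j+1)\mu_a(j)\;\geq\; 2\,(k+1-\rho^*(a)).
\]

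\textbf{Main obstacle.} The only delicate point is Step~2. It requires verifying that a leader replacement is genuinely a single (L1) event at the leader's current rank. In particular, one must check that it cannot coincide with a bichromatic change and thereby be attributed to a different position. This follows from~\hyperlink{def:permP1}{(P1)}, since each step of $\Pi$ swaps exactly one adjacent pair, and from the observation that a same-color swap leaves the relative order of the leaders unchanged.
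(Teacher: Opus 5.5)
Your proposal is correct and follows essentially the same argument as the paper. You dispose of the case $\rho^*(a) > k$ trivially, use (P2) to get at least two replacements of the leader of $a$, and observe that each replacement occurs at position $\rho_i(a) \le \rho^*(a) \le k$ with weight at least $k+1-\rho^*(a)$. Your Steps 1--2 spell out in more detail what the paper states briefly: the leader must leave $\lambda_0(a)$ and later return, and each replacement is a single (L1) event at the leader's current rank.
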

\begin{proof}
    If $k < \rho^*(a)$, then $\max \{k + 1 - \rho^*(a), 0\} = 0$ and the inequality holds trivially,
    so assume $k \geq \rho^*(a)$. 
    
    We first observe that there are at least two changes of~$\lambda_i(a)$ over $0\leq i\leq N-1$.
    This is obvious from the fact that $\lambda_i(a)$ is the first element
    of the subsequence of~$\pi_i$ that contains those elements in~$\{1, \ldots, n\}$ whose color is~$a$
    and the property~\hyperlink{def:permP2}{(P2)} that every pair of elements switches exactly twice.
    
    Consider any two such changes happened at~$0\leq i_1 < i_2 < N$, and 
    let $j_1:=\rho_{i_1}(a)$ and $j_2:=\rho_{i_2}(a)$ be two positions 
    of the leader of~$a$ in $\lambda_{i_1}$ and $\lambda_{i_2}$, respectively.
    By definition, we have 
    $j_1, j_2 \leq \rho^*(a) \leq k$.
    Thus, we obtain 
    \[
    \sum_{j=1}^k (k-j+1) \mu_a(j) \geq (k-j_1+1) + (k-j_2+1) 
    \geq 2 (k - \rho^*(a) +1)
    \]
    by the existence of two changes of~$\lambda_i(a)$.
\end{proof}


Next, we consider the quantity~$\beta_a(j)$ about bichromatic switches.
%
Note that $\beta_a(m) = 0$ for any~$a\in K$ and $\beta_m(k) = 0$ for any $k$ by definition.
From here onward, we assume without loss of generality that $\pi_0 = (1, 2, \ldots, n)$
and 
the colors in~$K =\{1, \ldots, m\}$ appear in the order in the initial permutation~$\lambda_0$
of leaders, that is,
$\lambda_0 = (\lambda_0(1), \lambda_0(2), \ldots, \lambda_0(m))$,
so $\rho_0(a) = a$ for $a\in K = \{1, \ldots, m\}$.
\begin{restatable}{lemma}{biineq} \label{lem:bi_ineq}
    For $1 \leq k \leq m-1$ and $a \in K$, it holds that
    $\sum_{j=1}^k \beta_a(j) \geq \max\{2 ( \khat_a + 1 - a ), 0\}$,
    where $\khat_a = \min\{ k, \rho^*(a)-1 \}$. 
\end{restatable}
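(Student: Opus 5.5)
The idea is to track, for the fixed color $a$, how many colors of larger index are ahead of $a$. We then show that this count must climb to at least $\khat_a+1-a$ and come back down to $0$. Each unit step is a bichromatic switch charged to $a$ at a position at most $k$.

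If $\khat_a+1-a\leq 0$ there is nothing to prove, so assume $\khat_a\geq a$, which gives $\rho^*(a)\geq a+1$. For each $i$ define
\[
\Phi_i := \bigl|\{\, b\in K : b>a,\ \rho_i(b)<\rho_i(a)\,\}\bigr|,
\]
the number of colors of larger index whose leaders precede $\lambda_i(a)$ in $\lambda_i$. By the normalization $\rho_0(a)=a$ with colors in increasing order in $\lambda_0$, we have $\Phi_0=0$. Now fix an index $i^*$ with $\rho_{i^*}(a)=\rho^*(a)$. At $i^*$ there are $\rho^*(a)-1$ colors ahead of $a$, and at most $a-1$ of them have index smaller than $a$. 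Hence $\Phi_{i^*}\geq \rho^*(a)-a\geq \khat_a+1-a$.

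Next, $\Phi$ changes only at bichromatic switches (L2) between the leader of $a$ and the leader of some color $b>a$. A monochromatic switch (L1) only replaces a leader by another element at the same slot of $\lambda$, and switches not involving $a$ do not change which colors precede $a$. Each such switch changes $\Phi$ by exactly $\pm1$ and is charged to $a$, since $a<b$.

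The key step is bounding the position of these switches. Suppose $\Phi$ goes from $t$ to $t+1$. Then $a$ sits at some position $r$ and the leader of $b$ at position $r+1$, so the switch happens at position $r$. Before the switch, the $r-1$ colors ahead of $a$ consist of at most $a-1$ smaller colors and exactly $t$ larger ones, so $r\leq a+t$. Symmetrically, suppose $\Phi$ goes from $t+1$ down to $t$. Then $b$ is at position $r$ and $a$ at $r+1$, the switch is at position $r$, and after the switch $a$ is at $r$ with $t$ larger colors ahead, so again $r\leq a+t$.

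Now we count. On the arc $0\to i^*$, $\Phi$ moves from $0$ to at least $\khat_a+1-a$ in unit steps. So for each $t\in\{0,\ldots,\khat_a-a\}$ there is an up-step $t\to t+1$, at a position at most $a+t\leq \khat_a\leq k$. On the arc $i^*\to N$, $\Phi$ returns to $\Phi_N=\Phi_0=0$, since the sequence is circular. So for each such $t$ there is also a down-step $t+1\to t$, again at a position at most $k$. These $2(\khat_a+1-a)$ switches are distinct, because they lie on disjoint arcs and at distinct levels. All of them are counted in $\sum_{j=1}^k\beta_a(j)$, which gives the claim.

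The main obstacle is exactly this position bound. The naive choice of counting increases of $a$'s rank would give enough switches at low positions, but those switches may involve smaller colors, which are not charged to $a$. Choosing the potential $\Phi$, which counts only larger colors ahead, is what simultaneously guarantees that each switch is charged to $a$ and that its position is at most $a+t\leq k$.
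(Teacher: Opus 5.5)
Your proof is correct, and it handles the position constraint more carefully than the paper does. The skeleton matches the paper's: larger-index colors must overtake $a$ and later be overtaken back, and at most $a-1$ smaller colors can precede $a$. The difference is in how the switches are selected.

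The paper fixes a single index $z$ with $\rho_z(a)=\khat_a+1$ and takes the set $A$ of larger colors ahead of $a$ in $\lambda_z$. It then asserts $\sum_{j\le\khat_a}\beta_a(j)\geq 2|A|$, i.e., that the reversals with $A$ before $z$ and again after $z$ all happen at positions at most $\khat_a$. That is the position issue you flag, and it is not justified when $k<\rho^*(a)-1$; for an arbitrary such $z$ it can actually fail. Take $n=m=4$, $a=2$, $k=2$ and the sequence $(1,2,3,4)$, $(1,3,2,4)$, $(3,1,2,4)$, $(3,1,4,2)$, $(3,4,1,2)$, $(3,4,2,1)$, $(3,2,4,1)$, $(2,3,4,1)$, $(2,3,1,4)$, $(2,1,3,4)$, $(1,2,3,4)$, closed up by switching $3,4$ twice. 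Choosing $z$ at $(3,4,2,1)$ gives $|A|=2$, but $\sum_{j\le 2}\beta_2(j)=3<2|A|$. The lemma itself survives, because only $\khat_a+1-a$ overtakings are needed. The paper's argument can be repaired by using the first time $a$ reaches rank $\khat_a+1$ for the outgoing count and the last such time for the return count, so that $a$ stays within the first $\khat_a$ positions on the relevant arcs.

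Your potential $\Phi$, together with the level-crossing count and the explicit bound $r\leq a+t$, does this in one step. It uses exactly the crossings of levels $t\leq\khat_a-a$, which are the ones guaranteed to lie at positions at most $\khat_a$. Disjointness of the counted switches then follows from the disjoint arcs and distinct levels. The paper's version is shorter and names the overtaking colors explicitly, but yours is airtight as written.
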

\begin{proof}
    Recall our assumption that the leader of color~$a$ is placed at position~$a$ in~$\lambda_0$,
    that is, $\rho_0(a) = a$.
    If $\khat_a < a - 1$, then the inequality holds trivially,
    so assume that $\khat_a \geq a$.
    
    Let $z$ be an index such that $\rho_z(a) = \khat_a + 1$, and
    let 
    \[ A := \{ a'\in K \mid a' > a~\text{and}~\rho_z(a') < \rho_z(a) = \khat_a+1 \}.\]
    be the set of colors~$a' > a$ preceding~$a$ in~$\lambda_z$.
    Since the order between colors~$a$ and each~$a' \in A$ is reversed 
    between~$\lambda_0$ and~$\lambda_z$,
    there must be at least $|A|$ many bichromatic switches while we move from $\lambda_0$ to $\lambda_z$,
    counted in~$\sum_{j=1}^{\khat_a} \beta_a(j)$.
    On the other hand, while we move from~$\lambda_z$ to~$\lambda_N = \lambda_0$,
    this order between $a$ and each~$a'\in A$ should be reversed again.
    This implies that
    \[ \sum_{j=1}^k \beta_a(j) \geq \sum_{j=1}^{\khat_a} \beta_a(j) \geq 2|A|.\]
    
    To bound~$|A|$, let $B := \{ a'\in K \mid a' < a~\text{and}~\rho_z(a') < \rho_z(a)\}$.
    Obviously, $|B| \leq a-1$ and $A\cap B =\emptyset$.
    Since $A\cup B$ consists of all colors~$a'$ such that $\rho_z(a') < \rho_z(a) = \khat_a + 1$,
    we have $|A\cup B| = \khat_a$ and hence
    \[ |A| = |A\cup B| - |B| \geq \khat_a - (a - 1) = \khat_a + 1 - a,\]
    implying the inequality.
\end{proof}



For colors~$a\in K$ that have only a single element, we show a better lower
bound on~$\beta_a$ in the next lemma. By combining  we derive the inequality in 
Lemma~\ref{lem:sub-goal}. 

\begin{restatable}{lemma}{biineqonesite} \label{lem:bi_ineq_1_site}
    Let $a \in K$ be a color having only one element.
    For $1\leq k\leq m-1$, it holds that
    $\sum_{j=1}^k \beta_a(j) \geq \max\{2 ( k + 1 - a ), 0\}$.
\end{restatable}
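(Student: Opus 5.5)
The plan is to split according to the maximum rank $\rho^*(a)$ of the single-element color~$a$ and to reuse Lemma~\ref{lem:bi_ineq} in one case. Let $e$ be the unique element of color~$a$. Then $\lambda_i(a)=e$ for all~$i$, so every change of the relative order of $a$ and another color~$b$ in~$\Lambda$ is a bichromatic switch involving~$e$. If $k+1-a\leq 0$ there is nothing to prove, so assume $a\leq k$.

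\emph{Case $\rho^*(a)\geq k+1$.} Then $\khat_a=\min\{k,\rho^*(a)-1\}=k$. Lemma~\ref{lem:bi_ineq} directly gives $\sum_{j=1}^k\beta_a(j)\geq \max\{2(k+1-a),0\}$.

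\emph{Case $\rho^*(a)\leq k$.} Here the leader $e$ never leaves the first $k$ positions of any~$\lambda_i$. A bichromatic switch involving~$a$ swaps $e$, at some rank $r\leq k$, with a neighbour at rank $r\pm1$. By definition it happens at position $\min\{r,r\pm1\}\leq r\leq k$. Hence every bichromatic switch involving~$a$ is counted in $\sum_{j=1}^k$. In particular,
\[
\sum_{j=1}^k\beta_a(j)\;\geq\;\#\{\text{bichromatic switches between $a$ and colors $b>a$}\},
\]
because such switches are charged to~$a$, the smaller index.

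It remains to show that for every color $b>a$ the relative order of $a$ and $b$ in~$\Lambda$ flips at least twice. By the normalization $\rho_0(c)=c$, color~$a$ precedes~$b$ in~$\lambda_0$, so $e$ precedes every element of~$S_b$ in~$\pi_0$. Pick any $q\in S_b$. By (P2), $e$ and $q$ switch at some point. Right after that, $q$ precedes $e$, so the leader of~$b$ precedes $e$, and $b$ precedes $a$ in the corresponding~$\lambda_i$. Since the sequence is circular and returns to $\lambda_0$, where $a$ precedes $b$, the order of $a$ and $b$ changes at least twice. Each change is a bichromatic switch of the two leaders $e$ and $\lambda_i(b)$, which are adjacent at that moment.

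Summing over the $m-a$ colors $b>a$ gives at least $2(m-a)$ such switches. Since $k\leq m-1$, we have $2(m-a)\geq 2(k+1-a)$, which proves the claim in this case too.

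The main obstacle I anticipated was guaranteeing that the needed switches occur at positions at most~$k$; a priori they could happen far down the leader sequence. The case split resolves this. When $e$ does reach rank $k+1$, Lemma~\ref{lem:bi_ineq} already works with $\khat_a=k$. When it does not, the bounded rank of $e$ forces all of its switches into positions $\leq k$.
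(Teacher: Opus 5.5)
Your proof is correct and follows the paper's argument essentially verbatim. It uses the same case split: when $\rho^*(a)>k$, Lemma~\ref{lem:bi_ineq} applies with $\khat_a=k$; when $\rho^*(a)\leq k$, every bichromatic switch involving $a$ occurs at a position at most $k$, so the at least $2(m-a)\geq 2(k+1-a)$ switches charged to $a$ are all counted. Your justification that the order of $a$ and each $b>a$ flips twice, via the first switch of $e$ with some $q\in S_b$, is slightly more explicit than the paper's blanket appeal to (P2) for ``each of the other colors''; only the colors $b>a$ are actually needed, and for those your reasoning is exactly right.
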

\begin{proof}
    Let $t \in\{1, \ldots, n\}$ be the only element of color~$a$,
    so $t$ is always the leader of~$a$: $\lambda_i(a) = t$ for all~$i$.
    Since every pair of elements switch exactly twice in the sequence~$\Pi$ (condition~\hyperlink{def:permP2}{(P2)}),
    $t$ is involved in at least two bichromatic switches with each of the other colors.
    Among them, those switches with $a' > a$ are counted in~$\sum_j \beta_a(j)$,
    so we have
    \[ \sum_j \beta_a(j) = \sum_{j=1}^{\rho^*(a)} \beta_a(j) \geq 2(m-a).\]
    
    We have two cases: whether $k \geq \rho^*(a)$ or $k < \rho^*(a)$.
    If $k \geq \rho^*(a)$, then we have
    \[
    \sum_{j=1}^{k} \beta_a(j) \geq \sum_{j=1}^{\rho^*(a)} \beta_a(j)
    \geq 2 (m - a) \geq \max\{2 ( k + 1 - a ), 0\}
    \]
    since $k \leq m-1$.
    Otherwise, if $k < \rho^*(a)$, then Lemma~\ref{lem:bi_ineq} is strong enough to conclude the lemma:
    \[
    \sum_{j=1}^k \beta_a(j)
    \geq \max\{2 ( \khat_a + 1 - a ), 0\} \geq \max\{2 ( k + 1 - a ), 0\}
    \]
    since $\khat_a = \min\{k, \rho^*(a) - 1\} = k$.
\end{proof}

\begin{restatable}{lemma}{subgoal} \label{lem:sub-goal}
    For $1 \leq k \leq m-1$ and $a \in K$, it holds that
    \[
    \sum_{j=1}^{k} \beta_a(j) + \sum_{j=1}^{k} (k-j+1) \mu_a(j) \geq \max\{2(k+1-a)), 0\}.
    \]
\end{restatable}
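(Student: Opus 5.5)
We split into cases on the number of elements of color~$a$, using Lemmas~\ref{lem:bi_ineq}, \ref{lem:bi_ineq_1_site} and~\ref{lem:mc_ineq_2_sites}. Both summands on the left are nonnegative, so the bound is trivial whenever $k+1-a\leq 0$. We therefore assume $k\geq a$.

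\emph{Single-element colors.} If color~$a$ has a single element, then $\mu_a(j)=0$ for all~$j$. The claim is then exactly Lemma~\ref{lem:bi_ineq_1_site}.

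\emph{Colors with at least two elements.} Set $\khat_a=\min\{k,\rho^*(a)-1\}$ and split according to whether $k<\rho^*(a)$ or $k\geq\rho^*(a)$.
\begin{itemize}
\item If $k<\rho^*(a)$, then $\khat_a=k$. Lemma~\ref{lem:bi_ineq} gives $\sum_{j=1}^k\beta_a(j)\geq 2(k+1-a)$, and the monochromatic term is nonnegative.
\item If $k\geq\rho^*(a)$, then $\khat_a=\rho^*(a)-1$. Lemma~\ref{lem:bi_ineq} gives $\sum_{j=1}^k\beta_a(j)\geq\max\{2(\rho^*(a)-a),0\}$. Lemma~\ref{lem:mc_ineq_2_sites} gives $\sum_{j=1}^k(k-j+1)\mu_a(j)\geq 2(k+1-\rho^*(a))$.
\end{itemize}
In the second subcase we add the two bounds. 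Since $\rho_0(a)=a$, we have $\rho^*(a)\geq a$, so the first maximum equals $2(\rho^*(a)-a)$. The total is then $2(\rho^*(a)-a)+2(k+1-\rho^*(a))=2(k+1-a)$, and the $\rho^*(a)$ terms cancel as required.

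\emph{Main obstacle.} The only delicate point is that the boundary cases line up. Lemma~\ref{lem:bi_ineq} as stated assumes $\khat_a\geq a$ in its main argument, so we need its bound $\max\{2(\khat_a+1-a),0\}$ to hold whenever $\khat_a\geq a-1$. When $\khat_a=a-1$ the bound is $0$, which holds trivially, so this is fine. We also need $\rho^*(a)\geq a$, which follows from $\rho_0(a)=a$. With these two facts the case analysis closes.
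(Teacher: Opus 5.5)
Your proof is correct and follows the paper's argument. The single-element case is Lemma~\ref{lem:bi_ineq_1_site}; for colors with at least two elements you combine Lemmas~\ref{lem:bi_ineq} and~\ref{lem:mc_ineq_2_sites} with the same case split on whether $\rho^*(a)\leq k$, using $\rho^*(a)\geq\rho_0(a)=a$ so that the $\rho^*(a)$ terms cancel. (Your side remark that the boundary case $\khat_a=a-1$, skipped in the proof of Lemma~\ref{lem:bi_ineq}, is trivially covered is also right.)
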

\begin{proof}
    If there is only one element in~$\{1, \ldots, n\}$ whose color is~$a$,
    then $\mu_a(j) = 0$ for all~$j$, so Lemma~\ref{lem:bi_ineq_1_site}
    implies the lemma.
    So, in the following, we assume that there are at least two elements whose color is~$a$.
    
    Adding the inequalities of Lemmas~\ref{lem:mc_ineq_2_sites} and~\ref{lem:bi_ineq} yields
    \[
    \sum_{j=1}^{k} \beta_a(j) + \sum_{j=1}^{k} (k-j+1)\mu_a(j) 
    \geq \max\{2 (\khat_a + 1 - a), 0\} + \max\{2 ( k + 1 - \rho^*(a) ), 0\}.
    \]
    
    We simplify the right-hand side by a case analysis on $\khat = \min\{k, \rho^*(a)-1\}$.
    If $\rho^*(a) \leq k$, then
    $\khat_a = \rho^*(a)-1$ and $\khat_a + 1 - a = \rho^*(a) - a$. 
    Note also the trivial fact that $\rho^*(a) - a = \rho^*(a) - \rho_0(a) \geq 0$.
    So, the right-hand side is simplified to
    \[
    2(\rho^*(a)-a) + \max\{2(k-\rho^*(a)+1),0\} = \max\{2(k + 1 - a), 0\}.
    \]
    Otherwise, if $\rho^*(a) > k$, then $\khat_a = k$ and $\max\{2 ( k + 1 - \rho^*(a)), 0\} = 0$.
    So, 	we have
    \[
    \max\{2 (\khat_a + 1 - a), 0\} + \max\{2 ( k + 1 - \rho^*(a) ), 0\} = \max\{2 ( k + 1 - a ), 0\},
    \]
    as claimed.
\end{proof}

Summing the inequality of Lemma~\ref{lem:sub-goal} over all colors~$a\in K$,
we obtain the claimed lower bound of Theorem~\ref{thm:Gk},
as $\sum_{a\in K} \mu_a(j) = g_{1,j}$ and $\sum_{a\in K} \beta_a(j) = g_{2,j}$.
Specifically, we have 
\[
	G_k = \sum_{a\in K}\sum_{j=1}^k (\beta_a(j) + (k-j+1)\mu_a(j))
	\geq \sum_{a=1}^m \max\{2(k+1-a), 0\} = 2 \sum_{a=1}^k a = k(k+1).
\]

\subsubsection{Deriving the upper bound}

Recall our assumption that $\pi_0 = (1,2, \ldots, n)$
and $\lambda_0 = (\lambda_0(1), \ldots, \lambda_0(m))$ as above.
We start with the following observation.

\begin{restatable}{lemma}{mslast} \label{lem:ms_last}
    For 
    $a\in K$, the largest element~$t$ of color~$a$ can be
    the leader of~$a$
    at most once in~$\pi_i$ over all~$0\leq i\leq N$.
    Thus, $t$ 
    is involved in exactly zero or two monochromatic switches.
\end{restatable}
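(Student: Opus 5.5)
We argue directly from property (P2) and the normalization $\pi_0 = (1,2,\ldots,n)$. Let $t$ be the largest element of color~$a$, and let $s_1<\cdots<s_r<t$ be the other elements of $S_a$ (if $r=0$, then $t$ is always the leader and is involved in no monochromatic switch, so the claim holds with zero). In $\pi_0$ every $s_\ell$ precedes $t$. By (P2) each pair $\{s_\ell,t\}$ switches exactly twice around the circular sequence. Hence the set of indices $i$ at which $t$ precedes $s_\ell$ in $\pi_i$ is a single circular interval $I_\ell$ that avoids $0$. The key point is that the two switches of a pair split the circle into exactly two arcs, one for each relative order. Since the sequence is circular, and since the arc containing $0$ has $s_\ell$ before $t$, $I_\ell$ is a contiguous (non-wrapping) interval of indices strictly inside $(0,N)$.

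The element $t$ is the leader of $a$ in $\pi_i$ exactly when $t$ precedes every $s_\ell$, i.e.\ when $i\in\bigcap_{\ell} I_\ell$. This is an intersection of linear (non-wrapping) intervals in $\{1,\ldots,N-1\}$, so it is again an interval, possibly empty. Therefore the set of indices at which $t$ is the leader is a single contiguous run, and $t$ becomes the leader at most once.

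For the switch count, note that a monochromatic switch involving $t$ changes $t$'s leadership status. Such a switch is a swap of two adjacent same-colored elements, and it changes the leader only if one of them is the leader. So a swap of $t$ with some $s_\ell$ changes the leader precisely when $t$ enters or leaves the leader role. A swap of $t$ with $s_\ell$ while neither is the leader does not count as a monochromatic switch in $\Lambda$ under (L1). Since the leadership run is a single interval (or empty), $t$ enters once and leaves once, giving exactly two monochromatic switches, or zero.

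The only subtle step is justifying that each $I_\ell$ does not wrap around index $0$. This is immediate because $s_\ell<t$ places $s_\ell$ first in $\pi_0$, and the circular arc of reversed order is exactly the complement of the arc containing $0$. One should also handle the boundary convention that $\pi_N=\pi_0$ is the same permutation, so intervals are taken within $0,\ldots,N-1$.
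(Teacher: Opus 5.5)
Your proof is correct and follows the same argument as the paper: both use the normalization $\pi_0=(1,2,\ldots,n)$ together with (P2) to see that, for each other element $t'$ of color $a$, the phase in which $t$ precedes $t'$ is a single block avoiding index $0$, so once $t$ loses leadership (at its second switch with some $t'$) it can never regain it. Your interval-intersection phrasing packages this a bit more cleanly, and you also spell out why entering and leaving the leader role each correspond to exactly one monochromatic switch involving $t$, a step the paper leaves implicit.
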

\begin{proof}
If $t$ is the only element of color~$a$,
then the lemma is trivially true, since $\lambda_i(a) = t$ for all~$i$
and thus there is no monochromatic switches involving~$t$.
So, assume there is at least one more element~$t' < t$ of the same color~$a$.
Let $T_a$ be the set of elements~$t'$ whose colors are commonly~$a$.
Consider the subsequence~$\tau_i$ of~$\pi_i$ that contains the elements in~$T_a$.
Observe that $t$ is at the first position in~$\tau_i$
if and only if $t = \lambda_i(a)$.

Now, suppose that $t = \lambda_i(a)$ for some~$i$.
Since $t$ is at the last position in the initial permutation~$\tau_0$,
in order for~$t$ to step forward to the first position,
every other member in~$T_a$ has to be switched with~$t$,
and then they have to be switched with~$t$ again
by condition~\hyperlink{def:permP2}{(P2)}
which requires that
every pair of elements is switched exactly twice in~$\Pi$.
If the second switch between $t$ and any~$t' \in T$ happens in~$\tau_z$ at some~$0\leq z\leq N-1$,
then $t'$ precedes~$t$ in~$\tau_i$ for any~$z< i \leq N-1$.
This proves the lemma.
\end{proof}

To prove the upper bound of~$G_k$, we introduce another charging scheme,
in which every switch of our interest is charged to a unique element~$t\in\{1, \ldots, n\}$
as follows.
\begin{itemize} 
    \item Every monochromatic switch in~$\Lambda$ 
    induced by a switch of elements~$t$ and~$t'$ with~$1\leq t' < t \leq n$
    in~$\Pi$ is charged to element~$t$ with the larger index.
    Let $\bm{\hat{\mu}_t(j)}$ be the number of monochromatic switches at position~$j$ 
    in~$\Lambda$ charged to element~$t$.
    \item Every bichromatic switch in~$\Lambda$
    induced by a switch of elements~~$t$ and~$t'$ with~$1\leq t' < t \leq n$ in~$\Pi$
    is charged to element~$t$ with the larger index.
    Let $\bm{\hat{\beta}_t(j)}$ be the number of bichromatic switches 
    at position~$j$ in~$\lambda$ charged to element~$t$.
\end{itemize}
In short, each monochromatic and bichromatic switch is charged to
the involved element~$t$ of the larger index.
Obviously, we have
$g_{1,j} = \sum_{1\leq t\leq n} \hat{\mu}_t(j)$ and
$g_{2,j} = \sum_{1\leq t\leq n} \hat{\beta}_t(j)$.

The following is an immediate implication of Lemma~\ref{lem:ms_last}.
\begin{restatable}{lemma}{barmc} \label{lem:bar_mc}
For $1\leq t\leq n$,
$\sum_{j=1}^{m} \hat{\mu}_t(j) \leq 2$;
if $t = \lambda_0(a)$ for~$a\in K$, then $\sum_{j=1}^{m} \hat{\mu}_t(j) = 0$.
\end{restatable}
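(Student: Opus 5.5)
The plan is to reduce the statement to Lemma~\ref{lem:ms_last} by fixing an element $t$ of color $a$ and counting the monochromatic switches charged to $t$. By definition, $\hat{\mu}_t(j)$ counts monochromatic switches in $\Lambda$, at position $j$, that come from a switch of $t$ with some $t'<t$ of the same color in $\Pi$. A switch in $\Pi$ between two elements of color $a$ changes the leader sequence only if one of the two is the current leader $\lambda_i(a)$. In that case the leader of $a$ changes from $t$ to $t'$ or from $t'$ to $t$. So every such switch charged to $t$ is an event at which $t$ either stops or starts being the leader of $a$. Summing over $j$ therefore gives the total number of times $t$ enters or leaves leadership of $a$ through a switch with a smaller element of color $a$.

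The key step is to show that $t$ is the leader of $a$ during at most one contiguous interval of the circular sequence. Entering and leaving that interval then account for at most two charged monochromatic switches. Lemma~\ref{lem:ms_last} gives exactly this when $t$ is the largest element of color $a$. For a general $t$, I would not rely on that lemma directly. Instead I restrict attention to the elements of $T_a$ that are smaller than $t$ and repeat its argument.

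In the initial permutation $\pi_0=(1,\dots,n)$, all such smaller elements precede $t$. For $t$ to become the leader, each of them must switch with $t$ once. By (P2), each of them switches with $t$ exactly twice in total. After its second switch it precedes $t$ again and stays in front until the sequence wraps to $\pi_0$. Hence the indices $i$ at which $t$ precedes all smaller elements of color $a$ form a single interval. Leadership of $t$ is contained in that interval, and within it only switches with larger elements of color $a$ can change whether $t$ is the leader. Those switches are not charged to $t$. Consequently, only the entry into the interval and the exit from it can produce switches charged to $t$, and this gives $\sum_j\hat{\mu}_t(j)\le 2$.

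I expect this last point to be the main obstacle: $t$ may gain and lose leadership several times inside the interval, but only the boundary events involve switches with smaller elements. One also has to check the cyclic wrap-around, which is harmless because in $\pi_0$ all smaller elements precede $t$.

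For the second claim, suppose $t=\lambda_0(a)$. Then $t$ is the smallest element of color $a$, since $\pi_0$ is sorted. No $t'<t$ of color $a$ exists, so no monochromatic switch is charged to $t$ and $\sum_j\hat{\mu}_t(j)=0$.
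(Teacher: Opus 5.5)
Your proof is correct and follows essentially the same route as the paper. The paper also restricts to the elements of color $a$ that are at most $t$, notes that every monochromatic switch charged to $t$ is a switch at the first two positions of this restricted subsequence, and applies Lemma~\ref{lem:ms_last} there with $t$ as the largest element; that lemma's single-interval argument is what you re-derive inline, and your handling of the case $t=\lambda_0(a)$ matches the paper's.
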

\begin{proof}
If $t = \lambda_0(a)$ is the initial leader of color~$a$,
then there is no other element~$t' < t$ of the same color~$a$.
Hence, there is no monochromatic switch contributed to~$\hat{\mu}_t(j)$ for all~$j$.
So, we assume that
$t$ is in color~$a$ and
there is at least one more element~$t' < t$ of the same color~$a$.
Here, we redefine $T_a$ to be the set of elements~$t' \leq t$ whose colors are commonly~$a$,
and 
consider the subsequence~$\tau_i$ of~$\pi_i$ that contains the elements in~$T_a$,
as in the proof of Lemma~\ref{lem:ms_last}.
Observe that any monochromatic switch in~$\Lambda$ involving~$t$
corresponds to a switch between the first and second positions that involves~$t$ in the~$\tau_i$'s,
while the converse is not necessarily true.
Hence, we have $\sum_{j=1}^{m} \hat{\mu}_t(j) \leq 2$ by Lemma~\ref{lem:ms_last}.
\end{proof}

\begin{restatable}{lemma}{barsubgoal} \label{lem:bar_sub-goal}
For $1 \leq k \leq m-1$ and $1\leq t\leq n$, it holds that
\[
\sum_{j=1}^{k} (\hat{\beta}_t(j) + (k-j+1) \hat{\mu}_t(j)) \leq 
\begin{cases} 
    2 \min\{a-1, k\} & \text{if}~t=\lambda_0(a)~\text{for}~a\in K\\
    2k & \text{otherwise}
\end{cases}
\]
\end{restatable}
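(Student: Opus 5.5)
The plan is to fix an element $t$ of color $a$ and follow how its status evolves along $\Pi$. Every switch charged to $t$ is a switch of $t$ with some smaller element $t'<t$. Since $\pi_0=(1,\ldots,n)$, each such $t'$ starts ahead of $t$. By~(P2) it is overtaken by $t$ exactly once and overtakes $t$ back exactly once. I would define the potential
\[
\Phi_i := \begin{cases} \max\{k+1-\rho_i(a),0\} & \text{if } \lambda_i(a)=t,\\ 0 & \text{otherwise,}\end{cases}
\]
and check the effect of each switch charged to $t$. A bichromatic switch at a position $j\leq k$ in which $t$ overtakes the leader $t'$ raises $\Phi$ by one; the reverse switch lowers it by one. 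A monochromatic switch at position $j\leq k$ in which $t$ becomes, or stops being, the leader of $a$ changes $\Phi$ by exactly $k-j+1$. The left-hand side of the lemma is therefore the total variation of $\Phi$ caused by charged switches. By Lemma~\ref{lem:bar_mc}, the monochromatic part consists of at most two jumps, and of none if $t=\lambda_0(a)$.

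Next I would decompose $\Phi$ into levels $\ell=1,\ldots,k$, writing $\Phi_i=\sum_{\ell}[\,\lambda_i(a)=t \text{ and } \rho_i(a)\leq \ell\,]$. A charged switch at position $j$ changes the indicators of exactly the levels $\ell\geq j$. A bichromatic switch changes one level, namely $\ell=j$. A monochromatic switch changes the $k-j+1$ levels $\ell=j,\ldots,k$, which matches the weight $(k-j+1)$. It therefore suffices to show, for each level $\ell$, that the indicator is toggled by charged switches at most twice over the whole circular sequence: at most once ``on'' and at most once ``off''. For the ``on'' direction, entering rank $\leq\ell$ through a charged switch means $t$ goes ahead of a smaller element $t'$ that was the leader at rank $\ell$, or that was the leader of $a$. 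After this, $t'$ stays behind $t$ until its single return switch. I would argue that the set of colors having a smaller element ahead of $t$ therefore changes monotonically on the relevant stretches. Summing two toggles over the $k$ levels gives $2k$.

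For the case $t=\lambda_0(a)$, the monochromatic part vanishes. Initially $t$ sits at rank $a$ and every leader ahead of it is smaller than $t$. The levels $\ell\geq a$ are ``on'' at $i=0$. Leaving such a level requires a color to get ahead of $t$, either through a larger element (not charged to $t$) or through a returning smaller element. I would show that the returning smaller elements merely restore the initial situation, so that in a full circle these levels are toggled by charged switches a net zero times, and in fact zero times. This leaves only the $\min\{a-1,k\}$ levels $\ell<a$, each toggled at most twice, which gives $2\min\{a-1,k\}$.

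The main obstacle is the ``at most twice per level'' claim. Smaller elements leave and return interleaved, possibly together with larger elements of other colors, so the rank of $t$ is not unimodal. I would first try to prove it by showing that each charged ``on''-toggle at level $\ell$ consumes a distinct smaller element $t'$ that must then stay behind $t$. The subsequent charged ``off''-toggle at level $\ell$ would then have to be the return of an element of the same color class. If that pairing fails, the fallback is to bound the charged upward toggles and the charged downward toggles at each level separately by one, using the first-versus-second switch structure from~(P2) in the same way as in the proof of Lemma~\ref{lem:ms_last}.
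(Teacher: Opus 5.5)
Your reformulation is correct: the left-hand side is exactly the number of toggles caused by switches charged to $t$, summed over the levels $\ell=1,\dots,k$.

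The gap is the claim that each level is toggled by charged switches at most twice. That claim is false. So is the claim that, for $t=\lambda_0(a)$, the levels $\ell\ge a$ are never toggled by charged switches.

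For the first claim, take $n=m=3$ with all colors distinct, $\pi_0=(1,2,3)$, and the switches $23,32,12,13,31,21$, where $xy$ means $x$ is ahead before the switch. This gives $(1,2,3)\to(1,3,2)\to(1,2,3)\to(2,1,3)\to(2,3,1)\to(2,1,3)\to(1,2,3)$, and every pair switches exactly twice. Let $t=3$, so $a=3$, and take $k=2$. All four charged switches $23,32,13,31$ occur at position~$2$. Hence level~$2$ is toggled on, off, on, off, while level~$1$ is never toggled. The lemma still holds, with equality $4=2\min\{a-1,k\}$.

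The reason is that a smaller element can be overtaken and come straight back. The next smaller element can then be used at the \emph{same} level. So the smaller elements are a budget shared by all levels, not one per level.

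For the second claim, use the switches $23,13,12,32,31,21$ with $t=2$, so $a=2$, and $k=2$. The overtaking by~$3$ is uncharged and pushes $t$ to rank~$3$. The charged switch $12$ then brings $t$ back to rank $2=a$, which toggles level~$a$.

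So neither your main plan nor the fallback (one charged up-toggle and one charged down-toggle per level) can be completed. The bound only holds once toggles are traded between levels.

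The missing ingredient is a global count, which is what the paper uses.

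\begin{itemize}
\item Restrict $\Pi$ to $T=\{1,\dots,t\}$. Ranks in the restricted leader sequence $\Lambda'$ are at most those in $\Lambda$, and the larger elements, which cause the uncharged rank changes, are gone.
\item In $\Lambda'$, $t$ switches with the leader of any fixed other color at most twice: once when it passes the last element of that color still ahead of it, and once when the first such element comes back.
\item Hence for $t=\lambda_0(a)$ there are at most $2(a-1)$ such bichromatic switches in total.
\item The rank of $t$ in $\Lambda'$ changes by exactly $\pm1$ at each of these switches, and it starts and ends at $a$. So any switch at a position $\le k$ forces $2(a-k-1)$ switches at positions $>k$. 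This leaves at most $2\min\{a-1,k\}$ at positions $\le k$.
\item For $t\neq\lambda_0(a)$, let $j_1,j_2$ be the restricted positions at which $t$ becomes, and stops being, the leader. The colors lying entirely behind $t$ at those two moments are crossed while $t$ is not a leader, so those crossings are not bichromatic switches. This caps the bichromatic count at $j_1+j_2-2$.
\item The same path argument then gives at most $\min\{j_1-1,k\}+\min\{j_2-1,k\}$ bichromatic switches at positions $\le k$. The monochromatic weights contribute at most $\max\{k-j_1+1,0\}+\max\{k-j_2+1,0\}$, and the two bounds add to exactly $2k$.
\end{itemize}
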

\begin{proof}
First, assume that $t=\lambda_0(a)$ for some color~$a\in K$.
Note, by Lemma~\ref{lem:bar_mc}, that $\sum_j \hat{\mu}_t(j)=0$,
so we are interested in an upper bound of~$\sum_{j=1}^k \hat{\beta}_t(j)$.
Let $T = \{1, \ldots, t\}$ be the set of elements preceding~$t$ in~$\pi_0$, including~$t$.
We consider another circular sequence $\Pi' = (\pi'_0, \ldots, \pi'_{N-1})$
of permutations~$\pi'_i$ of~$T$ 
such that $\pi'_i$ is the subsequence of~$\pi_i$
that contains the elements in~$T$.
We also let $\Lambda' = \Lambda(\Pi') = (\lambda'_0, \ldots, \lambda'_{N-1})$
be the sequence of color leaders induced from~$\Pi'$.
Note that each bichromatic switch in~$\Lambda$ counted in the sum~$\sum_{j} \hat{\beta}_t(j)$
corresponds to a bichromatic switch in~$\Lambda'$ involving~$t$,
since $t$ is the largest element in~$T$ and 
we charge each switch in~$\Lambda$ to the larger element of the two involved elements.
Hence,
it suffices for our purpose 
to have an upper bound on the number of bichromatic switches involving~$t$ in~$\Lambda'$.

Let $T_b \subseteq T$ for~$b \in K$ be the set of elements in~$T$ whose color is~$b$.
Note that $T_a = \{t\}$ and $T_b = \emptyset$ for $b > a$ by our construction.
Pick any color~$b < a$.
While we move from~$\pi'_0$ to~$\pi'_{N-1}$,
observe that the first bichromatic switch between~$t$ and the leader of color~$b$
can happen after switches between~$t$ and every member~$t'\in T_b$,
since $t$ is at the last position in~$\pi'_0$.
If there is the second bichromatic switch between~$t$ and the leader~$t'$ of~$b$,
then $t'$ precedes~$t$ in~$\pi'_i$ afterwards until~$i=N$
by condition~\hyperlink{def:permP2}{(P2)},
so $t$ will never make a third bichromatic switch with the leader of color~$b$.
Therefore, for each color~$b<a$, $t$ is involved in at most two bichromatic switches with
the leader of~$b$, implying that
\[ 
\sum_{j=1}^{m-1} \hat{\beta}_t(j) \leq 2(a-1).
\]

Now, we consider those switches at position~$k+1$ or larger, that is,
those counted in~$\sum_{j=k+1}^{m-1}\hat{\beta}_t(j)$.
If $k \geq a$, then it is zero and we get 
$\sum_{j=1}^k \hat{\beta}_t(j) \leq 2(a-1)$.
We then assume $k < a$.
Since the initial position of~$t$ in~$\lambda'_0$ is~$a$,
at least $a - k - 1$ bichromatic switches are necessary
for~$t$ to arrive at position~$k$ or smaller.
Hence, if $\hat{\beta}_t(j) \geq 1$ for some~$j \leq k$,
then we have
\[ \sum_{j=k+1}^{m-1} \hat{\beta}_t(j) \geq 2(a-k-1)\]
since we need additional $a-k-1$ switches for~$t$ 
to return back to the original last position in~$\pi'_N = \pi'_0$.
This implies that
\[ \sum_{j=1}^k \hat{\beta}_t(j) = \sum_{j=1}^{m-1} \hat{\beta}_t(j) - \sum_{j=k+1}^{m-1}\hat{\beta}_t(j)
\leq 2(a-1) - 2(a-k-1) = 2k.
\]
Hence, we have
\[
\sum_{j=1}^k \hat{\beta}_t(j) \leq \min\{2(a-1), 2k\}
\]
if $t = \lambda_0(a)$ for~$a\in K$.

Next, 
assume that $t$ is not the leader of~$a$ in~$\pi'_0$, so $|T_a| \geq 2$. 
Since $t$ is the largest element of~$T_a$,
the total number of monochromatic switches involving~$t$ in~$\Lambda'$
is either $0$ or $2$ by Lemma~\ref{lem:ms_last}.
In the former case, $\lambda'_i(a) \neq t$ for all~$i$,
so $\hat{\beta}_t(j) = 0$ for all~$j$.
We thus assume the latter case. 
Observe that the above arguments apply to show the same upper bound
$\sum_{j=1}^k \hat{\beta}_t(j) \leq \min\{2(a-1), 2k\}$ in this case as well.
In the following, we consider the two monochromatic switches involving~$t$
to obtain the claimed upper bound.

Let $z_1, z_2$ with $0\leq z_1 < z_2 < N$ and
$j_1, j_2$ with $1\leq j_1, j_2 \leq a$ be integers such that
the first monochromatic switch involving~$t$ happens at position~$j_1$ of~$\lambda'_{z_1}$
and the second one happens at position~$j_2$ of~$\lambda'_{z_2}$.
That is, $t = \lambda'_i(a)$ for~$z_1\leq i < z_2$.
While we move from~$\pi'_0$ to~$\pi'_{z_1}$,
element~$t$ switches at least $a-j_1$ leaders of other colors~$b<a$,
but none of them are bichromatic switches in~$\Lambda'$,
as $t \neq \lambda_i(a)$ for $0\leq i<z_1$.
Analogously, while we move from~$\pi'_{z_2}$ to~$\pi'_N = \pi'_0$,
there are at least~$a-j_2$ switches that are not counted as bichromatic switches in~$\Lambda'$.
Hence, we have
\[
\sum_{j=1}^{m-1} \hat{\beta}_t(j) \leq 2(a-1) - (a-j_1) - (a-j_2) = j_1 + j_2 - 2.
\]
By a similar argument as above, 
if $j_1 > k$, then we have $\sum_{j=k+1}^{m-1} \hat{\beta}_t(j) \geq j_1 - k - 1$;
if $j_2 > k$, $\sum_{j=k+1}^{m-1} \hat{\beta}_t(j) \geq j_2 - k - 1$;
and if both $j_1 > k$ and $j_2>k$, then
$\sum_{j=k+1}^{m-1} \hat{\beta}_t(j) \geq j_1 + j_2 - 2k - 2$.
We thus have
\[
\sum_{j=1}^k \hat{\beta}_t(j) \leq j_1 + j_2 -2 - \max\{j_1 - k -1, 0\} - \max\{j_2-k-1,0\}
\leq \min\{j_1 - 1, k\} + \min\{j_2 - 1, k\},
\]
on one hand.
On the other hand, we observe that
\[ 
\sum_{j=1}^k (k-j+1)\hat{\mu}_t(j) \leq \max\{k - j_1 + 1, 0\} + \max\{k - j_2 + 1, 0\}
\]
since the actual positions at which the monochromatic switches happen in
the original leader sequences~$\Lambda$ are not smaller than~$j_1$ and $j_2$, respectively.
Adding these two inequalities, we obtain
\begin{align*}
    & \sum_{j=1}^k (\hat{\beta}_t(j) + (k-j+1)\hat{\mu}_t(j)) \\
    \leq & \min\{j_1 - 1, k\} + \min\{j_2 - 1, k\} + \max\{k - j_1 + 1, 0\} + \max\{k - j_2 + 1, 0\}\\
    = & \min\{j_1 - 1, k\} + \min\{j_2 - 1, k\} + \max\{-j_1+1, -k\} + \max\{-j_2 + 1, -k\} + 2k\\
    = & \min\{j_1 - 1, k\} + \min\{j_2 - 1, k\} - \min\{j_1 - 1, k\} - \min\{j_2 - 1, k\} + 2k
    = 2k,
\end{align*}
as claimed.
\end{proof}

The upper bound on~$G_k$ shown in Theorem~\ref{thm:Gk} is obtained
by summing up the inequality in Lemma~\ref{lem:bar_sub-goal}
over all elements~$t\in\{1, \ldots, n\}$.
A detailed derivation shows:
\begin{align*}
    G_k & = \sum_{j=1}^k (g_{2,j} + (k-j+1)g_{1,j}) \\
    & \leq \sum_{t\in \lambda_0} \sum_{j=1}^k (\hat{\beta}_t(j) + (k-j+1)\hat{\mu}_t(j))
    + \sum_{t\notin\lambda_0} \sum_{j=1}^k (\hat{\beta}_t(j) + (k-j+1)\hat{\mu}_t(j))\\
    & \leq \sum_{a=1}^m \min\{2(a-1), 2k\} + \sum_{t\notin \lambda_0} 2k\\
    & = \sum_{a=1}^k 2(a-1) + 2k(n-k) = k(k-1) + 2k(n-k) = k(2n-k-1).
\end{align*}

\subsubsection{The tightness}
The bounds $k(k+1) \leq G_k \leq k(2n-k-1)$ are shown to be
tight by constructions. 

\begin{restatable}{lemma}{Gktight} \label{lem:Gk_tight}
    For any two integers~$n,m$ with $1\leq m\leq n$,
    there exist two circular sequences~$\Pi_{n,m}$ and $\Pi'_{n,m}$ of permutations of
    $n$~elements colored with $m$~colors
    such that $G_k(\Pi_{n,m}) = k(2n-k-1)$ and $G_k(\Pi'_{n,m}) = k(k+1)$
    for any~$1\leq k\leq m-1$.
\end{restatable}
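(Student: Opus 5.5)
I would prove the lemma by giving two explicit constructions. In each one I would check equality in the relevant charging inequalities, not compute $G_k$ directly. For $\Pi_{n,m}$, the upper bound $k(2n-k-1)$ is the sum of the per-element bounds of Lemma~\ref{lem:bar_sub-goal}. So it suffices to build a sequence in which every element $t$ attains its bound for every $k\le m-1$ simultaneously. For $\Pi'_{n,m}$, the lower bound $k(k+1)$ is the sum of Lemma~\ref{lem:sub-goal} over the colors. So it suffices that each color $a$ attains $\max\{2(k+1-a),0\}$ and that no other switches land in positions $\le k$. In both constructions I would use the uncolored case $m=n$ as the base, namely the sequences realizing $k(k+1)$ and $k(2n-k-1)$ on the first $k$ positions from~\cite{bcklpz-choavd-15}. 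I would then attach the extra $n-m$ elements of one color so that their effect on positions $1,\dots,m-1$ is controlled.

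\emph{Lower bound.} Elements $1,\dots,m-1$ get the singleton colors $1,\dots,m-1$, and elements $m,\dots,n$ all get color $m$. I start from the uncolored extremal circular sequence on $\{1,\dots,m\}$, in which element $m$ always represents color $m$ (it is the initial leader). I then interleave the remaining color-$m$ elements as a \emph{block} that trails element $m$.
\begin{itemize}
\item Whenever $m$ switches with a singleton, the rest of the block follows it immediately; those switches involve non-leaders and so are invisible in~$\Lambda$.
\item In each half of the sequence there is a moment when the block sits at the end of the permutation, i.e., element $m$ is at leader-position $m$. At that moment I reverse the block internally. This produces all same-color switches, exactly two per pair over the two halves. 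The leader replacements these cause happen at position $m$ and contribute nothing to $G_k$ for $k\le m-1$.
\end{itemize}
I must check that (P1) and (P2) hold, i.e., that every pair switches exactly twice. Then $\Lambda$ coincides, up to leader relabelings at position~$m$, with the uncolored sequence on $m$ elements. Hence $G_k(\Pi'_{n,m})=k(k+1)$.

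\emph{Upper bound.} Elements $1,\dots,m$ get distinct colors, so they are the initial leaders, and every $t>m$ gets color $m$. The proof of Lemma~\ref{lem:bar_sub-goal} shows what is needed for equality.
\begin{itemize}
\item Each initial leader $t=\lambda_0(a)$ must make exactly two bichromatic switches with every smaller leader, all of them at positions $\le k$ whenever $a-1\le k$.
\item Each $t>m$ must become the leader of color $m$ twice (at positions $j_1,j_2$). While it is the leader it must travel to the front and back, making bichromatic switches with all $m-1$ other leaders. Its monochromatic switches must occur exactly at $j_1,j_2$ in $\Lambda$.
\end{itemize}
I would build the sequence inductively: insert the elements $t=m+1,\dots,n$ one by one at the back. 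In the first half, $t$ moves forward and replaces the current color-$m$ leader at position $m$ ($j_1=m$). It then bubbles to position $1$ as the leader, sweeping all other leaders. In the second half it reverses this path. With $j_1=j_2=m$, the bound of Lemma~\ref{lem:bar_sub-goal} is met with equality for every $k\le m-1$: its $2k$ switches at positions $\le k$ are all bichromatic.

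The main obstacle is the bookkeeping that the interleavings form a legitimate circular sequence satisfying (P1) and (P2) with no extra visible switches. In particular, in the upper-bound construction a newly inserted $t$ must also switch with the earlier color-$m$ elements. I must arrange that these switches happen while neither element is the leader, or exactly at the two allowed monochromatic events. My first attempt would be a concrete schedule in which $t$ passes the older color-$m$ elements while they sit behind the current leader. I would verify the resulting switch counts on small cases ($n=3$, $m=2$) before writing the general induction.
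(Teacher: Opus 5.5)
Your proposal is correct. For $\Pi_{n,m}$ it is the paper's construction. If you also build the base on $\{1,\dots,m\}$ by your own insertion rule (each new element bubbles from the back to the front and straight back, with nothing interleaved), you get exactly the concatenation of turns $\Sigma_1,\dots,\Sigma_n$. There each pair is switched exactly twice, during the turn of its larger element, so the bookkeeping you flag as the main obstacle is automatic. The paper lets the extra elements have arbitrary colors. A non-initial-leader of color $a$ yields $2\min\{a-1,k\}$ from bichromatic switches at positions $1,\dots,a-1$, and $2\max\{k-a+1,0\}$ from its two monochromatic switches at position $a$. That is exactly $2k$, matching Lemma~\ref{lem:bar_sub-goal}; your choice of color $m$ simply makes the second term vanish.

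For $\Pi'_{n,m}$ your packaging differs. The paper uses the mirror turns $\Sigma'_n,\dots,\Sigma'_1$, in which each $t$ bubbles to the back and returns. So the extra elements always trail all $m$ initial leaders, and every monochromatic switch lands at position $m$ for any coloring of the extras. The turn of $t\le k$ gives $2(k-t+1)$ switches at positions $\le k$, and the other turns give none there. You instead lift an uncolored sequence on $m$ elements by gluing a one-color block to a representative and reversing it at the tail. With your coloring and the natural base, the two sequences differ only in how the block-internal switches are scheduled while the block is the tail. Your formulation buys a general lifting principle (any uncolored sequence lifts with the same $G_k$). The paper's version is self-contained and does not depend on how the extra elements are colored.

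Two points to tighten. First, the base must attain $k(k+1)$, resp.\ $k(2m-k-1)$, for all $k\le m-1$ simultaneously, not just for each $k$ separately. This is most easily ensured by taking the two turn constructions with $n=m$, rather than relying on the citation. Second, the ``moment in each half'' is unnecessary. At $\pi_0=(1,\dots,n)$ the block is already the tail, so you can reverse it and reverse it back there. This gives every same-color pair exactly two switches, all at position $m$.
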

\begin{proof}
    Let $m$ and $n$ be any two integers with~$2\leq m\leq n$.
    The colors of elements~$\{1, \ldots, n\}$ are assigned as follows:
    the colors of the first $m$~elements are all distinct
    such that the color of~$t$ for $1\leq t\leq m$ is~$t$,
    and the other~$n-m$ elements are colored arbitrarily.
    Initially, set $\pi_0 = (1,2, \ldots, n)$.
    The sequence~$\Pi_{n,m} = (\pi_0, \ldots, \pi_{N-1}, \pi_N = \pi_0)$ is described 
    by a sequence~$\Sigma$ of switches~$\sigma_i$ between two neighboring elements in~$\pi_i$.
    The sequence~$\Sigma$ is again divided into $n-1$~blocks,
    called \emph{turns}.
    For each element~$t\in\{1, \ldots, n\}$,
    the turn of~$1$ is empty, $\Sigma_1 = \varnothing$;
    the turn of~$2$ consists of two switches, $\Sigma_2 = (12, 21)$;
    the turn of~$3$ consists of four switches, $\Sigma_3 = (23, 13, 31, 32)$.
    In general, 
    the turn of~$t$ consists of $2(t-1)$ switches 
    \[ \Sigma_t = ((t-1)t, (t-2)t, \ldots, 2t, 1t, t1, t2 \ldots, t(t-2), t(t-1)). \]
    Then the whole sequence~$\Sigma$ of switches is
    the concatenation of these turns in order:
    \[ \Sigma = \Sigma_1, \Sigma_2, \ldots, \Sigma_n.\]
    It is obvious that the number of switches in~$\Sigma$ is exactly $N=2\binom{n}{2} = n(n-1)$
    and every pair of two elements is switched exactly twice in~$\Pi_{n,m}$.
    
    Observe that the turn of~$t$ moves~$t$ forward to the first position
    (that is, $\pi_{(t-1)^2} = (t, 1, 2, \ldots, t-1, t+1, \ldots, n)$)
    and then back to its original position,
    so after each turn we get back to the initial permutation~$\pi_0$ 
    (that is, $\pi_{t(t-1)} =\pi_0$).
    From this observation,
    the turn of~$t$ makes exactly $2(a-1)$ bichromatic switches,
    where $a$ is the color of~$t$:
    exactly two at position~$j$ for~$1\leq j\leq a-1$.
    In addition,
    the turn of~$t$ makes exactly two monochromatic switches at position~$a$
    if the color of~$t$ is~$a$ and $\lambda_0(a) \neq t$;
    if $t = \lambda_0(a)$, then no monochromatic switch happens.
    So, the contribution of the turn of~$t$ to~$G_k(\Pi_{n,m})$ is exactly
    \[ \begin{cases}
        \min\{2(a-1), 2k\} & \text{if}~t=\lambda_0(a)\\
        2k & \text{otherwise}.
    \end{cases}
    \]
    The latter case follows from $2(a-1)$ bichromatic switches and
    two monochromatic switches at position~$a$, which contributes an amount of~$2(k - a + 1)$
    to~$G_k(\Pi_{n,m})$.
    Summing this up over all~$t$ gives us
    \[ G_k(\Pi_{n,m}) = k(2n-k-1).\]
    
    The second construction~$\Pi'_{n,m} = (\pi'_0, \ldots, \pi'_{N-1}, \pi'_N =\pi'_0)$ 
    is done in a similar fashion,
    but with a different sequence~$\Sigma'$ of switches.
    Set $\pi'_0 = (1,2, \ldots, n)$,
    and we color the elements as in the above construction~$\Pi_{n,m}$.
    The sequence~$\Sigma'$ of switches consists of turns:
    For each element~$t\in\{1, \ldots, n\}$,
    the turn of~$t$ is given as
    \[ \Sigma'_t = (t(t+1), t(t+2), \ldots, t(n-1), tn, nt, (n-1)t, \ldots, (t+2)t, (t+1)t ).\]
    Then the whole sequence~$\Sigma'$ of switches is as follows:
    \[ \Sigma' = \Sigma'_n, \Sigma'_{n-1}, \ldots, \Sigma'_1.\]
    Observe that the turn of~$t$ moves~$t$ backward to the last position
    and then back to its original position,
    so after each turn we get back to the initial permutation~$\pi'_0$.
    
    In this sequence~$\Pi'_{n,m}$, 
    note that every monochromatic switch happens at position~$m$ by construction.
    Further, observe that
    the turn of~$t$ for every~$t > k$ makes no switches at position~$j$ for~$j\leq k$, 
    neither monochromatic nor bichromatic.
    For~$t \leq k$, the turn of~$t$ makes exactly $2(k-t+1)$ bichromatic switches
    at position~$j$ for $j\leq k$.
    Hence, we have
    \[ 
    G_k(\Pi'_{n,m}) = \sum_{t=1}^k 2(k-t+1) = k(k+1)
    \]
    for any~$1\leq k\leq m-1$.
    
    Note that a more careful analysis can show that
    any circular sequence~$\Pi$ determined by the sequence~$\Sigma_{n,m}$ 
    (or~$\Sigma'_{n,m}$, resp.)
    of switches with any arbitrary initial permutation~$\pi_0$
    yields the same quantity~$G_k(\Pi) = k(2n-k-1)$ ($G_k(\Pi) = k(k+1)$, resp.).
\end{proof}


%
\section{Algorithm}
\label{sec:alg}

The minimal color Voronoi diagram $\CVD_k(S)$ can be computed
iteratively from 
order~$1$ up to~$k$ in $O(k^2n + n \log n)$ expected  or  $O(k^2 n\log n)$
worst-case time~\cite{bop-hocvdccsf-25}. 
However, this does not apply  to 
the maximal diagram 
because the order~$k{-}1$ diagram $\mCVD_{k-1}(S)$ does not contain sufficient
information to determine $\mCVD_k(S)$. 
For point sites under convex distance functions the following property holds: 
\emph{the sequence of sites defining unbounded regions of the minimal and the
maximal diagram is the same}.
In that case $\mCVD_k(S)$ can be obtained iteratively by computing $\CVD_k(S)$ at the same time~\cite{bop-hocvdccsf-25}.
However, this property does not hold for 
general sites 
nor for abstract Voronoi diagrams. 
\deleted{
Bae, Oliver and Papadopoulou~\cite{bop-hocvdccsf-25} 
presented an iterative algorithm that computes concrete color Voronoi diagrams from order~$1$ up to~$k$ in $O(k^2n + n \log n)$ expected time or in $O(k^2 n\log n)$ worst-case time. It assumes that (1) they follow the model of abstract Voronoi diagrams and that (2) the sequence of sites defining unbounded regions of the minimal diagram is the same as that of the maximal diagram. More precisely, on one hand, the second assumption (2) is not required for computing the minimal diagrams, so their algorithm already works for computing the minimal abstract color Voronoi diagrams. On the other hand, it is already described in \cite[Theorem~27]{bop-hocvdccsf-25} how to compute the maximal diagrams, provided the sequence of sites defining unbounded regions of maximal diagrams of each order. In the following, we present a sufficiently fast algorithm to directly compute the unbounded edges in the refined maximal diagrams $\mCVD^*_1(S), \ldots, \mCVD^*_k(S)$, and thus show that the second assumption (2) can be dropped.

This results in the first algorithm to compute the $\mCVD_{k}$ for segment
sites, and the first iterative algorithm we know of to compute the diagrams in
reverse order $\VD_{n-1}(S), \ldots, \VD_k(S)$.
}

In this paper we devise a direct divide-and-conquer
algorithm to compute the unbounded edges of $\mCVD_k(S)$, after which
$\mCVD_k(S)$ can be computed 
from $\mCVD_{k-1}(S)$.
This results in the first algorithm to compute the $\mCVD_{k}(S)$
for generalized sites, such as segments or disks,
including the order~$k$ simple polygon Voronoi diagram, and the first iterative
algorithm we know of to compute the ordinary diagrams in reverse order
$\VD_{n-1}(S), \ldots, \VD_k(S)$,
which is  efficient for large values of $k$.
Let the diagrams $\CVD^*_1(S), \ldots, \CVD^*_k(S)$ and $\mCVD^*_1(S), \ldots,
\mCVD^*_k(S)$ be abbreviated as $\CVD^*_{\leq k}(S)$ and $\mCVD^*_{\leq k}(S)$
respectively.

Our divide-and-conquer algorithm splits the set of colors $K$ at each step.
For the base case, if $k \geq m/2$, then we obtain the unbounded edges in
$\mCVD^*_{\leq k}(S)$ by walking along $\Gamma$ and computing the unbounded
edges of $\mCVD^*_{\leq m}(S)$; see Lemma~\ref{lem:alg_base_case} that details the conquer step.
If $k < m/2$,
we divide the $m$ colors in $K$ into two disjoint subsets, $K_1$ and $K_2$, of
roughly equal size; 
let $S_1$ and $S_2$ be the sites of the colors in $K_1$ and $K_2$ respectively;
see  Lemma~\ref{lem:alg_merge}.

\begin{restatable}{lemma}{algbasecase} \label{lem:alg_base_case}
	The unbounded edges in $\mCVD^*_{\leq m}(S)$ can be computed in $O(m^2(n-m+1) + n \log n)$ time.
\end{restatable}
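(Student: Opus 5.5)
We compute the unbounded edges of all refined maximal diagrams $\mCVD^*_1(S),\ldots,\mCVD^*_m(S)$ by walking once along the curve $\Gamma$ at infinity. Along the walk we maintain the reversed permutation $\bar\pi_i$ and its color-leader sequence $\bar\lambda_i$, and we report every switch of $\Lambda(\overline{\Pi}(S))$. By Lemma~\ref{lem:U-G}, a monochromatic switch at position $j$ is exactly a new monochromatic unbounded edge of $\mCVD^*_j(S)$, and a bichromatic switch at position $j$ is exactly a new bichromatic unbounded edge of $\mCVD^*_j(S)$. By Lemma~\ref{lem:consecutive_orders}, each such edge appears in one refined diagram (if $1$-chromatic) or in two consecutive ones (if $2$-chromatic). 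So listing all switches of $\Lambda(\overline{\Pi}(S))$ together with their positions gives all unbounded edges of $\mCVD^*_{\leq m}(S)$.

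\textbf{Setup and starting order.} First we fix a starting point $x_0\in\Gamma$. We compute the total order $\spreceq_{x_0}$ on $S$ by sorting with a comparison oracle based on the bisectors; this costs $O(n\log n)$ under the usual AVD primitive assumptions. From this order we obtain $\bar\pi_0$ and $\bar\lambda_0$, and we store the sites of each color in a balanced search tree keyed by the current order.

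\textbf{The walk and its events.} We do not process all $n(n-1)$ switches of $\Pi(S)$, since that would cost $\Theta(n^2)$. Only the following events matter:
\begin{itemize}
\item changes of a color's leader in the reversed order, i.e.\ of its farthest site, which are monochromatic switches;
\item swaps of two adjacent leaders, which are bichromatic switches.
\end{itemize}
The farthest site of color $a$ along $\Gamma$ changes exactly at the unbounded edges of $\FVD(S_a)$. So for each color we precompute $\FVD(S_a)$, at total cost $O(n\log n)$, and read off the cyclic sequence of its unbounded regions. This gives the leader of $a$ as a piecewise-constant function along $\Gamma$ with $O(|S_a|)$ breakpoints, so $O(n)$ breakpoints overall.

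Between breakpoints the leaders are fixed sites, and their relative order changes only at crossings of bisectors $\Bisector{p}{q}$ between the current leaders $p,q$ of two colors. These are kinetic events on a list of $m$ items whose "trajectories" switch at $O(n)$ times. We process them with a kinetic sorted list:
\begin{itemize}
\item keep the leader sequence $\bar\lambda$ as an array;
\item for each adjacent pair of leaders, keep in a priority queue its next crossing on $\Gamma$, i.e.\ the next intersection of their bisector with $\Gamma$;
\item at each event, perform the swap or leader replacement, report its position, and recompute the $O(1)$ affected adjacent certificates.
\end{itemize}
A leader replacement at position $j$ does not move the color: the new leader takes the old one's rank at the instant of change, by continuity of the order at $\Gamma$. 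So it only invalidates the certificates with its two neighbors.

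\textbf{Counting events.} The cost is driven by the total number of events, namely all switches of $\Lambda(\overline{\Pi}(S))$ at positions $1$ to $m$. The bichromatic ones are bounded via Theorem~\ref{thm:Gk} applied to the reversed sequence. With $k=m-1$, the upper bound $G_{m-1}\leq (m-1)(2n-m)$ shows that the total number of bichromatic switches is at most $\sum_{j\le m-1} g_{2,j}\le G_{m-1}=O(m(n-m+1)+m^2)$. For tighter accounting we can use the per-element bound of Lemma~\ref{lem:bar_sub-goal}: each non-initial-leader element contributes at most $2(m-1)$ bichromatic switches and each initial leader at most $2(a-1)$. This gives $O(m(n-m)+m^2)=O(m(n-m+1)+m^2)$ bichromatic events. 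Monochromatic events number $O(n)$.

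\textbf{Main obstacle.} The difficulty is reaching the stated $O(m^2(n-m+1))$ instead of paying a logarithm per event. With a priority queue the bound becomes $O((m(n-m+1)+m^2)\log m+n\log n)$, which is within the target whenever $m\geq 2$, since $m^2(n-m+1)\geq m(n-m+1)\log m$ and $m^2\log m\le m^2(n-m+1)\cdot O(\log m/1)$. The remaining care is the $m^2$ term when $n-m+1$ is small. Here I would replace the heap by a direct sweep: at each step, scan the $O(m)$ adjacent certificates to find the next event. This costs $O(m)$ per event, giving $O(m\cdot m(n-m+1))=O(m^2(n-m+1))$ plus $O(n\log n)$ preprocessing. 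It matches the lemma exactly and avoids arguing about $\log m$ versus $m$. I also need that each event is correctly detected, which relies on bisectors crossing $\Gamma$ exactly twice and transversally (axiom (A3) and the choice of $\Gamma$). With that, the next crossing of a bisector with $\Gamma$ after the current point can be computed in $O(1)$ from the two precomputed endpoints.
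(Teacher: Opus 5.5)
\textbf{Gap in the final time accounting.} Your algorithm is the paper's algorithm. The monochromatic events are the unbounded edges of the per-color farthest diagrams. The paper computes only these unbounded edges, by divide and conquer in $O(|S_a|\log|S_a|)$ time, which is also all you use, so you need not build the whole $\FVD(S_a)$. Each bichromatic event is found by recomputing the $2(m-1)$ intersections of $\Gamma$ with the bisectors of consecutive leaders, which is exactly your $O(m)$-per-event sweep.

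The gap is in your last step. You correctly bound the bichromatic events by $(m-1)(2n-m)=2(m-1)(n-m)+m(m-1)$. You then multiply by $O(m)$ as if there were only $O(m(n-m+1))$ events. The $m(m-1)$ part contributes $\Theta(m^3)$, so your sweep costs $O(m^2(n-m+1)+m^3+n\log n)=O(m^2n+n\log n)$.

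This term is real. For $n=m$, every one of the $m(m-1)$ switches of $\overline{\Pi}(S)$ is a bichromatic switch of $\Lambda(\overline{\Pi}(S))$, so the sweep does $\Theta(m^3)$ work against a claimed $O(m^2)$. The heap variant does not close the gap either. It gives $O(mn\log m+n\log n)$, which implies the claimed bound only when $n-m+1=\Omega(\log m)$; for $n=m$ it gives only $O(m^2\log m)$. The inequality you wrote, $m^2\log m\le m^2(n-m+1)\cdot O(\log m)$, is not the one needed.

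\textbf{The paper's proof has the same hole.} It asserts that $\mCVD^*_{\leq m}(S)$ has $O(m(n-m+1))$ unbounded edges by Lemmas~\ref{lem:sumU} and~\ref{lem:overlay_CVDk_star}. However, Lemma~\ref{lem:sumU} with $k=m-1$ gives only $(m-1)(2n-m)=\Theta(mn)$. For $n=m$ the number of bichromatic switches is exactly $m(m-1)=(m-1)(2n-m)$, so this count is attained.

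What this approach (yours or the paper's) actually proves is $O(m^2n+n\log n)$ with the linear scan, or $O(mn\log m+n\log n)$ with a heap. This coincides with the stated bound whenever $n-m=\Omega(n)$. It is also all that Lemma~\ref{lem:alg_unbounded_edges} needs: there the base case is applied only to leaves with $m_{L,j}<2k$ colors, and $\sum_j m_{L,j}^2 n_{L,j}=O(k^2n)$. To prove the lemma as literally stated for $m$ close to $n$, you would need $O(1)$ amortized time per event, and neither argument provides that. You should either weaken the statement accordingly or supply such a data structure.
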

\begin{proof}
    For each $x \in \Gamma$ that avoids any bisecting curves in $\Bisectors$, let $\pi(x)$ be the permutation of the sites induced by $\spreceq_x$. Let $\bar{\pi}(x)$ be the reverse of the permutation $\pi(x)$, and let $\bar{\lambda}(x)$ be the subsequence of $\bar{\pi}(x)$ that contains the leaders of all colors.
    Traversing $\Gamma$ corresponds to iterating through the permutations in $\overline{\Pi}(S)$; thus, $\bar{\pi}(x)$ and $\bar{\lambda}(x)$ appear in $\overline{\Pi}(S)$ and $\Lambda(\overline{\Pi}(S))$ as $\bar{\pi}_i$ and $\bar{\lambda}_i$, respectively, for some index $0 \leq i \leq N-1$. We assume without loss of generality that we traverse $\Gamma$ clockwise.
    
    First, we compute the monochromatic unbounded edges in $\mCVD^*_{\leq m}(S)$, which correspond to the unbounded edges in $\bigcup_{i \in K}\FVD(S_i)$. For each color $i$, the unbounded edges of $\FVD(S_i)$ can be computed in $O(|S_i| \log |S_i|)$ time using a divide-and-conquer algorithm similarly to the farthest segment Voronoi diagram 
    in \cite{es-flsvd-13}. 
    
    Next, we compute the bichromatic unbounded edges in $\mCVD^*_{\leq m}(S)$ by traversing $\Gamma$. We select an arbitrary starting point $x \in \Gamma$ and compute $\bar{\pi}(x)$ and $\bar{\lambda}(x)$ in $O(n \log n)$ time. 
    Let $x_m$ be the intersection of $\Gamma$ with the next monochromatic edge encountered during the traversal from $x$. For each pair of consecutive sites $p,q$ in $\bar{\lambda}(x)$, we compute the two intersections between $\Bisector{p}{q}$ and $\Gamma$. Among these $2(m-1)$ resulting intersections, let $x_b$ be the first one we encounter traversing $\Gamma$ starting from $x$.
    
    If $x_b$ precedes $x_m$ while traversing $\Gamma$ from $x$, then $x_b$ corresponds to the next switch between the leaders of two colors in $\Lambda(\overline{\Pi}(S))$. Thus, it follows from the proof of Lemma~\ref{lem:U-G} that the unbounded edge associated with $x_b$ is the next bichromatic edge along the traversal. In this case, we record the edge associated with $x_b$ as the next unbounded edge, advance $x$ to a point on $\Gamma$ immediately following $x_b$, and update $\bar{\lambda}(x)$ by swapping $p$ and $q$. Otherwise, if $x_m$ precedes $x_b$, we advance $x$ to a point on $\Gamma$ immediately following $x_m$ and update $\bar{\lambda}(x)$ by replacing the corresponding leader. We repeat this procedure until $\Gamma$ is entirely traversed.
    
    We conclude by analyzing the time complexity. Computing 
    the monochromatic unbounded edges takes $O(n \log n)$ time. The permutations $\bar{\pi}(x)$ and $\bar{\lambda}(x)$ for the starting point are also computed in $O(n \log n)$ time. During the traversal of $\Gamma$, we compute $2(m-1)$ intersections for each unbounded edge in $\mCVD^*_{\leq m}(S)$. Since there are $O(m(n-m+1))$ unbounded edges in $\mCVD^*_{\leq m}(S)$ by Lemmas~\ref{lem:sumU} and~\ref{lem:overlay_CVDk_star}, the claimed time complexity follows.
\end{proof}


\begin{restatable}{lemma}{algmerge} \label{lem:alg_merge}
  Given the unbounded edges in $\mCVD^*_{\leq k}(S_1)$ and $\mCVD^*_{\leq k}(S_2)$, where $S_1 \cup S_2 = S$, the unbounded edges in $\mCVD^*_{\leq k}(S)$ can be computed in $O(k^2 (n-k+1))$ time. 
\end{restatable}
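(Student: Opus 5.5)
We treat the merge as a simultaneous walk along $\Gamma$, in the spirit of merging two sorted lists, over the circular sequences $\Lambda(\overline{\Pi}(S_1))$ and $\Lambda(\overline{\Pi}(S_2))$, truncated to their first $k$ positions. The input gives, for each $S_\ell$, the new unbounded edges of $\mCVD^*_{j}(S_\ell)$ for $j\le k$. By Lemma~\ref{lem:U-G} these are exactly the switches at positions $1,\dots,k$ of the color-leader sequence of $\overline{\Pi}(S_\ell)$, listed in their order along $\Gamma$. From this list we can maintain, at every point $x\in\Gamma$ avoiding the bisectors, the prefix $\bar\lambda^{(\ell)}_{\le k+1}(x)$ of the leader sequence, i.e.\ the $k$ farthest colors of $K_\ell$ together with their farthest sites. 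We also keep the $(k{+}1)$-st element by recording which color enters position $k$ at each switch.

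The key structural fact we need is this: restricted to colors, the prefix of length $k$ of $\bar\lambda(x)$ for $S$ is the first $k$ elements of the merge of the two ordered prefixes $\bar\lambda^{(1)}_{\le k}(x)$ and $\bar\lambda^{(2)}_{\le k}(x)$. The merge is taken with respect to $\mcolpreceq_x$, which is a total order by Lemma~\ref{lem:transitivity}. Moreover, the leader of each color is the same in $S$ as in $S_\ell$, because it depends only on $S_a$. Consequently every switch of $\Lambda(\overline{\Pi}(S))$ at a position $\le k$ is of one of three kinds. It may be a monochromatic switch inherited from some $S_\ell$, at a position at most that in $S_\ell$. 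It may be a bichromatic switch between two colors of the same $K_\ell$, which then is also a switch in $\Lambda(\overline{\Pi}(S_\ell))$ at position $\le k$. Or it may be a bichromatic switch between a color of $K_1$ and a color of $K_2$ that are consecutive in the merged order. Hence all input events are known, and the only new events are cross switches.

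The algorithm starts at some point $x_0\in\Gamma$ and computes the merged prefix $\bar\lambda_{\le k}(x_0)$ by merging the two prefixes, using $O(k)$ comparisons of sites at $x_0$. It then advances along $\Gamma$ processing events in order. For each input event from $S_\ell$ we update that list and recompute positions. For cross events we maintain, for every adjacent pair in the merged prefix whose colors lie in different $K_\ell$ (including the pair at positions $k,k+1$), the next intersection of the bisector of their two leaders with $\Gamma$. All candidate events are kept in a priority structure over the at most $k$ adjacent pairs; the next event is the earliest among the next input events and these candidates. When an event is processed we output the corresponding new unbounded edge, labelled monochromatic or bichromatic with its position. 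We then update the $O(1)$ affected adjacent pairs, or $O(k)$ of them in the worst case, recomputing their bisector crossings.

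The running time is charged to the output. By Lemma~\ref{lem:sumU} and Lemma~\ref{lem:U-G}, and since each unbounded edge of $\mCVD^*_j$ persists through only $O(1)$ orders by Lemma~\ref{lem:consecutive_orders}, the number of switches at positions $\le k$ is $O(k(n-k+1))$ for $S$, and likewise for $S_1$ and $S_2$. If each event costs $O(k)$ time (rescanning the prefix, which avoids the $\log$ factor), the total is $O(k^2(n-k+1))$. We expect the main obstacle to be correctness at position $k$, the truncation boundary. A color of $K_1$ may sit at position $>k$ in $S_1$ yet enter the top $k$ of $S$ only after a cross switch; conversely, events in $S_\ell$ beyond position $k$ are not supplied in the input. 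The first thing to try is to show that the $(k{+}1)$-st element of each part suffices. If the merged prefix uses $i$ colors from $K_1$ and $k-i$ from $K_2$, the next candidate from each side is its $(i{+}1)$-st or $(k{-}i{+}1)$-st color, both at positions $\le k+1$. Any change of that candidate is a switch at position $\le k$ in $S_\ell$ and therefore appears in the input. If this fails, we would instead augment the recursion to return edges of orders $\le k+1$, which does not change the bound asymptotically.
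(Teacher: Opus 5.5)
Your plan follows the paper's proof. You walk along $\Gamma$ once, maintain the top-$k$ leader prefixes of $S_1$ and $S_2$ from the input edges, and merge them under $\spreceq_x$. Your remark that a color's leader is the same in $S_\ell$ as in $S$ is exactly the paper's justification. You then advance to the earliest of the next input event and the next $\Gamma$-crossing of a bisector of adjacent leaders. Each event costs $O(k)$, and the number of events is $O(k(n-k+1))$ by Lemma~\ref{lem:sumU}. The paper also gives the initialization you skip: the prefixes at $x_0$ are found by locating $x_0$ among the unbounded faces of $\mCVD^*_j(S_\ell)$, $j\le k$, within the same time budget.

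The one step that is wrong as written is the boundary handling you flagged. You cannot ``keep'' the $(k{+}1)$-st leader of $S_\ell$. It changes through bichromatic or monochromatic switches at position $k+1$ of $\Lambda(\overline{\Pi}(S_\ell))$, which are new unbounded edges of order $k+1$ and are not in the input. So your claim that every change of the candidate is a switch at position $\le k$ of $S_\ell$ fails exactly when the candidate sits at position $k+1$ of its part ($i=k$ or $i=0$ in your notation).

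Your fallback also does not work as stated. If the merge needs orders $\le k+1$ from the children to output orders $\le k$, then each child must itself output orders $\le k+1$. That requires orders $\le k+2$ from its own children, so the required order grows by one per level of the recursion.

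Neither repair is needed. Let $a\in K_\ell$ be the color at position $k$ of $S$.
\begin{itemize}
\item A switch at position $k$ between $a$ and another color of $K_\ell$ is a switch at position $\le k$ of $S_\ell$, hence in the input.
\item The only invisible event is a cross switch with $b$, the farthest color of the other part $K_{3-\ell}$ outside the top $k$ of $S$. Since $K_\ell$ supplies $a$, the other part supplies at most $k-1$ of the top $k$, so $b$ lies in the known top-$k$ prefix of $S_{3-\ell}$.
\end{itemize}
Maintain the candidate $\Bisector{\lambda(a)}{\lambda(b)}$ unconditionally, whether or not $b$ is currently at position $k+1$. Suppose its next crossing with $\Gamma$ is the earliest event. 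Then the leaders of $a$ and $b$ have not changed, and $a$ is still at position $k$, since all such changes are input or candidate events. The two leaders are adjacent at the crossing, so it is a genuine switch at position $k$. Any reordering behind position $k$ in the meantime, such as $b$ overtaking colors of $K_\ell$, is irrelevant. Hence order-$\le k$ inputs suffice, as the lemma claims.

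Your insistence on the pair at positions $k,k+1$ is warranted. The paper's written proof examines only the $2(k-1)$ crossings of pairs inside the merged prefix, so it glosses over exactly these cross switches at position $k$.
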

\begin{proof}
    Consider the overlay of the diagrams in $\mCVD^*_{\leq k}(S_1)$ and $\mCVD^*_{\leq k}(S_2)$. For each $x \in \Gamma$ that avoids any bisecting curves in $\Bisectors$, let $\pi(x, S)$ be the permutation of $S$ induced by $\spreceq_x$, let $\bar{\pi}(x, S)$ be the reverse of the permutation $\pi(x, S)$, and let $\bar{\lambda}(x, S, k)$ be the subsequence of $\bar{\pi}(x, S)$ containing the leaders of the $k$ farthest colors in $K$ with respect to $\mcolpreceq_x$. We define the permutations for the sets $S_1$ and $S_2$ respectively. Traversing $\Gamma$ corresponds to iterating through the permutations in $\overline{\Pi}(S)$, $\overline{\Pi}(S_1)$, and $\overline{\Pi}(S_2)$. We assume without loss of generality that we traverse $\Gamma$ clockwise.
    
    For a point $x \in \Gamma$, the subsequence $\bar{\lambda}(x, S, k)$ can be computed by merging $\bar{\lambda}(x, S_1, k)$ and $\bar{\lambda}(x, S_2, k)$ according to $\spreceq_x$ and selecting the $k$ farthest sites from $x$. This follows from the fact that the leaders of the $k$ farthest colors in $S$, with respect to $\spreceq_x$, must be among the leaders of the $k$ farthest colors in either $S_1$ or $S_2$.
    
    We compute the unbounded edges in $\mCVD^*_{\leq k}(S)$ by traversing $\Gamma$, as in Lemma~\ref{lem:alg_base_case}. After selecting an arbitrary starting point $x \in \Gamma$, we compute $\bar{\lambda}(x, S_1, k)$ and $\bar{\lambda}(x, S_2, k)$ by identifying the unbounded faces of $\mCVD^*_{j}(S_1)$ and $\mCVD^*_{j}(S_2)$ containing $x$, for all $1 \leq j \leq k$.
    Then, we compute $\bar{\lambda}(x, S, k)$ by merging $\bar{\lambda}(x, S_1, k)$ and $\bar{\lambda}(x, S_2, k)$ according to $\spreceq_x$. Let $x_e$ be the intersection of $\Gamma$ with the next unbounded edge from the overlay of $\mCVD^*_{\leq k}(S_1)$ and $\mCVD^*_{\leq k}(S_2)$ that is encountered during the traversal from $x$. For each pair of consecutive sites $p, q$ in $\bar{\lambda}(x, S, k)$, we compute the two intersections between $\Bisector{p}{q}$ and $\Gamma$. Let $x_b$ be the first of these $2(k-1)$ intersections we encounter along $\Gamma$ starting from $x$.
    
    If $x_b$ precedes $x_e$ while traversing $\Gamma$ from $x$, then $x_b$ corresponds to the next switch between the leaders of two colors in $\Lambda(\overline{\Pi}(S))$. As shown in the proof of Lemma~\ref{lem:U-G}, the unbounded edge associated with $x_b$ is the next bichromatic edge along the traversal. In this case, we record the edge associated with $x_b$ as the next unbounded edge, advance $x$ to a point on $\Gamma$ immediately following $x_b$, and update $\bar{\lambda}(x, S, k)$ by swapping $p$ and $q$.
    Otherwise, if $x_e$ precedes $x_b$, we advance $x$ to a point on $\Gamma$ immediately following $x_e$. We then update the corresponding subsequence (either $\bar{\lambda}(x, S_1, k)$ or $\bar{\lambda}(x, S_2, k)$) based on the type of edge associated with $x_e$: if $x_e$ is monochromatic, we replace the corresponding leader; if $x_e$ is bichromatic, we swap the appropriate pair of sites. After this update, we recompute $\bar{\lambda}(x, S, k)$ by merging the updated $\bar{\lambda}(x, S_1, k)$ and $\bar{\lambda}(x, S_2, k)$. We repeat this procedure until $\Gamma$ is entirely traversed.
    
    We conclude by analyzing the time complexity. For all $1 \leq j \leq k$, scanning the unbounded faces of $\mCVD^*_{j}(S_1)$ and $\mCVD^*_{j}(S_2)$ to locate the starting point $x$ takes $O(j(n-j+1))$ time; thus computing the initial $\bar{\lambda}(x, S_1, k)$ and $\bar{\lambda}(x, S_2, k)$ takes $O(k^2(n-k+1))$ time.
    During the traversal of $\Gamma$, we compute $2(k-1)$ intersections, and potentially merge $\bar{\lambda}(x, S_1, k)$ and $\bar{\lambda}(x, S_2, k)$ in $O(k)$ time, for each unbounded edge in $\mCVD^*_{\leq k}(S_1)$, $\mCVD^*_{\leq k}(S_2)$, and $\mCVD^*_{\leq k}(S)$. Since there are $O(k(n-k+1))$ such unbounded edges by Lemma~\ref{lem:sumU}, the claimed time complexity follows.
\end{proof}

Using the conquer step in Lemma~\ref{lem:alg_base_case} and the divide step in
Lemma~\ref{lem:alg_merge}, we 
obtain the divide-and-conquer algorithm.

\begin{restatable}{lemma}{algunboundededges} \label{lem:alg_unbounded_edges}
    For $1 \leq k \leq m$, the unbounded edges in $\mCVD^*_{\leq k}(S)$ can be computed in $O(k^2 n \log m + n \log n)$ time.
\end{restatable}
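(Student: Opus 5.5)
We set up a recursion on the set of colors and solve it. Define $T(n,m)$ as the time to compute the unbounded edges of $\mCVD^*_{\leq k}(S)$ for $n$ sites and $m$ colors, with $k$ fixed.

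\emph{Base case.} If $k \geq m/2$, we invoke Lemma~\ref{lem:alg_base_case}. It gives all unbounded edges of $\mCVD^*_{\leq m}(S)$, and we keep those of orders $1$ to $k$. The cost is $O(m^2(n-m+1)+n\log n)$. Since $m\leq 2k$ and $n-m+1\leq n$, this is $O(k^2 n + n\log n)$.

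\emph{Recursive case.} If $k<m/2$, we split $K$ into halves $K_1,K_2$ with $|K_1|,|K_2|\leq\lceil m/2\rceil$. The corresponding site sets $S_1,S_2$ have sizes $n_1+n_2=n$. We recurse on each half and merge with Lemma~\ref{lem:alg_merge} in $O(k^2(n-k+1))=O(k^2 n)$ time.

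Two checks are needed. First, $k$ must be a valid order in each subproblem. Since $k<m/2\leq |K_i|$ (up to rounding), we have $k\leq |K_i|$, so $\mCVD^*_{\leq k}(S_i)$ is defined. If rounding makes $k\geq |K_i|/2$, that child simply becomes a base case. Second, the merge lemma needs the bound $O(k(n-k+1))$ on the number of unbounded edges of the subproblem diagrams. This follows from Lemma~\ref{lem:sumU} applied to $S_1$ and $S_2$ with their site counts $n_1,n_2\leq n$.

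\emph{Solving the recursion.} The recursion tree has depth $O(\log(m/k))\leq O(\log m)$, because the color count halves at each level until it drops below $2k$. At any level, the subproblems partition $S$, so the merge costs at that level sum to $\sum_i O(k^2 n_i)=O(k^2 n)$. Over all levels this gives $O(k^2 n\log m)$.

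The leaves are base cases, with site counts $n_\ell$ summing to $n$. Their cost sums to $\sum_\ell O(k^2 n_\ell + n_\ell\log n_\ell)=O(k^2 n + n\log n)$. Adding the two contributions yields $O(k^2 n\log m + n\log n)$.

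There is one subtlety in the leaves. The monochromatic farthest diagrams $\FVD(S_a)$ computed inside the base case must not be recomputed at several levels. This is automatic, because each color appears in exactly one leaf. Alternatively, they can be precomputed once in $O(n\log n)$ time.

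\emph{Main obstacle.} The delicate point is the per-level cost bound in the merge step. The bound $O(k^2(n-k+1))$ must be charged as $O(k^2 n_i)$ per subproblem so that it sums to $O(k^2 n)$ per level. For this we use $n_i-k+1\leq n_i$, which holds because $k\geq 1$. We also need $k\leq |K_i|\leq n_i$ so that Lemma~\ref{lem:sumU} applies to each subproblem.
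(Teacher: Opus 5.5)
Your proof is correct and follows the paper's argument. Both use the same halving of the color set, with the base case $k \geq m/2$ handled by Lemma~\ref{lem:alg_base_case}, recursion depth $O(\log(m/k))$, per-level merge cost $O(k^2 n)$ since the subproblems partition $S$, and leaves with at most $2k$ colors contributing $O(k^2 n + n\log n)$ in total; the only cosmetic difference is that you bound $n_i-k+1\leq n_i$ directly, whereas the paper sums $k^2(n_{i,j}-k+1)$ exactly per level and then discards the negative term.
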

\begin{proof}
    Partition the set of colors $K$ into sets $K_1$ and $K_2$, of roughly equal sizes $|K_1| = \lfloor m/2 \rfloor$ and $|K_2| = \lceil m/2 \rceil$. The set $S$ is partitioned into $S_1$ and $S_2$, of sizes $|S_1| = n_1$ and $|S_2| = n_2$. Using Lemma~\ref{lem:alg_base_case} to compute the base case and Lemma~\ref{lem:alg_merge}
    for the merge step, the time complexity of the divide-and-conquer algorithm is given by the following recurrence.
    
    \[ T(m,n) = \begin{cases} 
        T(\lfloor m/2 \rfloor, n_1) + T(\lceil m/2 \rceil, n_2) + O(k^2 (n-k+1)) & \text{if } k < m/2 \\
        O(m^2 (n-m+1) + n \log n) & \text{otherwise}
    \end{cases} \]
    
    The number of levels in the recursion tree is $L = \log_2(m/k) + O(1)$. Let $n_{i,j}$ denote the number of sites in the $j$-th subproblem at level $i$. 
    For any level $0 \leq i \leq L$, it holds that $\sum_{j=1}^{2^i} n_{i,j} = n$.
    For a constant $c>0$, the merge step at level $i$ is bounded by:
    
    \[ \sum_{j=1}^{2^i} c k^2 (n_{i,j} - k + 1) = c k^2 n - c k^2 (k-1) 2^i \]
    
    Summing this over all levels, the total merge work across the tree is bounded by:
    
    \[ \sum_{i=0}^{L-1} \left( c k^2 n - c k^2 (k-1) 2^i \right) = c k^2 n L - c k^2 (k-1) (2^L - 1) \]
    
    Since $L = \Theta(\log(m/k + 1))$ and the number of leaves is $2^L = \Theta(m/k)$, the negative term is bounded by $\Theta(m k^2)$. Because $k \leq m \leq n$, this term is strictly dominated by $c k^2 n L = O(k^2 n \log m)$; thus the total merge cost is $O(k^2 n \log m)$.
    
    Let $m_{L,j} < 2k$ denote the number of colors in the $j$-th leaf of the recursion tree.
    For a constant $c'>0$, the complexity of the base cases is bounded by:
    
    \[ \sum_{j=1}^{2^L} c' (m_{L,j}^2 (n_{L,j} - m_{L,j} + 1) + n_{L,j} \log n_{L,j}) \]
    
    Because $m_{L,j} < 2k$, we can bound $m_{L,j}^2$ by $O(k^2)$. Summing across all leaves yields $O(k^2 n + n \log n)$. Adding the total merge and base case costs together results in the claimed complexity.
\end{proof}

Using Lemma~\ref{lem:alg_unbounded_edges} as a subroutine to the iterative
algorithm \cite[Theorem~27]{bop-hocvdccsf-25}, we can compute $\CVD^*_k(S)$
for abstract Voronoi diagrams 
without any additional assumption.

\begin{theorem} \label{thm:alg_iterative_cvdk}
  Let $S$ be a set of $n$~colored sites with $m\leq n$~colors.
  Given an admissible system~$\Bisectors$ of bisecting curves for~$S$ and an
  integer~$k$ with~$1\leq k\leq m-1$, we can compute $\CVD^*_1(S), \ldots,
  \CVD^*_k(S)$ and $\mCVD^*_1(S), \ldots, \mCVD^*_k(S)$ in $O(k^2n \log n)$ time.
  $\CVD^*_1(S), \ldots, \CVD^*_k(S)$ can also be computed in expected $O(k^2n+n\log n)$ time.
\end{theorem}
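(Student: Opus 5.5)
The plan is to treat the minimal and maximal hierarchies separately and to use the iterative framework of \cite[Theorem~27]{bop-hocvdccsf-25} as a black box. That framework builds $\CVD^*_j(S)$ from $\CVD^*_{j-1}(S)$ for $j=1,\dots,k$, and builds $\mCVD^*_j(S)$ from $\mCVD^*_{j-1}(S)$ provided the unbounded edges of $\mCVD^*_j(S)$ are also given. Its step from order $j-1$ to order $j$ works face by face. Inside each region $\CVR_{j-1}(H,S)$, the new features of order $j$ form a farthest-type diagram, namely the portion of $\CVD_j$ restricted to that region, computed for the colors adjacent to the region. This portion is constructed by the abstract farthest-diagram algorithm (randomized in expected linear time in the size of the region, or deterministically with an extra logarithmic factor). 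The only operations it uses on sites are bisector and vertex queries, which are primitive operations in the AVD model. So its correctness carries over verbatim once Lemmas~\ref{lem:transitivity}--\ref{lem:vertex_types} are available. Those lemmas give the structural facts the framework needs: regions partition the plane, adjacent regions differ by one color, new/old vertex behavior holds, and the overlay in Lemma~\ref{lem:overlay_CVDk_star} holds. I would therefore first check that each structural property invoked in \cite{bop-hocvdccsf-25} appears in this list, replacing each concrete-distance argument with the corresponding abstract lemma.

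For the running time I would sum the per-step cost. Step $j$ costs time proportional to the size of $\CVD^*_{j-1}(S)$ plus $\CVD^*_j(S)$ (times $\log n$ in the deterministic version). By Theorem~\ref{thm:complexity}, together with the $O(n)$ bound on monochromatic features, this size is $O(j(n-j))$. Summing over $j\le k$ gives $O(k^2 n\log n)$ worst case, and $O(k^2 n)$ expected plus an initial $O(n\log n)$ for $\VD(S)=\CVD^*_1(S)$ up to refinement. This yields the expected bound for the minimal diagrams. The requirement in \cite{bop-hocvdccsf-25} that the unbounded sequences of the minimal and maximal diagrams coincide is used only for the maximal side, so the minimal side needs no extra input.

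For the maximal diagrams, the missing ingredient is the list of unbounded edges of each $\mCVD^*_j(S)$, $j\le k$. I obtain it once, upfront, from Lemma~\ref{lem:alg_unbounded_edges} in $O(k^2 n\log m + n\log n)\subseteq O(k^2n\log n)$ time. I would then feed these edges, sorted along $\Gamma$ by order, into the iterative step, starting from $\mCVD^*_1(S)=\FVD(S)$ refined appropriately, computable in $O(n\log n)$. The main obstacle I expect is justifying that, within the iterative step, the unbounded edges of $\mCVD^*_j(S)$ suffice to seed the construction of each unbounded face of order $j$ from $\mCVD_{j-1}(S)$. Concretely, each unbounded region of $\mCVD_{j-1}$ must be split correctly at infinity before the farthest-type subdiagram is built inside it. I would argue this via Lemma~\ref{lem:U-G}: the switches of $\Lambda(\overline{\Pi}(S))$ at position $j$ are exactly the new unbounded edges, and they identify, for each unbounded face, the color entering order $j$. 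With those identified, bounded faces are handled as in the minimal case. The total cost is again $\sum_{j\le k}O(j(n-j)\log n)=O(k^2n\log n)$ by Theorem~\ref{thm:complexity}.
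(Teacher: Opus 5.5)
Your proposal is correct and takes essentially the same approach as the paper: its proof is to run the iterative algorithm of \cite[Theorem~27]{bop-hocvdccsf-25}, which needs no extra input for the minimal diagrams, and for the maximal diagrams to supply the unbounded edges of $\mCVD^*_{\leq k}(S)$ precomputed by Lemma~\ref{lem:alg_unbounded_edges} in $O(k^2n\log m+n\log n)\subseteq O(k^2 n\log n)$ time. Your per-step cost summation and your explanation, via Lemma~\ref{lem:U-G}, of why these unbounded edges suffice to seed each step go beyond the paper, which takes those points from \cite{bop-hocvdccsf-25} as a black box.
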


Recall that if $n=m$, then $\mCVD_{k}(S) = \VD_{n-k}(S)$. Thus, using Theorem~\ref{thm:alg_iterative_cvdk} we can iteratively compute higher-order Voronoi diagrams starting from the $\FVD(S)$. 

\begin{corollary}
  Let $S$ be a set of $n$~uncolored sites.
  Given an admissible system~$\Bisectors$ of bisecting curves for~$S$ and an
  integer~$k$ with~$1\leq k\leq n-1$, we can compute $\VD_{n-1}(S), \ldots,
  \VD_{k}(S)$ in $O((n-k)^2n \log n)$ time. 
\end{corollary}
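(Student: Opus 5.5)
The corollary is a specialization of Theorem~\ref{thm:alg_iterative_cvdk} to $m=n$, combined with an index shift. The plan is to run the iterative maximal-diagram algorithm up to order $n-k$ and then reinterpret its output as the ordinary diagrams in reverse order.

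First, give every site its own color, so $m=n$ and $S_a=\{p\}$ for each color $a$. Then the maximal color dominance region $\mDominance{a}{b}$ equals $\intr \cl \FVR(p,\{p,q\}) = \Dominance{q}{p}$, because $\FVR(p,\{p,q\}) = \Dominance{q}{p}$ is an open domain bounded by the Jordan curve $\Bisector{p}{q}$. Hence $a \mcolprec_x b$ iff $q \sprec_x p$, so $\mCVR_j(H,S)$ is exactly the region of points whose $j$ farthest sites form $H$. That region equals $\VR_{n-j}(S\setminus H, S)$, and so $\mCVD_j(S) = \VD_{n-j}(S)$, as the paper notes.

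Second, the refined diagrams coincide with the unrefined ones in this setting. Every color class is a singleton, so $\HVD^*(S_H)=\HVD(S_H)$ and there are no monochromatic features. Thus $\mCVD^*_j(S)$ is the superposition of $\mCVD_j(S)$ and $\mCVD_{j-1}(S)$ by Lemma~\ref{lem:overlay_CVDk_star}. From each $\mCVD^*_j(S)$ we recover $\mCVD_j(S)=\VD_{n-j}(S)$ by discarding the edges coming from order $j-1$, i.e.\ the edges interior to the order-$j$ regions. This costs time linear in the size of the refined diagram.

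Third, given $k$ with $1\le k\le n-1$, set $k' := n-k$. The case $k'=n-1$, i.e.\ $k=1$, is at the boundary $k'\le m-1$ and is still covered. Apply Theorem~\ref{thm:alg_iterative_cvdk} with order $k'$ to obtain $\mCVD^*_1(S),\ldots,\mCVD^*_{k'}(S)$ in $O(k'^2 n\log n)=O((n-k)^2 n\log n)$ time. Their total size is $O(k'^2 n)$ by Theorem~\ref{thm:complexity}, so extraction is within budget. This yields $\VD_{n-1}(S),\ldots,\VD_k(S)$ in that order.

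The main point needing care is that the algorithm of Theorem~\ref{thm:alg_iterative_cvdk} for the maximal diagrams does not rely on any assumption that fails when $m=n$. In particular, the divide-and-conquer of Lemma~\ref{lem:alg_unbounded_edges} has $O(k^2n\log m + n\log n)$ cost, which with $m=n$ stays within $O(k'^2 n\log n)$. I expect this to be routine, since all earlier lemmas are stated for arbitrary $m\le n$.
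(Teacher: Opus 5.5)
Your proposal is correct and follows the paper's own argument: set $m=n$, use $\mCVD_j(S)=\VD_{n-j}(S)$, and run Theorem~\ref{thm:alg_iterative_cvdk} up to order $n-k$ to get $O((n-k)^2 n\log n)$ time. You also spell out two points the paper leaves implicit, namely the check $\mDominance{a}{b}=\Dominance{q}{p}$ behind that identity and the linear-time extraction of the unrefined diagrams from the refined ones.
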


\section{Concluding remarks}
\label{sec:conclusion}


We have proved the exact maximum number~$4k(n-k)-2n$
of vertices in the order-$k$ abstract color Voronoi diagrams
$\CVD_k(S)$ and $\mCVD_k(S)$.
The main ingredients of our proof are
the colorful Clarkson--Shor technique~\cite{bop-hocvdccsf-25}
and new  tight bounds on circular sequences of permutations of colored elements.
The notion of permutations of colored elements is of independent interest
as a purely 
combinatorial concept,
which might find more applications in analyzing abstract or concrete
geometric structures
defined by a set of colored objects.

As a concrete case of abstract color Voronoi diagrams,
we showed that the complexity of the order-$k$ polygon Voronoi diagram
for $m$ disjoint simple polygons with $n$ total vertices is
$O(\min\{k(n-k), (m-k)^2n\}$.
Prior to our result, there was no known reasonable upper bound 
except for the case~$k=m-1$, that is, the farthest polygon Voronoi
diagram~\cite{cegghlln-fpvd-11}.
Nonetheless, we do not believe that this bound is tight for
\emph{every} $1\leq k\leq m-1$.
It would be interesting to obtain tight bounds for the 
complexity of the
order-$k$ polygon Voronoi diagram.
More specifically, how many vertices can there be when $k=m/2$?

Our iterative algorithms to 
construct abstract color Voronoi diagrams
of orders~$1$ to~$k$ in $O(k^2 n\log n)$ time
compares to the algorithm of~\cite{bop-hocvdccsf-25},
and even to the first algorithm by Lee~\cite{l-knnvdp-82} that computes
Euclidean Voronoi diagrams.
For constructing the Euclidean order-$k$ diagram, there was a
recent breakthrough,
achieving the optimal time $O(kn + n\log n)$~\cite{ccz-oahovdp:usn-24}.
Is it possible to apply such advanced algorithmic techniques for
faster construction
of order-$k$ color Voronoi diagrams, either abstract or concrete?
For abstract order-$k$ diagrams of non-colored sites, 
it was shown that a randomized incremental construction
can be implemented in $O(k(n-k)\log^2 n + n\log^3 n)$ expected
time~\cite{bkl-erahoavd-19}.

{
    \bibliography{references}
}

\deleted{
The definitions of minimal and maximal color dominance regions are equivalent to
\[
 \Dominance{a}{b} =
 \intr \left( \bigcup_{p \in S_a} \bigcap_{q\in S_b} \cl\Dominance{p}{q}\right)
	\quad\text{and}\quad
 \mDominance{a}{b} =
 \intr \left( \bigcup_{p \in S_a} \bigcap_{q\in S_b}\cl \Dominance{q}{p} \right).
\]
}

\deleted{
        
        \Gk*
        
        To prove the tight lower bound on~$G_k$,
        we charge each monochromatic or bichromatic switch to a unique color~$a\in K$ as follows:
        \begin{itemize} 
            \item Each monochromatic switch, replacing the leader of color~$a\in K$, 
            is charged to the corresponding color~$a$.
            Let $\mu_a(j)$ be the number of monochromatic switches at position~$j$ in~$\Lambda$
            charged to~$a$.
            \item Each bichromatic switch involving two colors~$a,a' \in K$ with~$a< a'$
            is charged to color~$a$ with the smaller index.
            Let $\beta_a(j)$ be the number of bichromatic switches at position~$j$ in~$\Lambda$
            charged to~$a$.
        \end{itemize}
        
        \evanthia{Perhaps $\mu_a(j)$  and $\beta_a(j)$ could be in bold above to be
            visible}
        
        It is obvious that $g_{1,j} = \sum_{a\in K} \mu_a(j)$ and $g_{2,j} = \sum_{a\in K} \beta_a(j)$.
        
        Let $\rho_i(a)$ for $a\in K$ be the position, or the rank, of~$\lambda_i(a)$ 
        in permutation~$\lambda_i$,
        and let $\rho^*(a) := \max_{i} \rho_i(a)$ be the maximum rank of~$a$ over the whole sequence~$\Lambda$.
        Note that for any color~$a\in K$ having a single element,
        we have $\mu_a(j) = 0$ for every~$j$, since there is no change of its leader~$\lambda_i(a)$.
        We observe the following for colors having at least two elements.

}

%
%

\deleted{
    
    Next, we consider the quantity~$\beta_a(j)$ about bichromatic switches.
    %
    Note that $\beta_a(m) = 0$ for any~$a\in K$ and $\beta_m(k) = 0$ for any $k$ by definition.
    From here onward, we assume without loss of generality that $\pi_0 = (1, 2, \ldots, n)$
    and 
    the colors in~$K =\{1, \ldots, m\}$ appear in the order in the initial permutation~$\lambda_0$
    of leaders, that is,
    $\lambda_0 = (\lambda_0(1), \lambda_0(2), \ldots, \lambda_0(m))$,
    so $\rho_0(a) = a$ for $a\in K = \{1, \ldots, m\}$.
}

\deleted{
    Summing the inequality of Lemma~\ref{lem:sub-goal} over all colors~$a\in K$,
    we obtain the claimed lower bound of Theorem~\ref{thm:Gk},
    as $\sum_{a\in K} \mu_a(j) = g_{1,j}$ and $\sum_{a\in K} \beta_a(j) = g_{2,j}$.
    Specifically, we have 
}

\deleted{
    This completes the proof of Theorem~\ref{thm:Gk}.
    \qed
}

\deleted{


We have proved the exact maximum number~$4k(n-k)-2n$
of vertices in the order-$k$ abstract color Voronoi diagrams
$\CVD_k(S)$ and $\mCVD_k(S)$.
The main ingredients of our proof are
the colorful Clarkson--Shor technique~\cite{bop-hocvdccsf-25}
and new  tight bounds on circular sequences of permutations of colored elements.
The notion of permutations of colored elements is of independent interest
as a purely abstract combinatorial concept,
which might find more applications in analyzing abstract or concrete
geometric structures
defined by a set of colored objects.

As a concrete case of abstract color Voronoi diagrams,
we showed that the complexity of the order-$k$ polygon Voronoi diagram
for $m$ disjoint simple polygons with $n$ total vertices is
$O(\min\{k(n-k), (m-k)^2n\}$
(Theorem~\ref{thm:last_orders_complexity} and its corollary).
Prior to our result, there was no known reasonable upper bound 
except for the case~$k=m-1$, that is, the farthest polygon Voronoi
diagram~\cite{cegghlln-fpvd-11}.
Nonetheless, we do not believe that this bound is tight for
\emph{every} $1\leq k\leq m-1$.
It would be interesting to obtain tight bounds for the 
complexity of the
order-$k$ polygon Voronoi diagram.
More specifically, how many vertices can there be when $k=m/2$?

Our iterative algorithms that construct abstract color Voronoi diagrams
of orders~$1$ to~$k$ in $O(k^2 n\log n)$ time
compares to the algorithm for concrete color Voronoi
diagrams~\cite{bop-hocvdccsf-25}
and even to the first algorithm by Lee~\cite{l-knnvdp-82} that computes
Euclidean Voronoi diagrams.
For construction of the Euclidean order-$k$ diagram, there was a
recent breakthrough,
achieving the optimal time $O(kn + n\log n)$~\cite{ccz-oahovdp:usn-24}.
Is it possible to apply such advanced algorithmic techniques for
faster construction
of order-$k$ color Voronoi diagrams, either abstract or concrete?
For abstract order-$k$ diagrams of non-colored sites, 
it was shown that a randomized incremental construction
can be implemented in $O(k(n-k)\log^2 n + n\log^3 n)$ expected
time~\cite{bkl-erahoavd-19}.
}

\deleted{
We conclude with some open problems.
The current upper bound on the complexity of order-k polygon Voronoi diagrams is $O(\min\{ k(n-k), (m-k)^2 n\})$, while the
order-$(m-1)$ diagram (farthest polygon Voronoi diagram ) is of linear
complexity.
What is a tight complexity of order-k polygon Voronoi diagram for large $k$?
Can one improve the running time of
constructing order-$k$ abstract  or concrete color Voronoi diagrams?
The iterative algorithms to construct the order-$k$ abstract color Voronoi
diagrams, and are efficient for only small values of $k$. Can the running time be
improved by computing the order-$k$ color Voronoi diagram directly?

Our notion of permutations of colored elements is of independent
interest as a purely combinatorial concept, which might find more
applications in analyzing abstract or concrete geometric structures
defined by colored objects.
}

\deleted{
The iterative algorithms we presented 
are efficient for small and large $k$, as is the case in some practical
applications. Can the running time be improved by computing the order-$k$ color
Voronoi diagram directly? Is it feasible to extend to segment sites (or AVDs)
the optimal time algorithm for the order-$k$ Voronoi diagrams of point sites by
Chan et al.~\cite{ccz-oahovdp:usn-24}?

Our notion of permutations of colored elements is of independent interest as a
purely combinatorial concept. What other abstract or concrete geometric
structures defined by colored objects can be analyzed using our permutations of
colored elements?
}


\end{document}